\newlist{todolist}{itemize}{2}
\setlist[todolist]{label=$\square$}
\pgfplotsset{compat=1.18}
\definecolor{shadecolor}{named}{lightgray}
\newtheorem{theorem}{Theorem}[section]
\newtheorem{lemma}[theorem]{Lemma}
\newtheorem{fact}[theorem]{Fact}
\newtheorem{definition}[theorem]{Definition}
\newtheorem{example}[theorem]{Example}
\newcommand{\rank}{\mathrm{Rank}}
\newcommand{\share}{\mathrm{Share}}
\newcommand{\ALG}{\textsf{ALG}}
\newcommand{\OPT}{\textsf{OPT}}
\DeclareMathOperator{\E}{\mathbf{E}}
\def\d{\mathrm{d}}
\newcommand{\defeq}{\mathrel{\mathop:}=}
\newcommand{\algoname}{\textsc{Quadratic Ranking}\xspace}
\newcommand{\Real}{\mathbb{R}}
\definecolor{myGreen}{RGB}{100,200,100}
\definecolor{myRed}{RGB}{200,100,100}
\definecolor{myBlue}{RGB}{30,100,200}
\newenvironment{proofof}[1]{{\vspace*{5pt} \noindent\bf Proof of #1:  }}{\hfill\rule{2mm}{2mm}\vspace*{5pt}}
\title{
    Edge-weighted Matching in the Dark
    \thanks{The codes of the computer-aided analyses in this paper are provided at \href{https://github.com/Jiahao566/Edge-weighted-Matching-in-the-Dark}{https://github.com/Jiahao566/Edge-weighted-Matching-in-the-Dark}.
    }
}
\author{
Zhiyi Huang \thanks{The University of Hong Kong. Email: zhiyi@cs.hku.hk.}
\and
Enze Sun \thanks{The University of Hong Kong. Email: sunenze@connect.hku.hk.}
\and
Xiaowei Wu\thanks{University of Macau. Email: xiaoweiwu@um.edu.mo. Xiaowei Wu is funded by the Science and Technology Development Fund (FDCT), Macau SAR (file no. 0147/2024/RIA2, 0014/2022/AFJ, 0085/2022/A, and 001/2024/SKL).}
\and
Jiahao Zhao \thanks{The University of Hong Kong. Email: zjiahao@connect.hku.hk.}
}
\date{July 2025}
\begin{document}

\begin{titlepage}
\thispagestyle{empty}
\maketitle

\begin{abstract}
\thispagestyle{empty}
We present a $0.659$-competitive Quadratic Ranking algorithm for the Oblivious Bipartite Matching problem, a distribution-free version of Query-Commit Matching.
This result breaks the $1-\frac{1}{e}$ barrier, addressing an open question raised by Tang, Wu, and Zhang (JACM 2023).
Moreover, the competitive ratio of this distribution-free algorithm improves the best existing $0.641$ ratio for Query-Commit Matching achieved by the distribution-dependent algorithm of Chen, Huang, Li, and Tang (SODA 2025).

Quadratic Ranking is a novel variant of the classic Ranking algorithm.
We parameterize the algorithm with two functions, and let two key expressions in the definition and analysis of the algorithm be quadratic forms of the two functions.
We show that the quadratic forms are the unique choices that satisfy a set of natural properties.
Further, they allow us to optimize the choice of the two functions using powerful quadratic programming solvers.

\end{abstract}

\end{titlepage}
\section{Introduction}

Motivated by diverse applications, including kidney exchange, gig economy, online advertising, and online dating markets, there has been a long line of research on finding high-value matching under the uncertainty of which edges exist.
Consider a set of objects to be matched, represented by the vertices of a graph.
We know the value of matching each pair of vertices, represented by a non-negative weight, but are unsure if the edge exists.
Such uncertainty arises for various reasons in different applications, such as the compatibility of patient-donor pairs in kidney exchange, the interview results in gig economy, etc.
In these applications, we can resolve the uncertainty about an edge through a query, but we must commit to matching the edge if it exists. 
For instance, a digital labor platform may recommend a freelancer to an employer, but cannot stop the resulting employment.
It is folklore that querying the edges in descending order of the weights yields at least half the total weight of the optimal matching.
Are there better algorithms?

\emph{Query-Commit Matching} is a popular stochastic model of edge existence uncertainty.
In this model, each edge $e$ exists independently with a \emph{known} probability $p_e$.
We measure an online algorithm by its \emph{competitive ratio}---the ratio between the expected weight of its matching and that of the optimal matching, where the expectations are over the randomness of the algorithm and the edge realizations.
Much progress has been made in the Query-Commit Matching model.
For bipartite graphs, \cite{GKS19} introduced a $(1-\frac{1}{e})$-competitive algorithm, breaking the $0.5$ barrier for the first time.
Later, \cite{DF23} gave an improved $0.6335$-competitive algorithm, while \cite{CHLT25} proposed the state-of-the-art $0.641$-competitive algorithm.
For general graphs, the best existing result is the $0.533$-competitive algorithm by \cite{FTWWZ21}.

The stochastic model suffers from two drawbacks in some applications.
First, it may be challenging to estimate the probabilities $p_e$, e.g., as a startup digital labor platform with limited data. 
Further, the assumption of independent edge realization may not be realistic, e.g., a freelancer who fails an interview may be less likely to pass the next one.

Alternatively, the literature has considered the \emph{Oblivious Matching} model in which the online algorithm has no prior knowledge about the underlying graph.
In other words, there are no probabilities $p_e$, and an online algorithm needs to work well against the worst-case graph.
Oblivious Matching is a more robust model---the competitive ratio of any online algorithm in this worst-case model directly applies to the stochastic model with \emph{any edge realization process}, even those with correlation among the edges.
However, this model is also more challenging and has few positive results thus far.
\cite{TWZ23} broke the $0.5$ barrier with their $0.501$-competitive algorithm for general graphs.
Further, they gave a simple $(1-\frac{1}{e})$-competitive algorithm for bipartite graphs, and raised an open question whether there is an algorithm with a competitive ratio strictly better than $1-\frac{1}{e}$.

\subsection{Our Contributions}

This paper answers \cite{TWZ23}'s open question about Oblivious Bipartite Matching affirmatively.
As a corollary, we also substantially improve the state-of-the-art ratio of Query-Commit Matching.

\begin{tcolorbox}[colback = gray!15, colframe = gray!15]
{\bf Main Theorem.~}
There is a polynomial-time $0.659$-competitive online algorithm for Oblivious Bipartite Matching.\\[2ex]
{\bf Corollary.~}
There is a polynomial-time $0.659$-competitive online algorithm for Query-Commit Bipartite Matching, even if the realizations of edges are arbitrarily correlated. 
\end{tcolorbox}

\clearpage

We obtain these results by proposing a new randomized primal-dual algorithm called \algoname.
It is parameterized by a non-increasing function $g$ and a non-decreasing function $h$ that jointly decide the primal and dual assignments as follows.
See \Cref{sec:quadratic-ranking} for the formal definition.

\medskip

\begin{tcolorbox}[title=\algoname]
\begin{itemize}[itemsep=0pt]
        \item For each vertex $v$, draw its \emph{rank} $y_v \in [0,1)$ uniformly and independently.
        \item Query the edges $(u,v)$ in descending order of the \emph{perturbed weights} $g(y_u)\cdot g(y_v)\cdot w_{uv}$.
        \item If $(u,v)$ is matched, let the dual variables for $u$ and $v$, referred to as $u$ and $v$'s gains,\\ be $\alpha_u = h(y_u) \cdot g(y_v) \cdot w_{uv}$ and  $\alpha_v = h(y_v) \cdot g(y_u) \cdot w_{uv}$.
    \end{itemize}
\end{tcolorbox}

\medskip

Randomized primal-dual algorithms have been widely studied with great success in different matching models, such as the classic Online Bipartite Matching \cite{DJK13}, its random-order variant \cite{HTWZ19,JW21,PT25} and edge-weighted variant~\cite{FHTZ22,BC21}, Fully Online Matching \cite{HKTWZZ20,HTWZ20,TZ24}, Oblivious Matching \cite{TWZ23} and Streaming Submodular Matching~\cite{LW21}.
These existing algorithms define an edge's priority (e.g., the perturbed weight of edge-weighted algorithms) to be how much one of its vertices would gain should the edge be matched.
\algoname is conceptually interesting for unbinding the concepts of edge priority and vertices' gains.
They are now only partially coupled by the ranks of the two vertices, via two functions $g$ and $h$.
See \Cref{subsection:existing-algorithm} for a further discussion.

\algoname satisfies a set of natural properties. 
First, edge $(u,v)$'s perturbed weight is symmetric with respect to (w.r.t.) the vertices' ranks $y_u, y_v$ (Symmetry).
Further, from the viewpoint of any vertex, the order by which the algorithm queries its incident edges is consistent with its preference, i.e., the descending order of the gains (Rank-Share Consistency).
Finally, a vertex's preference over its incident edges is independent of its own rank (Preference Consistency).
These properties are axiomatic in the sense that our quadratic forms (in $g$ and $h$) of perturbed weights and vertices' gains are the unique choice satisfying all three properties. 
Moreover, the \algoname algorithm preserves several monotonic properties of the original Ranking algorithm by \cite{KVV90}, e.g., adding a vertex to one side of the bipartite graph weakly increases the gains of all vertices on the other side.
We present these properties of \algoname in
\Cref{subsection:notations,subsection:monotonicities,subsection:dual-variables}, and defer the proof of the axiomatic claim to \Cref{subsection:aximatic}.

The final step in designing \algoname is choosing functions $g$ and $h$.
We present in \Cref{section:maximal-choice-of-g-h} a family of ``maximal'' choices of $g$ and $h$, such that $\big( g(y), h(y) \big)$ are the points on an arc of the unit circle in the first quadrant, and the tangent line at the left endpoint of the arc.
Building on this, we design a pair of closed-form functions $g$ and $h$ and break the $1-\frac{1}{e}$ barrier with an analytical analysis.
Moreover, we optimize the choice of $g$ and $h$ by considering step functions and using solvers.
Thanks to the quadratic forms of perturbed weights and vertices' gains, we can use the powerful solvers for quadratic programming, even though the problem is NP-hard in general.
The numerical results confirm the optimality of our ``maximal'' families of $g$ and $h$.
We leave it for future research to prove this optimality, and even to derive the closed-forms of the optimal $g$ and $h$.
We present in \Cref{section:computer-aided-competitive-analysis} the computer-optimized $g$ and $h$ for the proof of our Main Theorem, and defer the analytics argument and its worse ratio to \Cref{app:analytical-competitive-analysis}.

Finally, we complement the competitive ratio of \algoname with two hardness results that upper bound the competitive ratios of any online algorithms. 
We show that no algorithm has a competitive ratio better than $0.7961$ for the (unweighted) Oblivious Bipartite Matching, improving the best previous bound of $0.898$ \cite{CTT12}.
We also consider general graphs and show an improved upper bound of $0.7583$, improving the best previous bound of $0.792$ \cite{GT12}.

\begin{tcolorbox}[colback = gray!15, colframe = gray!15]
{\bf Hardness Results.~}
For the (unweighted) oblivious matching problem, no algorithm can achieve a competitive ratio strictly larger than $0.7961$ for bipartite graphs and $0.7583$ for general graphs. 
\end{tcolorbox}

\begin{table}[ht]
\centering
\caption{Summary of prior works and our results (in bold) for Oblivious Matching}
\medskip
\renewcommand{\arraystretch}{1.2}
\begin{tabular}{c@{\hskip 0.25in}c@{\hskip 0.25in}c}
\toprule
& Lower Bound & Upper Bound \\ 
\midrule
Bipartite Graphs & $1 - 1/e$ \cite{TWZ23} $\to$ $\mathbf{0.659}$ & $0.898$ \cite{CTT12} $\to$ $\mathbf{0.7961}$ \\ 
General Graphs & $0.501$ and $0.531$ (unweighted) \cite{TWZ23} & $0.792$ \cite{GT12} $\to$ $\mathbf{0.7583}$ \\
\bottomrule
\end{tabular}
\end{table}

\subsection{Further Related Works}

\paragraph{Unweighted and Vertex-Weighted Matching.}
While this paper focuses on the most general edge-weighted matching problem, the literature has also considered the unweighted and vertex-weighted versions of Oblivious Matching and Query-Commit Matching.
It is well known that the unweighted and vertex-weighted Ranking algorithms and their competitive ratios in the random-order model of Online Bipartite Matching~\cite{KMT11,MY11,HTWZ19,JW21,PT25} directly apply to Oblivious Bipartite Matching and Query-Commit Matching; see Section~\ref{subsection:existing-algorithm} for a discussion. 
The best existing competitive ratios are $0.696$ for unweighted bipartite matching \cite{MY11} using a factor revealing LP approach, and $0.686$ for vertex-weighted (on only one side) bipartite matching \cite{PT25} with a randomized primal-dual approach.
\cite{CHS24} gave a $0.705$-competitive Query-Commit Matching algorithm for unweighted and vertex-weighted bipartite graphs, as a by-product of their stochastic online correlated selection technique originally designed for the Online Stochastic Matching problem.
For general graphs, \cite{CCWZ18} gave a $0.523$-competitive algorithm for unweighted matching.
\cite{CCW18} improved the ratio to $0.526$ and proposed a $0.501$-competitive algorithm for vertex-weighted matching.
These two results are based on the factor-revealing LP method.

\paragraph{Query-Commit with Patience Constraints.}
The original query-commit model by~\cite{CIKMR09} considers unweighted matching and involves \emph{patience constraints} that limit the maximum number of queries to each vertex's incident edges.
The state-of-the-art competitive ratios are $0.5$ for unweighted graphs \cite{A11}, $0.395$ for weighted bipartite graphs \cite{PRSW22}, and $0.269$ for weighted general graphs \cite{ABGSSX20}.

\paragraph{Randomized Greedy Matching.}
There is a literature started by \cite{DF91} on understanding randomized greedy algorithms for the (offline) maximum matching problem.
Although exact algorithms exist, researchers still hope to understand the randomized greedy algorithms for their simplicity and practical performance. 
\cite{ADFS95,PS12} showed that the \textit{Modified Randomized Greedy} algorithm guarantees a $(0.5 + \varepsilon)$-approximation for a small constant $\varepsilon$. 
\cite{TWZ23} proposed a $0.531$-approximation Random Decision Order algorithm.
Our algorithm can be seen as a randomized greedy algorithm for edge-weighted bipartite matching.

\section{Preliminary}

\paragraph{Model.}
Consider a weighted bipartite graph $G(L, R, E, w)$ where $L$ and $R$ are the sets of vertices on the left and the right, $E \subseteq L \times R$ is the set of edges, and $w = (w_{uv})_{u \in L, v \in R}$ specifies the non-negative weight of each edge (when it exists). 
Let $V = L \cup R$ denote the set of all vertices. 
The set of vertices and the edge weights are given to the algorithm at the beginning. 
The set of edges, however, is initially unknown to the algorithm. 
The algorithm iteratively and adaptively queries the existence of the edges in discrete time steps, to find a bipartite matching $M$ whose total weight is as large as possible. 
In each step, the algorithm chooses a pair of unmatched vertices $u \in L$ and $v \in R$, based on all information revealed so far, and queries whether edge $(u, v)$ exists.
We will slightly abuse notation and say that the algorithm queries edge $(u, v)$, even though edge $(u, v)$ may not exist.
If the queried edge exists, the algorithm must include it in the matching.

This model is known as \emph{Oblivious Bipartite Matching} and extends naturally to general graphs. 
We give a hardness result for the non-bipartite case in \Cref{ssec:hardness_general}, and leave for future research whether the approach of this paper could yield improved algorithms for non-bipartite graphs.

Given any instance and any online algorithm, we let $\OPT$ denote the total weight of the offline optimal matching, and let $\ALG$ denote the total weight of the algorithm's matching.
An algorithm is $F$-competitive if we have $\E [ \ALG ] \ge F \cdot \OPT$ for any instance.

\paragraph{Non-adaptive Algorithm.}
An algorithm is \emph{non-adaptive} if it chooses an order of the possible edges $L \times R$ at the beginning, and then queries them one at a time whenever both vertices are still unmatched.
The algorithm in this paper is non-adaptive, while our hardness results apply to adaptive algorithms.

\paragraph{Randomized Primal-Dual.}
The analysis in this paper builds on the randomized primal-dual technique by \cite{DJK13}. 
Consider the standard linear program relaxation of edge-weighted bipartite matching and its dual program.
\begin{align*}
	\max \quad & \textstyle \sum_{(u,v)\in E} ~ w_{uv}\cdot x_{uv} && \qquad\qquad & \min \quad & \textstyle\sum_{v \in V} \alpha_v \\
	\text{s.t.} \quad & \textstyle \sum_{v:(u,v)\in E} x_{uv} \leq 1 && \forall u\in L & \text{s.t.} \quad & \alpha_u + \alpha_v \geq w_{uv} && \forall (u,v)\in E \\
	& \textstyle \sum_{u:(u,v)\in E} x_{uv} \leq 1 && \forall v\in R & & \alpha_v \geq 0 && \forall v \in V \\
	& x_{uv} \geq 0 && \forall (u,v)\in E %
\end{align*}

A primal-dual algorithm constructs a matching online, letting the primal variables $x_{uv}$ be the indicator of including edge $(u, v)$ in the matching, and at the same time maintains a set of non-negative dual variables. 
Specifically, whenever our algorithm matches a pair of vertices $u$ and $v$, it will split the increase of the primal objective, i.e., $w_{uv}$, between dual variables $\alpha_u$ and $\alpha_v$. 
Our algorithm is randomized, and thus, the above primal and dual assignments are random variables. 

\begin{lemma} \label{lemma:primal_dual}
    A primal-dual algorithm's competitive ratio is at least $F$ if it satisfies
    \begin{itemize}
        \item $\sum_{(u, v)\in E}w_{uv}\cdot x_{uv} = \sum_{v \in V} \alpha_v$; and
        \item $\E[\alpha_u + \alpha_v]\ge F\cdot w_{uv}$ for all edges $(u, v)$, over the randomness of the algorithm.
    \end{itemize}
\end{lemma}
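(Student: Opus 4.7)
The plan is to apply weak LP duality in expectation. The first condition is a per-sample equality between the primal objective and the sum of dual variables produced by a single run of the algorithm, so taking expectations gives $\E[\alg] = \E\bigl[\sum_{(u,v) \in E} w_{uv} x_{uv}\bigr] = \sum_{v \in V} \E[\alpha_v]$.

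Next I would use the second condition to show that the rescaled expected duals $\bar\alpha_v \defeq \E[\alpha_v]/F$ form a feasible solution to the dual LP. Feasibility of $\bar\alpha$ requires two things: for every edge $(u,v) \in E$, $\bar\alpha_u + \bar\alpha_v \ge w_{uv}$, which is exactly the second hypothesis after dividing by $F$ and using linearity of expectation; and $\bar\alpha_v \ge 0$, which follows because the algorithm is assumed to maintain non-negative dual variables (as stipulated in the setup of the randomized primal-dual framework preceding the lemma). Then, by weak LP duality, the objective value of any feasible dual solution is at least the optimal primal LP value; and since the indicator vector of the offline optimal matching is a feasible primal solution, the primal LP optimum is at least $\opt$. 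Hence $\sum_{v \in V} \bar\alpha_v \ge \opt$, which rearranges to $\sum_{v \in V} \E[\alpha_v] \ge F \cdot \opt$.

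Combining the two displays gives $\E[\alg] \ge F \cdot \opt$, which is the definition of $F$-competitiveness. Honestly, there is no substantive obstacle here; the argument is the now-standard randomized primal-dual reduction of \cite{DJK13}, and the lemma merely isolates the two sufficient conditions so that subsequent sections can verify them for \algoname. The only point worth being careful about is to take expectations first and verify dual feasibility in expectation, rather than sample-wise — the realized duals produced by the algorithm need not be dual-feasible on any single sample path, and indeed will generally not be, which is precisely why the probabilistic relaxation in the second hypothesis is needed.
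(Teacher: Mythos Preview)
Your proposal is correct and matches the paper's proof essentially line for line: take expectations in the first condition, use the second condition to show $\E[\alpha_v]/F$ is dual feasible, and apply weak duality to conclude $\E[\ALG] \ge F \cdot \OPT$. The only extra content you add is the explicit check of non-negativity and the remark about sample-wise infeasibility, both of which the paper leaves implicit.
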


\begin{proof}
    By the first property, we have $\E[\ALG] = \E[\sum_{(u, v)\in E}w_{uv}\cdot x_{uv}] = \E[\sum_{v\in V}\alpha_v]$.
    By the second property,~$\frac{1}{F} \E[ \alpha_v]$ for $v \in V$ is feasible for the dual.
    Hence, we have $\sum_{v\in V} \frac{1}{F} \E[\alpha_v] \ge \OPT$ by weak duality.
    Combining the two parts, we conclude that $\E[\ALG] \ge F \cdot \OPT$.
\end{proof}

\section{Quadratic Ranking}
\label{sec:quadratic-ranking}

The algorithm is parameterized by two functions $g, h : [0, 1] \to \Real_{\ge 0}$ satisfying that
\begin{itemize}
    \item $g$ is non-increasing and right-continuous, and satisfies $g(y) > 0$ for $y \in [0, 1)$ and $g(1) = 0$;
    \item $h$ is non-decreasing and right-continuous, and satisfies $h(y) > 0$ for $y \in [0, 1]$; and
    \item For any $x, y \in [0, 1]$, $h(x)\cdot g(y) + h(y)\cdot g(x) \leq 1$.
\end{itemize}

\paragraph{Primal Matching.}
The primal matching algorithm belongs to the family of \emph{perturbed greedy} algorithms.
In the beginning, draw a rank $y_v\in [0,1)$ uniformly at random and independently for every vertex $v \in V$.
Let the \emph{perturbed weight} of pair $(u,v)\in L\times R$ be
\begin{equation}
    \label{eqn:perturbed-weight}
    \hat{w}(u,v) ~\defeq~ g(y_u)\cdot g(y_v)\cdot w_{uv}.
\end{equation}
Then, query the edges in descending order of their perturbed weights.

By adding positive infinitesimal noises to the edge-weights, we may assume without loss of generality that the edge-weights are distinct, and so are the perturbed weights.
The perturbed weights without ties give a ranking of the edges---an edge $(u, v)$ \emph{ranks higher} than another edge $(u', v')$ if $\hat{w}(u, v) > \hat{w}(u', v')$.

\paragraph{Dual Gain Sharing.}

Whenever the algorithm matches an edge $(u,v)$ in the primal, it sets the dual variables of the two endpoints such that:
\begin{equation}
\label{eqn:gain-sharing}
\begin{aligned}
    \alpha_u & ~\ge~ h(y_u) \cdot g(y_v)\cdot w_{uv} ~, \\
    \alpha_v & ~\ge~ h(y_v) \cdot g(y_u) \cdot w_{uv} ~, \\
    \alpha_u + \alpha_v & ~=~ w_{uv} ~.
\end{aligned}
\end{equation}

This is feasible because of the constraint $h(x)\cdot g(y) + h(y)\cdot g(x) \leq 1$.
We will refer to the right-hand-side of the above inequalities as the \emph{guaranteed gains} of $u$ and $v$.

\bigskip

We name the algorithm \algoname because both the perturbed weights and the guaranteed gains are quadratic in the defining parameters, i.e., functions $g$ and $h$.
The quadratic forms lead to several structural properties to be shown in the rest of this section, and allow us to harness powerful solvers for quadratically constrained quadratic programs (QCQP) to optimize these parameters in the next section.

\bigskip

We conclude the subsection with a brief discussion about the assumed properties of functions $g$ and $h$.
As demonstrated above, $h(x)\cdot g(y) + h(y)\cdot g(x) \leq 1$ is driven by the gain-sharing rule.

Next, we explain how the monotonicities of $g$ and $h$ correspond to natural notions of edge priorities.
We follow the convention of the literature that smaller ranks mean higher priorities.%
\footnote{This convention may be inherited from the original Ranking algorithm by \cite{KVV90}, which considered discrete ranks with rank $1$ being the highest.}
By letting $g$ be non-increasing, a smaller rank $y_v$ implies a larger $g(y_v)$, and thus, larger perturbed weights for the edges incident to $v$.
From an economic viewpoint, a vertex $v$ needs to offer a larger proportion of the incident edge-weights to the neighbors to justify a higher priority.
The monotonicity of the functions is consistent with this intuition.
With $g$ being non-increasing and $h$ being non-decreasing, a smaller $y_v$ implies a smaller proportion $h(y_v) \cdot g(y_u)$ for $v$, and a larger proportion $h(y_u) \cdot g(y_v)$ for its neighbor $u$.

The assumption that $g, h$ are positive for any realizable rank $y \in [0, 1)$ is for the convenience of our arguments, as it prevents zero perturbed weights and guaranteed gains.
It can be removed with appropriate treatments of the boundary cases with zeros.

Finally, we let $g$ be right-continuous and have boundary condition $g(1) = 0$ in order to develop properties for the strictly decreasing $g$ in our analytical argument (\Cref{app:analytical-competitive-analysis}) and the step-function $g$ in the computer-aided analysis (\Cref{section:computer-aided-competitive-analysis}) under a unified framework.

\subsection{Terminologies and Notations}
\label{subsection:notations}

Our primal-dual assignments satisfy the first condition of \Cref{lemma:primal_dual} because $\alpha_u + \alpha_v = w_{uv}$ whenever the algorithm matches an edge $(u, v)$.
The focal point of our analysis is then to prove the second condition of \Cref{lemma:primal_dual}.
Following the existing randomized primal-dual analysis for matching, we will consider any edge $(u, v) \in E$ and analyze the expectation of $\alpha_u + \alpha_v$ \emph{over the randomness of $y_u$ and $y_v$, conditioned on any realization of the other vertices' ranks}.
We define the following notation to denote the matching for different values of $y_u$ and $y_v$.

\begin{definition}%
    \label{definition:matching}
    Let $M(y_u, y_v)$ denote the matching given by the algorithm when the ranks of $u$ and $v$ are $y_u$ and $y_v$.
    Define an artificial rank $\bot$ to indicate having the vertex removed, e.g., $M(\bot, y_v)$ is the matching when $u$ is removed and $v$'s rank is $y_v$.
\end{definition}

Next, we define the preference a vertex $v \in V$ has over its incident edges, which naturally yields a preference over different matchings.
There are two natural candidate preferences---larger perturbed weights and larger guaranteed gains.
These two preferences are the same thanks to our quadratic forms!
For example, if $v$ is on the right and $u, u'$ are two neighbors on the left, we have
\[
    \underbrace{\vphantom{\Big|} \hat{w}(u, v) \ge \hat{w}(u', v)}_{\text{pertrubted weights}}
    \quad\Leftrightarrow\quad
    g(y_u) \cdot w_{uv} \ge g(y_{u'}) \cdot w_{u'v} 
    \quad\Leftrightarrow\quad
    \underbrace{\vphantom{\Big|} g(y_u) \cdot h(y_v) \cdot w_{uv} \ge g(y_{u'}) \cdot h(y_v) \cdot w_{u'v}}_{\text{guaranteed gains}}
    ~.
\]

Since both preferences are identical to the descending order of $g(y_u) \cdot w_{uv}$, they are independent of $v$'s rank $y_v$.
We stress that the consistency of the two preferences and their independence in $v$'s rank are consequences of our quadratic perturbed weights and guaranteed gains.

\begin{definition}%
    \label{definition:better-matching}
    The preferences of $v \in R$ (and $u \in L$ by symmetry) are as follows.
    \begin{itemize}
        \item It prefers edge $(u, v)$ over edge $(u', v)$ if $g(y_u) \cdot w_{uv} \ge g(y_{u'}) \cdot w_{u'v}$.
        \item It prefers $M(y_u, y_v)$ over $M(y'_u, y'_v)$ if it is not matched in the latter, or it is matched in both but prefers the edge it matches to in the former in the above sense.
        \item It prefers edge $(u, v)$ (with ranks $y_u, y_v$) over $M(y'_u, y'_v)$ if it prefers the former over the edge it matches to (if any) in $M(y'_u, y'_v)$.
    \end{itemize}
\end{definition}

For example, we can use the above terminology to concisely state the next lemma, which asserts that $u$ and $v$ prefer having each other in the graph.

\begin{lemma}\label{lemma:add-neighbor}
    For any ranks $y_u$ and $y_v$, vertex $u$ prefers $M(y_u, y_v)$ over $M(y_u, \bot)$.
    Symmetrically, vertex $v$ prefers $M(y_u, y_v)$ over $M(\bot, y_v)$.
\end{lemma}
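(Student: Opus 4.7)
The plan is to couple the two executions of \algoname so that they process edges in a shared perturbed-weight order, and then prove by induction a stronger invariant from which the lemma follows. Let World~1 denote the execution on $G$ with $v$ removed (producing $M(y_u, \bot)$) and World~2 the execution on $G$ (producing $M(y_u, y_v)$). The two worlds agree on the relative order of every edge not incident to $v$; the $v$-incident edges are processed only in World~2, at their positions in the common order.

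The invariant to be maintained throughout the coupled execution is: for every $a \in L$, if $a$ is matched in World~1 to some $p_1$, then $a$ is also matched in World~2 to some $p_2$ that $a$ weakly prefers in the sense of \Cref{definition:better-matching}. Once this invariant is established at the terminal step, setting $a = u$ yields the first claim of the lemma. The second claim follows from the symmetric invariant on side $R$, obtained by interchanging $L$ and $R$ and the roles of $u$ and $v$.

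I will induct on the step index, with the base case being trivial. For the inductive step I split on whether the processed edge $e$ is incident to $v$. If $e = (a, v)$ with $a \in L$, only World~2 is affected; any potential violation would require $a$ to be free in World~2 yet already matched in World~1 just before, which the prior invariant directly forbids. If $e = (a, b)$ with $b \in R \setminus \{v\}$, I enumerate the four greedy outcomes. Matched in both or in neither is trivial. If $(a,b)$ is matched in World~1 only with $a$ taken in World~2, the World~2-matching edge of $a$ has strictly higher perturbed weight than $(a, b)$, so $a$ prefers its World~2 partner to $b$. If $(a,b)$ is matched in World~1 only with $a$ free in World~2 but $b$ taken, I backward-trace to the earlier edge $(a', b)$ that matched $b$ in World~2: its failure in World~1, where $b$ was still free at $(a, b)$'s time, forces $a'$ to have been matched in World~1 while free in World~2 at that earlier moment, contradicting the prior invariant. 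The remaining case ``matched in World~2 only with $a$ taken in World~1'' is ruled out by a one-line application of the prior invariant to $a$ at the moment its World~1 partner was assigned.

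The main obstacle is the backward-tracing step, which exploits the fact that the two worlds share their processing order outside $v$'s incident edges, so that any would-be current-step violation is witnessed by a prior-step violation at the point where the two worlds first diverged. Once this is in place the four-case analysis closes the induction, and the symmetric statement for $v$ follows by repeating the entire argument with the sides of the bipartite graph swapped.
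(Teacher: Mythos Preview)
Your argument is correct, and the route differs from the paper's. One small completeness point: in the case ``matched in World~2 only,'' you handle ``$a$ taken in World~1'' but omit the subcase ``$a$ free in World~1, $b$ taken in World~1.'' That subcase is harmless---$a$ remains free in World~1 after the step, so your invariant's implication is vacuous for $a$---but it should be stated.

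The paper maintains a weaker, two-sided set invariant rather than your one-sided preference invariant: at every step, the full-graph run (your World~2) has (i) a superset of the removed run's matched $L$-vertices and (ii) a superset of the removed run's unmatched $R$-vertices. With (ii) in hand, your backward-tracing step is unnecessary: if $(a,b)$ is matched in World~1 only, then $b$ is free in World~1 just before, and (ii) forces $b$ free in World~2 as well, so $a$ must be the taken endpoint in World~2. The paper then derives the preference conclusion for $u$ in a separate closing paragraph, applying (ii) once more at the step where $u$'s World~1 partner is assigned. Your approach bakes the preference claim into the inductive invariant, which yields the conclusion directly but pays for it with the backward trace---effectively re-deriving (ii) on demand from your $L$-side invariant. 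The paper's version keeps the inductive step shorter at the cost of the epilogue.
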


The lemma follows by a standard alternating-path argument (see, e.g., \cite{CCWZ18,CCW18,TWZ23} for some examples in Oblivious Matching).
In fact, it does not rely on our quadratic form of perturbed weights, and remains true if we query the edges greedily by an arbitrary order, for the corresponding definition of vertex preference.
We defer this standard argument to \Cref{app:proof4lemma:add-neighbor}.

Next, we take a closer look at the above matchings $M(y_u, y_v)$, $M(y_u, \bot)$, and $M(\bot, y_v)$ to define the concept of marginal ranks.

\begin{definition}[Marginal Rank]
    \label{definition:marginal-rank-informal}
    For any rank $y_v$ of vertex $v$, the \emph{marginal rank} $\theta(y_v)$ of vertex $u$ is the smallest threshold such that $v$ prefers edge $M(\bot, y_v)$ over $(u, v)$ when $y_u \geq \theta(y_v)$.
    For any rank $y_u$ of $u$, define the marginal rank $\beta(y_u)$ of $v$ symmetrically.
\end{definition}

Suppose $v$ is matched to $p$ in $M(\bot, y_v)$, with $p$ being a dummy vertex with $w_{pv} = 0$ (and any $y_p \in [0, 1]$) if $v$ is unmatched.
By definition, we have
\begin{equation}
    \label{eqn:marginal-rank}
    \theta(y_v) ~=~~ g^{-1}\left( \frac{g(y_p)\cdot w_{pv}}{w_{uv}} \right)
    ~.
\end{equation}
A similar expression holds for $\beta(y_u)$ by symmetry.
We will artificially extend the marginal ranks with $\theta(1) = \beta(1) = 1$ so that they are well defined on $[0, 1]$.

Here, we use the following notion of inverse functions for right-continuous monotone functions.
\Cref{fig:inverse-function} presents an illustration.

\begin{definition}[Inverse Function]
    \label{definition:inverse-function}
    Let $f: [0, 1]\to \mathbb R_{\ge 0}$ be a right-continuous function.
    Define $\inf\varnothing = 1$ since we consider subsets of $[0, 1]$. 
    If $f$ is non-increasing, its inverse is
    $$
    f^{-1}(y) ~\defeq~ \inf\big\{ x \in [0, 1]: f(x) \le y \big\}
    ~.
    $$
    If $f$ is non-decreasing, its inverse is
    $$
    f^{-1}(y) ~\defeq~ \inf\big\{ x \in [0, 1]: f(x) > y \big\}
    ~.
    $$
    Observe that $f^{-1}$ shares the same monotonicity as $f$, and is also right-continuous.    
\end{definition}

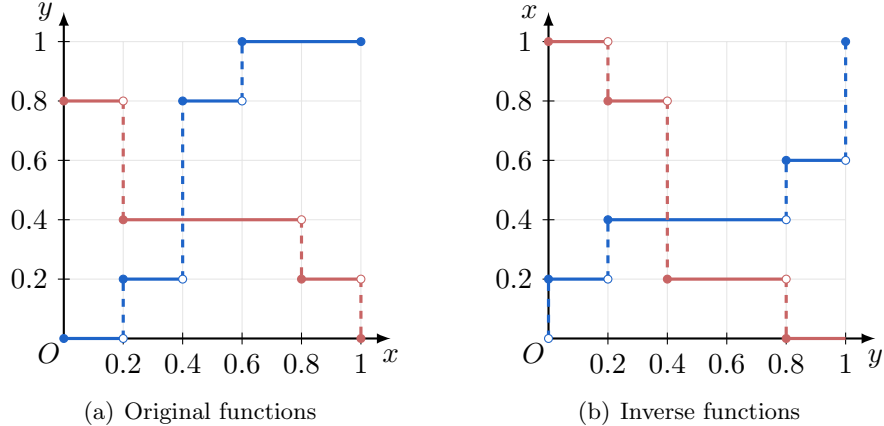
\begin{figure}[!ht]
    \centering
    \subfigure[Original functions]{
        \label{subfig:original-functions}
        \begin{tikzpicture}[scale=0.79]
        \draw[gray!20, thin] (0,0) grid (5,5);
        \draw[thick, -latex] (0,0) -- (5.5,0) node[below] {$x$};
        \draw[thick, -latex] (0,0) -- (0,5.5) node[left] {$y$};
        \node[black] at (-0.25, -0.25) {$O$};
    
        \foreach \x in {1,...,5}
            \draw (\x,0.1) -- (\x,-0.1) node[below] {$\pgfmathparse{\x/5}\pgfmathprintnumber[fixed, precision=1]{\pgfmathresult}$};
        \foreach \y in {1,...,5}
            \draw (0.1,\y) -- (-0.1,\y) node[left] {$\pgfmathparse{\y/5}\pgfmathprintnumber[fixed, precision=1]{\pgfmathresult}$};
    
        \draw[myBlue, very thick] (0,0) -- (1,0);
        \node[draw=myBlue, fill=myBlue, circle, minimum size=3pt, inner sep=0pt] at (0, 0) {};
        \draw[myBlue, very thick, dashed] (1, 0) -- (1,1); 
        \node[draw=myBlue, fill=white, circle, minimum size=3pt, inner sep=0pt] at (1, 0) {};
        \draw[myBlue, very thick] (1,1) -- (2,1); 
        \node[draw=myBlue, fill=myBlue, circle, minimum size=3pt, inner sep=0pt] at (1, 1) {};
        \draw[myBlue, very thick, dashed] (2,1) -- (2, 4); 
        \node[draw=myBlue, fill=white, circle, minimum size=3pt, inner sep=0pt] at (2, 1) {};
        \draw[myBlue, very thick] (2, 4)-- (3, 4); 
        \node[draw=myBlue, fill=myBlue, circle, minimum size=3pt, inner sep=0pt] at (2, 4) {};
        \draw[myBlue, very thick, dashed] (3, 4) -- (3, 5); 
        \node[draw=myBlue, fill=white, circle, minimum size=3pt, inner sep=0pt] at (3, 4) {};
        \draw[myBlue, very thick] (3, 5) -- (5, 5);
        \node[draw=myBlue, fill=myBlue, circle, minimum size=3pt, inner sep=0pt] at (3, 5) {};
        \node[draw=myBlue, fill=myBlue, circle, minimum size=3pt, inner sep=0pt] at (5, 5) {};
        
        \draw[myRed, very thick] (0,4) -- (1, 4) ;
        \node[draw=myRed, fill=myRed, circle, minimum size=3pt, inner sep=0pt] at (0, 4) {};
        \draw[myRed, very thick, dashed] (1, 4) -- (1, 2);
        \node[draw=myRed, fill=white, circle, minimum size=3pt, inner sep=0pt] at (1, 4) {};
        \draw[myRed, very thick] (1, 2)-- (4, 2);
        \node[draw=myRed, fill=myRed, circle, minimum size=3pt, inner sep=0pt] at (1, 2) {};
        \draw[myRed, very thick, dashed] (4, 2) -- (4, 1);
        \node[draw=myRed, fill=white, circle, minimum size=3pt, inner sep=0pt] at (4, 2) {};
        \draw[myRed, very thick] (4, 1) -- (5, 1);
        \node[draw=myRed, fill=myRed, circle, minimum size=3pt, inner sep=0pt] at (4, 1) {};
        \draw[myRed, very thick, dashed] (5, 1) -- (5, 0);
        \node[draw=myRed, fill=white, circle, minimum size=3pt, inner sep=0pt] at (5, 1) {};
        \node[draw=myRed, fill=myRed, circle, minimum size=3pt, inner sep=0pt] at (5, 0) {};

        \end{tikzpicture}
    }
    ~~~~
    \subfigure[Inverse functions]{
        \begin{tikzpicture}[scale=0.79]
        \draw[gray!20, thin] (0,0) grid (5,5);
        \draw[thick, -latex] (0,0) -- (5.5,0) node[below] {$y$};
        \draw[thick, -latex] (0,0) -- (0,5.5) node[left] {$x$};
        \node[black] at (-0.25, -0.25) {$O$};
    
        \foreach \x in {1,...,5}
            \draw (\x,0.1) -- (\x,-0.1) node[below] {$\pgfmathparse{\x/5}\pgfmathprintnumber[fixed, precision=1]{\pgfmathresult}$};
        \foreach \y in {1,...,5}
            \draw (0.1,\y) -- (-0.1,\y) node[left] {$\pgfmathparse{\y/5}\pgfmathprintnumber[fixed, precision=1]{\pgfmathresult}$};
    
        \draw[myBlue, very thick, dashed] (0, 0) -- (0,1); 
        \node[draw=myBlue, fill=white, circle, minimum size=3pt, inner sep=0pt] at (0, 0) {};
        \draw[myBlue, very thick] (0,1) -- (1,1);
        \node[draw=myBlue, fill=myBlue, circle, minimum size=3pt, inner sep=0pt] at (0, 1) {};
        \draw[myBlue, very thick, dashed] (1, 1) -- (1,2); 
        \node[draw=myBlue, fill=white, circle, minimum size=3pt, inner sep=0pt] at (1, 1) {};
        \draw[myBlue, very thick] (1,2) -- (4,2); 
        \node[draw=myBlue, fill=myBlue, circle, minimum size=3pt, inner sep=0pt] at (1, 2) {};
        \draw[myBlue, very thick, dashed] (4,2) -- (4, 3); 
        \node[draw=myBlue, fill=white, circle, minimum size=3pt, inner sep=0pt] at (4, 2) {};
        \draw[myBlue, very thick] (4, 3)-- (5, 3); 
        \node[draw=myBlue, fill=myBlue, circle, minimum size=3pt, inner sep=0pt] at (4, 3) {};
        \draw[myBlue, very thick, dashed] (5,3) -- (5, 5); 
        \node[draw=myBlue, fill=white, circle, minimum size=3pt, inner sep=0pt] at (5, 3) {};
        \node[draw=myBlue, fill=myBlue, circle, minimum size=3pt, inner sep=0pt] at (5, 5) {};
        
        \draw[myRed, very thick] (0,5) -- (1, 5) ;
        \node[draw=myRed, fill=myRed, circle, minimum size=3pt, inner sep=0pt] at (0, 5) {};
        \draw[myRed, very thick, dashed] (1, 5) -- (1, 4);
        \node[draw=myRed, fill=white, circle, minimum size=3pt, inner sep=0pt] at (1, 5) {};
        \draw[myRed, very thick] (1, 4)-- (2, 4);
        \node[draw=myRed, fill=myRed, circle, minimum size=3pt, inner sep=0pt] at (1, 4) {};
        \draw[myRed, very thick, dashed] (2, 4) -- (2, 1);
        \node[draw=myRed, fill=white, circle, minimum size=3pt, inner sep=0pt] at (2, 4) {};
        \draw[myRed, very thick] (2, 1) -- (4, 1);
        \node[draw=myRed, fill=myRed, circle, minimum size=3pt, inner sep=0pt] at (2, 1) {};
        \draw[myRed, very thick, dashed] (4, 1) -- (4, 0);
        \node[draw=myRed, fill=white, circle, minimum size=3pt, inner sep=0pt] at (4, 1) {};
        \draw[myRed, very thick] (4, 0) -- (5, 0);
        \node[draw=myRed, fill=myRed, circle, minimum size=3pt, inner sep=0pt] at (4, 0) {};
        \end{tikzpicture}
    }
    
    \caption{Inverse Function.}
    \label{fig:inverse-function}
\end{figure}

\begin{lemma}\label{lemma:right-continuous-effect}
    Consider a  non-increasing right-continuous function $f : [0, 1] \to \Real_{\ge 0}$ which satisfies $f(1) = 0$, we have $f \big( f^{-1}(y) \big) \le y$ for any $y \in [0, 1]$.
\end{lemma}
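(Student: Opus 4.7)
The plan is to chase the definitions: unpack $f^{-1}(y)$ as an infimum, approach this infimum from the right by a sequence of points lying in the defining set, and then invoke right-continuity of $f$ to pass the inequality $f(x_n) \le y$ to the limit.

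More concretely, let $S_y \defeq \{x \in [0,1] : f(x) \le y\}$, so that $f^{-1}(y) = \inf S_y$. Since $f(1) = 0 \le y$ for every $y \in [0,1]$, we have $1 \in S_y$, hence $S_y$ is non-empty and $x^* \defeq f^{-1}(y)$ is a well-defined element of $[0,1]$. If $x^* = 1$, then $f(x^*) = f(1) = 0 \le y$ and we are done. Otherwise $x^* \in [0,1)$, and by the definition of infimum there exists a sequence $x_n \in S_y$ with $x^* \le x_n$ and $x_n \to x^*$; in particular $x_n \searrow x^*$ from the right. Since each $x_n$ lies in $S_y$, we have $f(x_n) \le y$ for all $n$. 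Right-continuity of $f$ at $x^*$ then yields
\[
    f(x^*) \;=\; \lim_{n \to \infty} f(x_n) \;\le\; y,
\]
which is exactly what we want.

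In terms of where the work lies, this statement is essentially a sanity check for the inverse-function convention introduced in \Cref{definition:inverse-function}, so the ``obstacle'' is really only making sure the approximating sequence is built on the correct side. The asymmetry between the non-increasing and non-decreasing cases in \Cref{definition:inverse-function} (weak inequality $f(x) \le y$ versus strict inequality $f(x) > y$) is precisely what allows the infimum to be approached from the right by points of $S_y$ here; right-continuity of $f$ then preserves the weak inequality in the limit. The boundary hypothesis $f(1) = 0$ is used only to guarantee $S_y \ne \varnothing$ and to handle the corner case $x^* = 1$ where right-continuity does not apply.
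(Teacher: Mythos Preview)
Your proof is correct and follows essentially the same approach as the paper: both case on whether $f^{-1}(y)=1$, and in the nontrivial case use that $S_y$ is non-empty, pick points of $S_y$ approaching the infimum from the right, and invoke right-continuity to pass the inequality to the limit. The paper phrases the last step via $\lim_{x\to f^{-1}(y)^+} f(x)\le \sup_{x\in S_y} f(x)\le y$ instead of an explicit sequence, but the content is identical.
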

\begin{proof}
    If $f^{-1}(y) = 1$, then $f(f^{-1}(y)) = 0 \le y$.
    If $f^{-1}(y) < 1$, then $S = \{x\in[0, 1]: f(x)\le y\}$ is non-empty.
    Since $f$ is non-increasing and right-continuous, we have
    \begin{equation*}
        f \big( f^{-1}(y) \big) = \lim_{x \to f^{-1}(y)\,^+ } f(x) ~\le~ \sup_{x \in S} f(x) ~\le~ y
        ~. \qedhere
    \end{equation*}
\end{proof}

\subsection{Structural Monotonicities}
\label{subsection:monotonicities}

We first show that a vertex $v$ prefers the matching given by the algorithm when its rank $y_v$ is smaller, over the counterpart when its rank is larger.
In other words, a smaller rank leads to a more preferable matching from the vertex's point of view.

\begin{lemma}[Monotonicity of Matching Outcome] \label{lemma:rank_monotonicity}
    For any $y_u \in [0,1) \cup \{\bot\}$ and any $0 \le y_v < y'_v < 1$, vertex $v$ prefers $M(y_u, y_v)$ over $M(y_u, y'_v)$.
    A similar claim holds for $u$ by symmetry.
\end{lemma}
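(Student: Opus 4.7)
The plan is to argue by contradiction via an alternating-path analysis of $M \triangle M'$, where $M := M(y_u, y_v)$ and $M' := M(y_u, y'_v)$, in the spirit of the proof of \Cref{lemma:add-neighbor}.

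First I would record two algebraic observations about the effect of decreasing $v$'s rank from $y'_v$ to $y_v$. Because $g$ is non-increasing, $g(y_v) \ge g(y'_v)$, and hence: (a) the perturbed weights of $v$-incident edges scale by the common factor $g(y_v)/g(y'_v) \ge 1$ while those of non-$v$-incident edges are unchanged, so the relative priority order among $v$-incident edges is the same in the two executions and $v$'s preference over its neighbors (the descending order of $g(y_{u'})\, w_{u'v}$) does not depend on $y_v$; and (b) if a non-$v$-incident edge $e$ has higher priority than a $v$-incident edge $e_v = (u', v)$ when $v$'s rank is $y_v$, then the chain $\hat{w}(e) > g(y_v)\cdot g(y_{u'})\cdot w_{u'v} \ge g(y'_v)\cdot g(y_{u'})\cdot w_{u'v}$ shows the same ordering also holds when $v$'s rank is $y'_v$.

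Assuming for contradiction that $v$ prefers $M'$ to $M$, I would construct an alternating path $e_1, e_2, e_3, \ldots$ in $M \triangle M'$ starting at $e_1 = (v, u_1) \in M'$, with odd-indexed edges in $M'$ and even-indexed edges in $M$. At each odd step, $e_{2j+1} \in M'$ fails to match in the execution producing $M$, so one of its endpoints is blocked by an earlier edge. Observation~(a) rules out the case in which $v$ itself is the blocked endpoint---the blocker would be a $v$-incident edge giving $v$ a match strictly preferred to $u_1$, contradicting the assumption---and observation~(b) together with priority inequalities established at earlier iterations rules out any edge already on the path. Hence the blocker must be the next $M$-edge $e_{2j+2}$, forcing $e_{2j+2}$ to precede $e_{2j+1}$ in the $M$-execution. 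A symmetric analysis at even steps forces the blocker of $e_{2j+2} \in M$ in the $M'$-execution to be the next $M'$-edge $e_{2j+3}$.

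The main obstacle lies in the endgame. Since $M \triangle M'$ has maximum degree $2$, the component containing $v$ is either a simple path or a simple cycle. In the path case, the iteration terminates at a vertex $z$ of degree $1$ in $M \triangle M'$; the next forced blocker at $z$ fails to exist because $z$ is matched in only one of $M, M'$, and chasing this missing blocker back through the chain yields either a contradiction with a previously established priority ordering or, in the very first step, with observation~(a). In the cycle case, the priority inequalities accumulated around the cycle---using observation~(b) to transfer non-$v$-incident edge orderings across the two executions---produce a cyclic sequence of strict inequalities among the cycle edges' perturbed weights, which is impossible. I would carry out this termination analysis carefully, following the style of the alternating-path arguments cited for \Cref{lemma:add-neighbor}.
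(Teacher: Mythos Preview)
Your alternating-path plan is correct in principle, but the paper takes a much shorter route. Instead of analyzing $M\triangle M'$ after the fact, the paper runs the two executions in parallel: merge all edge-copies from both runs into a single list sorted by perturbed weight and process them together. Non-$v$-incident edges have identical weight in both runs, and every $v$-incident edge moves weakly earlier in the $y_v$-run. Let $p$ be $v$'s match in the $y'_v$-run. At the moment $(p,v)$ is processed in the $y_v$-run, both $p$ and $v$ are still free in the $y'_v$-run (since $(p,v)$ is processed there only later). If $v$ is already matched in the $y_v$-run, it was matched by an earlier edge, hence one it prefers to $(p,v)$. If $v$ is still free in the $y_v$-run, then the two matchings are identical so far (the only edges on which the runs differ are $v$-incident, and none of those succeeded), so $p$ is free too and $v$ matches $p$ now. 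Either way the lemma follows in a paragraph, with no alternating path and no endgame.

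Your approach works, but the endgame is looser than you indicate. In the cycle case the inductive inequalities $I_1,\dots,I_{k-1}$, transferred to the $M$-execution via your observations, yield $\hat w_M(e_k) > \hat w_M(e_1)$, which says $v$ prefers its $M$-match $u_{k-1}$ over its $M'$-match $u_1$ and contradicts the assumption directly; the closing step-$k$ analysis by itself does \emph{not} produce a cyclic chain, because $e_k$ is $v$-incident and the transfer there goes the wrong way. In the path case, termination at a degree-$1$ vertex forces the back-vertex to be the blocker, contradicting the immediately preceding inequality. With these points spelled out your proof is complete, but it is substantially longer than the paper's coupling and does not buy extra generality here---the paper already reserves the alternating-path machinery for \Cref{lemma:add-neighbor}, where a simple coupling is not available.
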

\begin{proof}
    With the ranks of vertices other than $v$ fixed (and vertex $u$ removed if $y_u = \bot$), consider two runs of the algorithm with $v$'s rank being $y_v$ and $y'_v$ respectively.
    We will process the edges in the two runs in descending order of their perturbed weights.
    An edge incident to $v$ has a larger perturbed weight in the former run than in the latter.
    Hence, its copy in the former run will be processed earlier.
    An edge not incident to $v$ has the same perturbed weights in both runs. 
    Hence, its two copies will be processed at the same time.

    If $v$ is unmatched in the second run, the lemma holds trivially.
    Otherwise, suppose $v$ is matched to $p$ in the second run.
    Consider the moment when we process $(p, v)$ in the first run, which is earlier than when we process it in the second run.
    In other words, both $p$ and $v$ are still unmatched in the second run at this moment.
    If $v$ is already matched in the first run, the lemma follows.
    Otherwise, $v$ is unmatched in both runs.
    Then, the sets of matched edges must be the same in the two runs since other vertices have the same ranks. 
    In particular, both $p$ and $f$ are still unmatched in the first run, and would be matched to each other now.
    Hence, the lemma still holds in this case.
\end{proof}

As a corollary, we show that the marginal ranks $\theta$ and $\beta$ are monotone.

\begin{lemma}[Monotonicity of Marginal Ranks] \label{lemma:theta_beta_increasing}
    Both $\theta(y)$ and $\beta(y)$ are non-decreasing.
\end{lemma}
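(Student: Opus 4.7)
\medskip

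\noindent\textbf{Proof proposal.} The plan is to reduce the monotonicity of $\theta$ (and of $\beta$ by symmetry) to the monotonicity of the matching outcome from $v$'s viewpoint, already established in \Cref{lemma:rank_monotonicity}, combined with the explicit formula \eqref{eqn:marginal-rank} for the marginal rank. Fix any $0 \le y_v < y'_v < 1$; I want to show $\theta(y_v) \le \theta(y'_v)$.

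First I would apply \Cref{lemma:rank_monotonicity} with $y_u = \bot$ to conclude that $v$ prefers $M(\bot, y_v)$ over $M(\bot, y'_v)$. Let $p$ and $p'$ denote the partners of $v$ in these two matchings, respectively, using the convention from the discussion after \Cref{definition:marginal-rank-informal}: when $v$ is unmatched we take $p$ (or $p'$) to be a dummy vertex of weight $0$. By \Cref{definition:better-matching}, $v$'s preference means either $v$ is matched in $M(\bot, y_v)$ and unmatched in $M(\bot, y'_v)$, or $v$ is matched in both and $g(y_p) \cdot w_{pv} \ge g(y_{p'}) \cdot w_{p'v}$. The dummy convention unifies these cases into the single inequality
\[
    g(y_p)\cdot w_{pv} ~\ge~ g(y_{p'})\cdot w_{p'v}
    ~,
\]
because an unmatched outcome corresponds to $g(y_{p'}) \cdot w_{p'v} = 0$.

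Second, I would plug this inequality into the formula \eqref{eqn:marginal-rank} for $\theta$. Since $g$ is non-increasing and right-continuous, its inverse $g^{-1}$ (as in \Cref{definition:inverse-function}) is also non-increasing. Applying $g^{-1}$ to both sides of the ratio inequality $g(y_p)\cdot w_{pv}/w_{uv} \ge g(y_{p'})\cdot w_{p'v}/w_{uv}$ therefore yields
\[
    \theta(y_v) ~=~ g^{-1}\!\left(\tfrac{g(y_p)\cdot w_{pv}}{w_{uv}}\right) ~\le~ g^{-1}\!\left(\tfrac{g(y_{p'})\cdot w_{p'v}}{w_{uv}}\right) ~=~ \theta(y'_v)
    ~.
\]
The boundary case $y_v = 1$ or $y'_v = 1$ is handled by the artificial extension $\theta(1) = 1$, which is consistent because the right-hand side of the formula equals $g^{-1}(0) = 1$ whenever $v$'s partner is the dummy. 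The argument for $\beta$ is symmetric, swapping the roles of $u$ and $v$.

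The main subtlety I would watch for is the bookkeeping of the dummy-vertex convention so that the preference statement from \Cref{definition:better-matching} (which treats ``matched vs.\ unmatched'' as a separate case) translates cleanly into the single weight inequality above. Everything else is essentially routine: \Cref{lemma:rank_monotonicity} does the combinatorial work, the formula \eqref{eqn:marginal-rank} does the algebraic work, and the monotonicity of $g^{-1}$ from \Cref{definition:inverse-function} finishes the proof.
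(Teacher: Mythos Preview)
Your proposal is correct and follows essentially the same approach as the paper: invoke \Cref{lemma:rank_monotonicity} with $y_u = \bot$ to obtain $g(y_p)\cdot w_{pv} \ge g(y_{p'})\cdot w_{p'v}$, then apply the non-increasing inverse $g^{-1}$ to \eqref{eqn:marginal-rank} and conclude $\theta(y_v) \le \theta(y'_v)$. Your treatment of the dummy-vertex bookkeeping and the boundary case $\theta(1)=1$ is in fact slightly more explicit than the paper's version.
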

\begin{proof}
By symmetry, we only need to prove the lemma for $\theta(y)$. 
The boundary value $\theta(1) = 1$ satisfies monotonicity.
Next, consider any $0 \le y_v < y'_v < 1$.
Let $p$ and $p'$ be the vertices $v$ is matched with in $M(\bot, y_v)$ and $M(\bot, y'_v)$ respectively;
they may be dummies with zero edge-weights if $v$ is unmatched.
By \Cref{lemma:rank_monotonicity}, we have $g(y_p)\cdot w_{p v} \ge g(y_{p'})\cdot w_{p' v}$.
Therefore, we get $\theta(y_v) \ge \theta(y'_v)$ from \Cref{eqn:marginal-rank} and the non-increasingness of $g^{-1}$.
\end{proof}

\subsection{Lower Bounds for the Dual Variables}
\label{subsection:dual-variables}

We first present the following lower bounds based on the marginal ranks using \Cref{lemma:add-neighbor}.

\begin{lemma}%
    \label{lemma:basic-gain}
    For all $y_u, y_v\in [0, 1)$, we have
    \[
        \alpha_u \geq h(y_u)\cdot  g(\beta(y_u))\cdot w_{uv}
        ~,\quad
        \alpha_v \geq h(y_v)\cdot   g(\theta(y_v))\cdot w_{uv}
        ~.
    \]
\end{lemma}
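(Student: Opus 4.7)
By symmetry, it suffices to prove the lower bound for $\alpha_u$; the bound for $\alpha_v$ will follow from an identical argument with the roles of $L$ and $R$ swapped. The plan is to combine three ingredients: the gain-sharing rule \eqref{eqn:gain-sharing}, the add-neighbor monotonicity Lemma~\ref{lemma:add-neighbor}, and the inverse-function identity Lemma~\ref{lemma:right-continuous-effect}.

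First, I would introduce two auxiliary vertices: let $p$ be the neighbor that $u$ is matched to in $M(y_u, y_v)$, and $q$ the neighbor it is matched to in $M(y_u, \bot)$, with the convention that $p$ (respectively $q$) is a dummy vertex with $w_{up}=0$ (respectively $w_{uq}=0$) if $u$ is unmatched in the corresponding matching. Under this convention, the gain-sharing rule immediately gives $\alpha_u \ge h(y_u)\cdot g(y_p)\cdot w_{up}$, where the inequality is the trivial $0 \ge 0$ if $u$ is unmatched.

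Next, I would invoke Lemma~\ref{lemma:add-neighbor} to deduce that $u$ prefers $M(y_u, y_v)$ over $M(y_u, \bot)$. Unpacking Definition~\ref{definition:better-matching} case by case, this translates to the scalar inequality $g(y_p)\cdot w_{up} \ge g(y_q)\cdot w_{uq}$: if $u$ is unmatched in $M(y_u, \bot)$ the right-hand side vanishes since $w_{uq}=0$; if $u$ is matched in both matchings the inequality just expresses $u$'s edge preference; and the remaining configuration (matched in $M(y_u, \bot)$ but not in $M(y_u, y_v)$) is incompatible with the preference statement itself, so it cannot occur. Chaining with the previous step yields $\alpha_u \ge h(y_u)\cdot g(y_q)\cdot w_{uq}$.

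Finally, I would translate the right-hand side into the desired form using the definition of $\beta$. By the symmetric analog of \eqref{eqn:marginal-rank}, $\beta(y_u) = g^{-1}\bigl(g(y_q)\cdot w_{uq}/w_{uv}\bigr)$, and Lemma~\ref{lemma:right-continuous-effect} applied to the right-continuous non-increasing $g$ produces $g(\beta(y_u)) \le g(y_q)\cdot w_{uq}/w_{uv}$. Multiplying by $w_{uv}$ and combining with the previous display gives $\alpha_u \ge h(y_u)\cdot g(\beta(y_u))\cdot w_{uv}$. I expect the only delicate bookkeeping to be the unmatched cases together with the boundary convention $\beta(y_u)=1$ and $g(1)=0$, which make the bound hold vacuously when both sides collapse to zero; this is a clerical issue rather than a genuine obstacle.
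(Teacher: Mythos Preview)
Your proposal is correct and follows essentially the same approach as the paper: invoke Lemma~\ref{lemma:add-neighbor} to compare $u$'s match in $M(y_u,y_v)$ with its match $q$ in $M(y_u,\bot)$, then convert $g(y_q)\cdot w_{uq}$ into $g(\beta(y_u))\cdot w_{uv}$ via the definition of $\beta$ and Lemma~\ref{lemma:right-continuous-effect}. The only cosmetic difference is that you introduce the intermediate partner $p$ in $M(y_u,y_v)$ explicitly, whereas the paper jumps directly to bounding $\alpha_u$ by $u$'s guaranteed gain in $M(y_u,\bot)$.
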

\begin{proof}
    By symmetry, we only need to prove the bound for $\alpha_u$. 
    By \Cref{lemma:add-neighbor}, vertex $u$ prefers matching $M(y_u, y_v)$ over $M(y_u, \bot)$.
    Hence, it suffices to bound $u$'s guaranteed gain in the latter.
    By \Cref{lemma:right-continuous-effect} and the definition that $\beta(y_u) = g^{-1} \big( \frac{g(y_q)\cdot w_{u q}}{w_{uv}} \big)$, we have $g(\beta(y_u)) \leq \frac{g(y_q) \cdot w_{u q}}{w_{uv}}$.
    Hence, the value of $\alpha_u$ w.r.t.\ matching $M(y_u, \bot)$ is at least
    \begin{equation*}
        h(y_u)\cdot g(y_q)\cdot w_{uq} 
        ~=~ 
        h(y_u)\cdot \frac{g(y_q)\cdot w_{uq}}{w_{uv}}\cdot w_{uv} 
        ~\geq~ 
        h(y_u)\cdot  g(\beta(y_u))\cdot w_{uv}
        ~. \qedhere
    \end{equation*}
\end{proof}

Next, we show that $u$ and $v$ will be matched if their ranks are smaller than the marginal ones.

\begin{lemma}%
    \label{lemma:extra-gain}
    If $y_u < \theta(y_v)$ and $y_v < \beta(y_u)$, then $u$ and $v$ are matched, and
    \[
        \alpha_u + \alpha_v = w_{uv}
        ~.
    \]
\end{lemma}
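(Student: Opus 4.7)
The equality $\alpha_u + \alpha_v = w_{uv}$ follows immediately from the gain-sharing rule \eqref{eqn:gain-sharing} once we show the algorithm matches $u$ with $v$ in $M(y_u, y_v)$, so the entire task is to establish that $u$ and $v$ end up matched to each other.

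I would argue by contradiction: assume they are not. Look at the moment the algorithm processes $(u, v)$; by greediness, at least one of $u$, $v$ must already be matched at that moment. A short case split (if both are matched then, pass to the earlier of the two matches $(u, p_u)$ and $(v, p_v)$) lets me pinpoint a moment $t^\star$ at which some edge, WLOG $(v, p_v)$ with $p_v \neq u$, is matched in $M(y_u, y_v)$ while $u$ is still free. The symmetric case in which it is $u$ that gets matched first is handled identically, but using the condition $y_v < \beta(y_u)$ in place of $y_u < \theta(y_v)$. Since $t^\star$ precedes the processing of $(u, v)$, we have $\hat{w}(v, p_v) > \hat{w}(u, v)$, which after dividing by $g(y_v)$ becomes $g(y_{p_v})\cdot w_{v p_v} > g(y_u) \cdot w_{uv}$.

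The key technical step is a coupling between the runs of $M(y_u, y_v)$ and $M(\bot, y_v)$. I would prove, by induction on the processed edges up to time $t^\star$, that the sets of matched pairs in the two runs coincide. The inductive step splits into two cases. For an edge $(a, b)$ with $a, b \neq u$, the states of $a$ and $b$ agree in both runs by induction, so the edge is matched (or rejected) the same way in both. For an edge $(u, b)$, it is absent from $M(\bot, y_v)$; in $M(y_u, y_v)$ it must be rejected, because $u$ remaining free up to $t^\star$ forces $b$ to be already matched when $(u, b)$ is considered, and then the induction hypothesis transfers this to $M(\bot, y_v)$ as well. Either way the invariant is preserved. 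Applying it at step $t^\star$ yields that $(v, p_v)$ is also matched in $M(\bot, y_v)$, so the partner $p$ from \eqref{eqn:marginal-rank} equals $p_v$.

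Substituting $p = p_v$ into $y_u < \theta(y_v) = g^{-1}\!\bigl( g(y_p) w_{vp}/w_{uv} \bigr)$ and invoking \Cref{definition:inverse-function} (with $g$ non-increasing and right-continuous), I get $g(y_u) > g(y_p) w_{vp}/w_{uv}$, i.e., $g(y_u)\cdot w_{uv} > g(y_p)\cdot w_{vp} = g(y_{p_v})\cdot w_{v p_v}$, contradicting the inequality derived in the second paragraph. This contradiction forces $u$ and $v$ to be matched to each other. The main obstacle will be the inductive coupling argument---identifying the right moment $t^\star$ in the various sub-cases and verifying the invariant cleanly for edges incident to $u$---while everything else reduces to careful bookkeeping with the inverse function $g^{-1}$.
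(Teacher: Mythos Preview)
Your proposal is correct. The paper takes a slightly different route: it first proves a helper lemma stating that $y_u < \theta(y_v)$ alone implies $u$ is matched no later than $v$ in $M(y_u,y_v)$, and then combines this with the symmetric claim ($y_v < \beta(y_u)$ implies $v$ is matched no later than $u$) to conclude that $u$ and $v$ must be matched at the same step, hence to each other. Your argument instead goes directly for the contradiction, and the crucial ingredient you supply---the inductive coupling showing that $M(y_u,y_v)$ and $M(\bot,y_v)$ coincide up to the moment $t^\star$ when $v$ is first matched while $u$ is still free---is exactly what the paper's helper-lemma proof glosses over. The paper asserts that ``$v$ is still unmatched when the algorithm checks edge $(u,v)$'' from the mere fact that $(u,v)$ ranks higher than $(p,v)$, but this step is not immediate (adding $u$ can trigger a cascade that matches $v$ earlier than $(u,v)$ is processed); your coupling is precisely the missing justification, restricted to the sub-case where $u$ is still free. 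So the two approaches are really the same argument organized differently: the paper's factoring through the helper lemma is structurally cleaner and gives a reusable statement, while your direct version is more self-contained and makes the coupling explicit.
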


\Cref{lemma:extra-gain} is an improvement compared to directly applying \Cref{lemma:basic-gain} and then relaxing the marginal ranks with the given conditions, i.e.,
\[
    h(y_u)\cdot g(\beta(y_u))\cdot w_{uv} 
    + h(y_v)\cdot g(\theta(y_v))\cdot w_{uv}
    \le 
    (h(y_u)\cdot g(y_v) + h(y_v)\cdot g(y_u))\cdot w_{uv} 
    \leq 
    w_{uv}
    ~.
\]

\Cref{lemma:extra-gain} is a corollary of the next lemma.

\begin{lemma} %
    If $y_u < \theta(y_v)$, then the algorithm matches $u$ no later than when it matches $v$.
    The symmetric claim also holds.
\end{lemma}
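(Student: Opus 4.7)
My plan is to derive a contradiction from assuming that the algorithm matches $v$ strictly before $u$, by comparing the actual run $M_1 \defeq M(y_u,y_v)$ against the auxiliary run $M_2 \defeq M(\bot,y_v)$ in which $u$ is deleted. Let $p$ denote $v$'s partner in $M_2$, taken to be a dummy with $w_{pv}=0$ if $v$ is unmatched in $M_2$. Combining \eqref{eqn:marginal-rank} with \Cref{lemma:right-continuous-effect}, the hypothesis $y_u<\theta(y_v)$ translates into the key perturbed-weight inequality
\[
    \hat{w}(u,v) \;=\; g(y_u)\,g(y_v)\,w_{uv} \;>\; g(y_p)\,g(y_v)\,w_{pv} \;=\; \hat{w}(p,v).
\]

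The central structural step is to show that $M_1\triangle M_2$ is a single alternating path $P:\; u=a_0,a_1,\ldots,a_k$, where the odd-position edges $(a_{i-1},a_i)$ lie in $M_1\setminus M_2$, the even-position ones lie in $M_2\setminus M_1$, and the perturbed weights strictly decrease along $P$:
\[
    \hat{w}(u,a_1) \;>\; \hat{w}(a_1,a_2) \;>\; \cdots \;>\; \hat{w}(a_{k-1},a_k).
\]
Uniqueness of the alternating component holds because the only source of divergence between the two runs is $u$'s absence in $M_2$, which propagates as a single chain of ``bumpings''; $u$ is an endpoint since it is matched only in $M_1$. The strict decrease is the classical Ranking-style swap argument, proved by induction on the position: if some consecutive pair violated it, one could show that both runs would still be in identical states at the moment the lower-ranked edge is processed, contradicting its prescribed alternation between $M_1$ and $M_2$.

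With this structure in hand, the argument closes quickly. Under the contradiction hypothesis, $v$ is matched in $M_1$ to some $q\ne u$ at a time earlier than $u$ gets matched, so $\hat{w}(q,v)>\hat{w}(u,v)$ since edges are queried in descending order of perturbed weight. Chaining with the key inequality yields $\hat{w}(q,v)>\hat{w}(u,v)>\hat{w}(p,v)$; in particular $q\ne p$, placing $v=a_j$ on the path $P$ (the boundary case $j=k$ with $p$ a dummy behaves identically). The strict monotonicity along $P$ together with $\hat{w}(q,v)>\hat{w}(p,v)$ forces $v$'s $M_1$-edge to occupy position $j$ and its $M_2$-edge position $j+1$, so $j$ is odd; since $q=a_{j-1}\ne u=a_0$ we must have $j\ge 3$. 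But then the edge $(u,a_1)$ at position $1$ is processed strictly earlier than $(q,v)$ at position $j$, so $u$ is matched before $v$, a contradiction.

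The main obstacle is the structural claim on $P$, in particular the strict decrease of perturbed weights along the path, which requires careful time-by-time bookkeeping of when the two runs begin to diverge. Once this is established, everything else reduces to a short argument with perturbed-weight inequalities and path-position parities.
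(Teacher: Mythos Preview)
Your approach is correct and actually more robust than the paper's short argument. Both begin by deducing $\hat w(u,v)>\hat w(p,v)$ from $y_u<\theta(y_v)$. The paper then asserts that in $M(y_u,y_v)$, ``$v$ is still unmatched when the algorithm checks edge $(u,v)$,'' and concludes in two lines. That intermediate assertion, taken literally, can fail: if $u$ has an even heavier incident edge that is matched first, the resulting cascade may match $v$ before the algorithm reaches $(u,v)$. Concretely, take $L=\{u,c\}$, $R=\{v,d\}$ with weights $w_{ud}=10$, $w_{cd}=5$, $w_{cv}=3$, $w_{uv}=1$ and equal ranks; then in $M(\bot,y_v)$ vertex $v$ is unmatched (so $\theta(y_v)=1>y_u$), yet in $M(y_u,y_v)$ the algorithm matches $(u,d)$, then $(c,v)$, so $v$ is already matched when $(u,v)$ is reached. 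The lemma's conclusion still holds here --- $u$ is matched first --- but the paper's one-line justification does not cover this case.

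Your alternating-path route does. The structural claim that $M_1\triangle M_2$ is a single path from $u$ with strictly decreasing perturbed weights follows from exactly the step-by-step ``first divergence'' bookkeeping you sketch, and is essentially the content of the appendix proof of \Cref{lemma:add-neighbor}. Once that is in place, your position argument cleanly forces $(u,a_1)$ to precede $(q,v)$, giving the contradiction. One small simplification: the parity $j$ odd already follows from $v\in R$ and the bipartite structure of $P$, so you do not need the weight comparison $\hat w(q,v)>\hat w(p,v)$ for that particular step.
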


\begin{proof}
    Recall that
    \begin{equation*}
        \theta(y_v) = g^{-1}\left(\frac{g(y_p) w_{pv}}{w_{uv}}\right)
        = \inf\left\{ x\in [0,1] : g(x)\leq \frac{g(y_p) w_{pv}}{w_{uv}} \right\},
    \end{equation*}
    where $p$ is the vertex $v$ is matched with in $M(\bot, y_v)$.
    Therefore, for $y_u < \theta(y_v)$ we have
    \begin{equation*}
        g(y_u)\cdot w_{uv} ~>~ g(y_p)\cdot w_{pv}.
    \end{equation*}

    Now consider the matching $M(y_u, y_v)$.
    The above inequality implies that $(u, v)$ ranks higher than $(p, v)$.
    Hence, $v$ is still unmatched when the algorithm checks edge $(u,v)$.
    At that moment, either $u$ is already matched (with a vertex other than $v$), or $u$ is now matched with $v$.
    The lemma holds in either case.
\end{proof}

With \Cref{lemma:basic-gain,lemma:extra-gain}, we are able to derive the lower bound for $\frac{\E[\alpha_u + \alpha_v]}{w_{uv}}$ in \Cref{lemma:universal-lower-bound}. 

\begin{lemma}[Universal Lower Bound]\label{lemma:universal-lower-bound}
    We have
    \begin{align*}
        \frac{\E[\alpha_u + \alpha_v]}{w_{uv}} \geq \int_0^1 \underbrace{\Big( \theta(y_v) - \beta^{-1}(y_v) \Big)^+}_{\text{\rm shaded area, \Cref{lemma:extra-gain}}} \d y_v & + \int_0^1 \underbrace{\Big(1 - \big(\theta(y_v) - \beta^{-1}(y_v)\big)^+ \Big) \cdot h(y_v) \cdot  g(\theta(y_v))}_{\text{\rm non-shaded area, \Cref{lemma:basic-gain} for $\alpha_v$}} \,\d y_v \\
        & + \int_0^1 \underbrace{\Big(1 - \big(\beta(y_u) - \theta^{-1}(y_u)\big)^+\Big)\cdot h(y_u) \cdot g(\beta(y_u))}_{\text{\rm non-shaded area, \Cref{lemma:basic-gain} for $\alpha_u$}}  \,\d y_u
        ~,
    \end{align*}
    where $x^+$ denotes $\max\{x, 0\}$.
\end{lemma}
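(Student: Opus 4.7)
The plan is to decompose the unit square $[0,1]^2$ of rank realizations for $(y_u, y_v)$ into the \emph{shaded} region $S = \{(y_u, y_v) : y_u < \theta(y_v) \text{ and } y_v < \beta(y_u)\}$, on which \Cref{lemma:extra-gain} gives the strong identity $\alpha_u + \alpha_v = w_{uv}$, and its complement $S^c$, on which I fall back to the pointwise bounds of \Cref{lemma:basic-gain} for $\alpha_u$ and $\alpha_v$ separately. Since the bounds in \Cref{lemma:basic-gain} hold unconditionally, they are in particular valid throughout $S^c$; adding them gives $\alpha_u + \alpha_v \geq \bigl[h(y_u)\,g(\beta(y_u)) + h(y_v)\,g(\theta(y_v))\bigr] w_{uv}$ on $S^c$. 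Matching the three summands of the claimed bound to these contributions is the essence of the proof.

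Concretely, starting from
\[
    \E[\alpha_u + \alpha_v] = \int_0^1 \int_0^1 (\alpha_u + \alpha_v)(y_u, y_v)\, \d y_u \d y_v,
\]
I lower bound the integrand by $w_{uv}$ on $S$ and by $[h(y_u)\,g(\beta(y_u)) + h(y_v)\,g(\theta(y_v))]\,w_{uv}$ on $S^c$. The shaded contribution is $w_{uv}\cdot\mathrm{area}(S)$, which after slicing at fixed $y_v$ yields the first term. For the non-shaded contribution I use Fubini twice with different slicing orientations: since $h(y_v)\,g(\theta(y_v))$ is constant in $y_u$, I integrate it over each horizontal slice of $S^c$ (fixed $y_v$), whose length is $1$ minus the shaded slice length; this produces the second term. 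Symmetrically, since $h(y_u)\,g(\beta(y_u))$ is constant in $y_v$, I integrate it over each vertical slice of $S^c$ (fixed $y_u$), producing the third term. Dividing through by $w_{uv}$ recovers the stated inequality.

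The one step requiring care is identifying the slice lengths. For fixed $y_v$, I need the Lebesgue measure of $\{y_u \in [0,1) : y_u < \theta(y_v) \text{ and } \beta(y_u) > y_v\}$. By \Cref{lemma:theta_beta_increasing}, $\beta$ is non-decreasing and (by construction) right-continuous, so by \Cref{definition:inverse-function} the set $\{y_u : \beta(y_u) > y_v\}$ is a right-tail interval with infimum $\beta^{-1}(y_v)$, i.e., either $[\beta^{-1}(y_v),1]$ or $(\beta^{-1}(y_v),1]$ depending on whether $\beta$ strictly exceeds $y_v$ at the boundary; either way it has measure $1-\beta^{-1}(y_v)$. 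Intersecting with $[0,\theta(y_v))$ then yields measure $(\theta(y_v) - \beta^{-1}(y_v))^+$, as required, and the analogous computation handles the column slices with $(\beta(y_u)-\theta^{-1}(y_u))^+$.

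The main potential obstacle is precisely this inverse-function bookkeeping at jump points of $\beta$ and $\theta$, which become relevant once we allow step-function parameters $g,h$ in the computer-aided analysis of \Cref{section:computer-aided-competitive-analysis}. Because the ambiguity at those points is confined to a Lebesgue-null set, it does not affect any of the integrals above, and once the slice lengths are settled the argument is a routine application of linearity of expectation and Fubini's theorem.
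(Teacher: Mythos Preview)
Your proposal is correct and takes essentially the same approach as the paper: partition the unit square into the shaded region $S$ (apply \Cref{lemma:extra-gain}) and its complement (apply \Cref{lemma:basic-gain} to each dual variable separately), then recover the three summands by slicing horizontally for the $y_v$-terms and vertically for the $y_u$-term, identifying each slice length via the monotone inverse $\beta^{-1}$ (respectively $\theta^{-1}$). The only minor quibble is the parenthetical that $\beta$ is right-continuous ``by construction,'' which the paper does not actually establish either; but as you correctly note, monotonicity alone makes $\{y_u:\beta(y_u)>y_v\}$ a right-tail interval of measure $1-\beta^{-1}(y_v)$ up to a null set, so the claim is inessential.
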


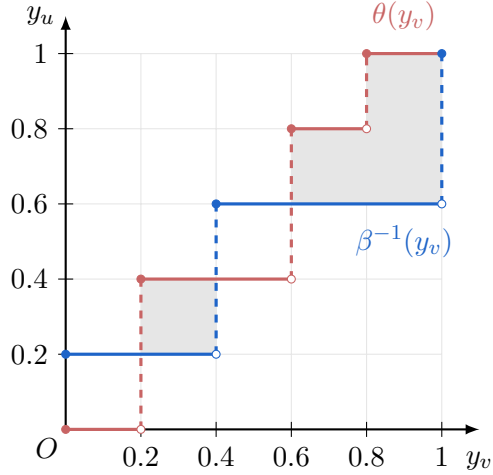
\begin{figure}[!ht]
\centering
\begin{tikzpicture}
    \draw[gray!20, thin] (0,0) grid (5,5);
    \draw[thick, -latex] (0,0) -- (5.5,0) node[below=0.15] {$y_v$};
    \draw[thick, -latex] (0,0) -- (0,5.5) node[left] {$y_u$};
    \node[black] at (-0.25, -0.25) {$O$};

    \foreach \x in {1,...,5}
        \draw (\x,0.1) -- (\x,-0.1) node[below] {$\pgfmathparse{\x/5}\pgfmathprintnumber[fixed, precision=1]{\pgfmathresult}$};
    \foreach \y in {1,...,5}
            \draw (0.1,\y) -- (-0.1,\y) node[left] {$\pgfmathparse{\y/5}\pgfmathprintnumber[fixed, precision=1]{\pgfmathresult}$};

    \draw[draw=none,fill=gray!20] (1,1) rectangle (2,2);
    \draw[draw=none,fill=gray!20] (3,3) rectangle (5,4);
    \draw[draw=none,fill=gray!20] (4,4) rectangle (5,5);
            
    \draw[myRed, very thick] (0,0) -- (1,0);
    \draw[myRed, very thick, dashed] (1,0) -- (1,2); 
    \draw[myRed, very thick] (1,2) -- (3,2); 
    \draw[myRed, very thick, dashed] (3,2) -- (3,4); 
    \draw[myRed, very thick] (3,4)-- (4,4); 
    \draw[myRed, very thick, dashed] (4,4) -- (4,5);
    \draw[myRed, very thick] (4,5)-- (5,5); 
    \node[draw=myRed, fill=myRed, circle, minimum size=3pt, inner sep=0pt] at (0, 0) {};
    \node[draw=myRed, fill=myRed, circle, minimum size=3pt, inner sep=0pt] at (1, 2) {}; 
    \node[draw=myRed, fill=myRed, circle, minimum size=3pt, inner sep=0pt] at (3, 4) {};
    \node[draw=myRed, fill=myRed, circle, minimum size=3pt, inner sep=0pt] at (4, 5) {};
    \node[draw=myRed, fill=white, circle, minimum size=3pt, inner sep=0pt] at (1, 0) {};
    \node[draw=myRed, fill=white, circle, minimum size=3pt, inner sep=0pt] at (3, 2) {};
    \node[draw=myRed, fill=white, circle, minimum size=3pt, inner sep=0pt] at (4, 4) {};
    \node at (4.5,5.5) {\textcolor{myRed}{$\theta(y_v)$}};
    \draw[myBlue, very thick] (0,1) -- (2,1);
    \draw[myBlue, very thick, dashed] (2,1) -- (2,3); 
    \draw[myBlue, very thick] (2,3) -- (5,3); 
    \draw[myBlue, very thick, dashed] (5,3) -- (5,5);
    \node[draw=myBlue, fill=myBlue, circle, minimum size=3pt, inner sep=0pt] at (0,1) {};
    \node[draw=myBlue, fill=myBlue, circle, minimum size=3pt, inner sep=0pt] at (2,3) {};
    \node[draw=myBlue, fill=myBlue, circle, minimum size=3pt, inner sep=0pt] at (5,5) {};
    \node[draw=myBlue, fill=white, circle, minimum size=3pt, inner sep=0pt] at (2,1) {};
    \node[draw=myBlue, fill=white, circle, minimum size=3pt, inner sep=0pt] at (5,3) {};
    \node at (4.5,2.5) {\textcolor{myBlue}{$\beta^{-1}(y_v)$}};
\end{tikzpicture}
\caption{Universal Lower Bound}
\label{fig:universal_lower_bound}
\end{figure}

As in the example shown by \Cref{fig:universal_lower_bound}, the first term on the right-hand side corresponds to the improved gain of $\alpha_v + \alpha_v$ in the shaded area, and the last two terms correspond to the guaranteed gain of $\alpha_v$ and $\alpha_u$ in the non-shaded area, respectively.
Similar bounds for the dual variables also appeared in~\cite{HTWZ19,JW21,PT25} for vertex-weighted online bipartite matching with random arrivals. 
However, these previous works failed to extend their techniques to edge-weighted matching.
Our main technical contribution is the introduction of quadratic perturbed weights and guaranteed gains and their structural properties, so that we could prove the above lemmas for edge-weighted matching. 
We defer the detailed proof to \Cref{app-subsec:proof-universal-lower-bound}.

\subsection{Choosing Functions \texorpdfstring{$g$}{g} and \texorpdfstring{$h$}{h}}\label{section:maximal-choice-of-g-h}

We conclude this section with a discussion about how to choose functions $g$ and $h$.
In particular, let us focus on the most nontrivial constraint that for all $x, y\in [0, 1]$
\begin{equation}
    \label{eqn:budget-balanced}
    h(x)\cdot g(y) + h(y)\cdot g(x)\le 1
    ~.
\end{equation}

\begin{example}[Unit Circle]
    \label{example:unit-circle}
    If $g(y)^2 + h(y)^2 = 1$ for all $y \in [0, 1]$, Constraint \eqref{eqn:budget-balanced} follows from the fact that the dot product of any two unit vectors is at most $1$.
\end{example}

\begin{example}[Linear Segment]
    \label{example:line-segment}
    If $g(y) \le 1$ and $h(y) = 1 - \frac{1}{2} g(y)$ for all $y \in [0, 1]$, Constraint \eqref{eqn:budget-balanced} follows from
    \[
        \Big(1 - \frac{1}{2} g(x) \Big) g(y) + \Big(1 - \frac{1}{2} g(y) \Big) g(x) = g(x) + g(y) - g(x)g(y) \le 1
        ~,
    \]
    where the last inequality follows from $g(x), g(y) \le 1$.
\end{example}

The first example is easy to find by the dot-product-like left-hand-side of the constraint.
We derive the second example by assuming $g$ and $h$ to be linearly related and then solving for the coefficients of the linear relation.

Moreover, we find a general family of $g$ and $h$ parameterized by $0 \le \varphi \le \frac{\pi}{4}$.
Both examples above are special cases of this family of functions.

\begin{lemma}[General Form of $g$ and $h$]
\label{lemma:general-h-satisfy-constraint}
    For any function $g : [0, 1] \to [0, \cos \varphi]$, and 
    \[
        h(y) = \begin{cases}
                \dfrac{1}{\cos \varphi} - g(y) \cdot \tan \varphi & \mbox{if $0\le g(y)< \sin \varphi$\,;} \\[2ex]
                \sqrt{1 - g(y)^2} & \mbox{if $\sin \varphi\le g(y) \le \cos \varphi$\,,}
        \end{cases}
    \]
    Constraint \eqref{eqn:budget-balanced} holds for any $x, y \in [0, 1]$.
\end{lemma}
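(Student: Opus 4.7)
The plan is to verify \eqref{eqn:budget-balanced} by a case analysis driven by the piecewise definition of $h$. Geometrically, the point $(g(y), h(y))$ lies either on the arc of the unit circle between $(\sin\varphi, \cos\varphi)$ and $(\cos\varphi, \sin\varphi)$ (when $g(y) \ge \sin\varphi$) or on the tangent line to this arc at $(\sin\varphi, \cos\varphi)$ (when $g(y) < \sin\varphi$). By the symmetry of the inequality in $x$ and $y$, three cases suffice: both $g(x), g(y)$ in the arc regime $[\sin\varphi, \cos\varphi]$; both in the segment regime $[0, \sin\varphi)$; or one in each.

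For the arc-arc case, I would parameterize $(g(x), h(x)) = (\sin\alpha, \cos\alpha)$ and $(g(y), h(y)) = (\sin\beta, \cos\beta)$, noting that the codomain restriction $g \le \cos\varphi$ combined with $g \ge \sin\varphi$ forces $\alpha, \beta \in [\varphi, \pi/2 - \varphi]$. The sine angle-addition formula then gives $h(x)g(y) + h(y)g(x) = \sin(\alpha + \beta) \le 1$. For the segment-segment case, direct substitution yields
\[
    h(x)\,g(y) + h(y)\,g(x) = \frac{g(x) + g(y) - 2\sin\varphi \cdot g(x)\,g(y)}{\cos\varphi}.
\]
Because $\varphi \le \pi/4$ implies $2\sin^2\varphi \le 1$, this bilinear expression has non-negative partial derivatives on $[0, \sin\varphi]^2$, so it is maximized at the corner $g(x) = g(y) = \sin\varphi$, where it evaluates to $\frac{2\sin\varphi\cos^2\varphi}{\cos\varphi} = \sin(2\varphi) \le 1$.

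The most delicate step is the segment-arc case; without loss of generality, suppose $g(x) \in [0, \sin\varphi)$ and $g(y) = \sin\beta$ with $\beta \in [\varphi, \pi/2-\varphi]$. After substituting the two formulas for $h$ and grouping via $\cos\beta\cos\varphi - \sin\beta\sin\varphi = \cos(\beta+\varphi)$, I would obtain
\[
    h(x)\,g(y) + h(y)\,g(x) = \frac{\sin\beta + g(x)\cos(\beta + \varphi)}{\cos\varphi}.
\]
The codomain constraint $g(y) \le \cos\varphi$ is precisely what guarantees $\beta + \varphi \le \pi/2$ and hence $\cos(\beta + \varphi) \ge 0$. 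Consequently, the right-hand side is non-decreasing in $g(x)$, so its supremum over $g(x) \in [0, \sin\varphi]$ is attained at $g(x) = \sin\varphi$, and another angle-addition manipulation reduces the bound to $\sin(\beta + \varphi) \le 1$.

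The main obstacle I anticipate is this dependence on the restricted codomain $g \le \cos\varphi$ in the segment-arc case: without it, $\beta + \varphi$ could exceed $\pi/2$, $\cos(\beta + \varphi)$ would flip sign, and the monotonicity in $g(x)$ would reverse. Recognizing that the codomain restriction is exactly what keeps the tangent segment a ``safe'' continuation of the arc—so that every cross pairing is dominated at the boundary $g(x) = \sin\varphi$, where it coincides with an arc-arc pairing—is the only non-mechanical insight; the remaining steps are routine trigonometric identities and a monotonicity check.
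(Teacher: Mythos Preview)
Your proof is correct, but it proceeds differently from the paper's. For the arc--arc case, the paper uses the AM--GM inequality $h(x)g(y)\le\tfrac12\bigl(1-g(x)^2+g(y)^2\bigr)$ rather than your sine-addition parameterization; both are one-line arguments. The real divergence is in how the remaining cases are handled: you split into segment--segment and segment--arc and push each to the boundary $g(x)=\sin\varphi$ by a monotonicity argument, reducing everything to $\sin(\beta+\varphi)\le 1$. The paper instead treats ``at least one point on the segment'' as a single case: it writes the constraint as the dot product $\vec u\cdot\vec v$ with $\vec u=(g(x),h(x))$ and $\vec v=(h(y),g(y))$, observes that by reflection symmetry $(h(y),g(y))$ always lies on or below the tangent line at $(\cos\varphi,\sin\varphi)$, and then applies the orthogonal matrix $A=\bigl(\begin{smallmatrix}\sin\varphi&\cos\varphi\\\cos\varphi&-\sin\varphi\end{smallmatrix}\bigr)$ so that in the rotated coordinates the first components of $A\vec u$ and $A\vec v$ are each at most $1$ while the second components have opposite signs. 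Your approach is more elementary and makes the role of the codomain restriction $g\le\cos\varphi$ explicit (as you note, it is exactly what keeps $\cos(\beta+\varphi)\ge 0$); the paper's rotation argument is more geometric and unifies your two sub-cases, at the cost of a less transparent coordinate calculation.
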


\Cref{fig:general-form-h} illustrates the general form of $g$ and $h$ and we defer the simple proof of the lemma to \Cref{app:proof-general-h-satisfy-constraint}.
\Cref{example:unit-circle} is the special case when $\varphi = 0$.
\Cref{example:line-segment} is the special case when $\varphi = \frac{\pi}{4}$, except that we scale $g$ by $\sqrt{2}$ and $h$ by $\frac{1}{\sqrt{2}}$, noting that the algorithm's behavior and the guaranteed gains are both invariant to such a scaling.

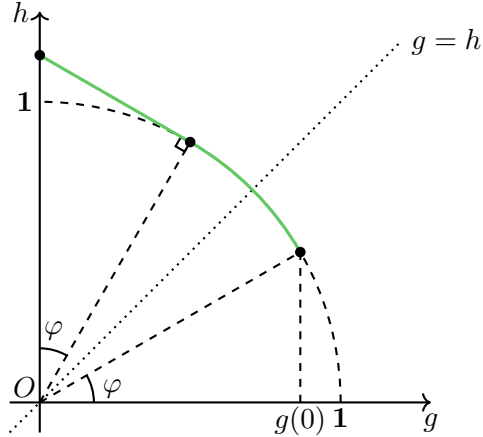
\begin{figure}[!ht]
    \centering
    \begin{tikzpicture}[scale = 4]
        \draw[->, thick] (-0.1,0) -- (1.3,0) node[below] {$g$};
        \draw[->, thick] (0,-0.1) -- (0,1.3) node[left] {$h$};

        \draw[black, thick, dashed] (1, 0) arc (0:90:1);

        \node at (-0.05, 1) {$\mathbf{1}$};
        \node at (1, -0.05) {$\mathbf{1}$};
        \node at (-0.05, 0.05) {$O$};
        \draw[-, thick, dashed] (0.866, 0.5) -- (0.866, 0);
        \node at (0.866, -0.06) {$g(0)$};

        \coordinate (A) at (0.5,0.866);
        \draw[black, thick, rotate=60] (A) rectangle ($(A) + (-0.035,0.035)$);
        
        \draw[myGreen, very thick] (0,1.155) -- (0.5,0.866);  

        \draw[black, dotted, thick] (-0.1, -0.1) -- (1.2, 1.2) node[right] {$g = h$};
        
        \draw[myGreen, very thick] (0.866, 0.5) arc (30:60:1);

        \fill[black] (0.5,0.866) circle (0.5pt);

        \fill[black] (0.866, 0.5) circle (0.5pt);

        \fill[black] (0, 1.155) circle (0.5pt);

        \draw[black, thick] (0,0) ++(0:0.18cm) arc (0:30:0.18cm) node[midway, right] {$\varphi$};  
        \draw[black, thick] (0,0) ++(60:0.18cm) arc (60:90:0.18cm) node[midway, above] {$\varphi$};

        \draw[black, thick, dashed]  (0, 0) -- (0.866, 0.5);
        \draw[black, thick, dashed]  (0, 0) -- (0.5,0.866);
    \end{tikzpicture}
    \caption{General form of $g$ and $h$ as points on the green curve}
    \label{fig:general-form-h}
\end{figure}

The above general form of $g$ and $h$ is \emph{maximal} in the sense that for any $x \in [0, 1]$, there is a corresponding $y \in [0, 1]$ for which Constraint \eqref{eqn:budget-balanced} holds with equality.
If $(g(x), h(x))$ is a point on the unit circle, $(g(y), h(y))$ is the reflection w.r.t.\ the line $g = h$.
If $(g(x), h(x))$ is on the tangent line, then $y = 0$ and $(g(y), h(y))$ is the right-most point of the green curve in \Cref{fig:general-form-h}.

\bigskip

Unfortunately, solving for the optimal functions $g$ and $h$ that maximize the lower bound in \Cref{lemma:universal-lower-bound} is analytically intractable. 
We explore two approaches to find approximately optimal $g$ and $h$ that yield non-trivial competitive ratios for the Oblivious Bipartite Matching problem.

\paragraph{Analytical Analysis.}
The first approach relaxes the lower bound in \Cref{lemma:universal-lower-bound} to get a sub-optimal yet analytically tractable bound. 
Following this approach, we derive a pair of closed-form $g$ and $h$ functions on the unit circle to show a competitive ratio slightly larger than the $1-\frac{1}{e}$ barrier.
Since the ratio is subsumed by the next approach, we defer its proof to \Cref{app:analytical-competitive-analysis}.

\begin{theorem}
    \label{theorem:analytical-lower-bound}
    For $g(y) = a - b \cdot \exp(\min\{y,c\})$, and $h(y) = \sqrt{1-g^2(y)}$, where $a = 1.171, b =0.339 , c = 0.652$, \algoname is $0.6324$-competitive.
\end{theorem}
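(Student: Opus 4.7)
The plan is to invoke the Universal Lower Bound (\Cref{lemma:universal-lower-bound}) as the workhorse and then control its right-hand side for the specified closed-form $g$ and $h$. First I would verify the preconditions of \algoname: $g(y)=a-b\exp(\min\{y,c\})$ is non-increasing and right-continuous with $g(y)>0$ on $[0,1)$ provided $a>be^{c}$, and $g(1)=0$ can be enforced by taking the values at $1$ as the appropriate limit/truncation; $h(y)=\sqrt{1-g(y)^2}$ is then non-decreasing. The key structural constraint $h(x)g(y)+h(y)g(x)\le 1$ is automatic because $(g(y),h(y))$ lies on the unit circle for every $y$, so the left-hand side is the dot product of two unit vectors (this is the $\varphi=0$ case of \Cref{lemma:general-h-satisfy-constraint}, equivalently \Cref{example:unit-circle}). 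The numerics $a=1.171$, $b=0.339$, $c=0.652$ will only enter at the very last stage.

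Next I would write out the competitive ratio as the infimum over admissible marginal-rank functions $\theta,\beta$ of the right-hand side of \Cref{lemma:universal-lower-bound}. Because $g$ is strictly decreasing on $[0,c]$ and constant on $[c,1]$, it is convenient to change variables from $y_v$ to $s=g(\theta(y_v))$ (and symmetrically from $y_u$ to $t=g(\beta(y_u))$). This converts the integrands into functions of $s,t$ alone, and turns the non-decreasingness of $\theta,\beta$ (\Cref{lemma:theta_beta_increasing}) into monotonicity constraints that become much easier to handle. The flat piece of $g$ on $[c,1]$ should then correspond to a region where $g(\theta)$ saturates at its minimum value, which allows me to split the integrals cleanly into a ``strictly decreasing'' regime and a ``flat'' regime.

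After the change of variables, the plan is to relax the Universal Lower Bound into an analytically tractable form. The delicate piece is the coupling term $(\theta(y_v)-\beta^{-1}(y_v))^+$ in \Cref{lemma:extra-gain}, which ties the two dual variables together. I would handle it by a standard adversarial argument: fix one of $\theta,\beta$ and show, via an exchange/shift argument on the interval where $\theta(y_v)>\beta^{-1}(y_v)$, that the worst case is attained when $\theta=\beta$ (so by symmetry $\theta(y)=\beta(y)=y$ pointwise at the adversary's optimum, up to the boundary imposed by $g$ being flat on $[c,1]$). This collapses the three-term bound into a single one-dimensional integral of the form $\int_0^1 \phi(y)\,\d y$, where $\phi$ is a closed-form expression in $g$ and $h=\sqrt{1-g^2}$.

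Finally I would plug in $g(y)=a-b e^{\min\{y,c\}}$, split the remaining integral at $y=c$, and verify by direct calculus that with the chosen constants the resulting value is at least $0.6324$. The main obstacle is the adversarial reduction in the third step: making the exchange argument rigorous despite the $(\cdot)^+$ truncation and the fact that $\theta$ and $\beta$ are tied to each other through the edge that is being analyzed. I expect the exponential-then-flat shape of $g$ to play a crucial role here, since on $[c,1]$ the perturbed weights no longer distinguish ranks, neutralizing precisely the cases in which the shaded-area term $(\theta-\beta^{-1})^+$ could otherwise hurt the bound.
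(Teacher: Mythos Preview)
Your adversarial reduction (step three) has a genuine error. The claim that the infimum of the Universal Lower Bound over admissible $(\theta,\beta)$ is attained at $\theta=\beta$, let alone at $\theta(y)=\beta(y)=y$, is false. Symmetry of the objective in $(\theta,\beta)$ only says that if $(\theta_0,\beta_0)$ is a minimizer then so is $(\beta_0,\theta_0)$; it does not force $\theta_0=\beta_0$. Plugging $\theta(y)=\beta(y)=y$ into \Cref{lemma:universal-lower-bound} kills the shaded-area term and leaves $2\int_0^1 h(y)g(y)\,\d y\approx 0.9$, well above the target. The actual near-minimizer is maximally \emph{asymmetric}: $\theta\equiv 0$ and $\beta\equiv 1$ (intuitively, $u$ has no alternative while $v$ has an overwhelmingly better one), which drives the bound down to $g(0)\int_0^1 h(y)\,\d y \approx 0.632$. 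Any exchange argument pushing toward $\theta=\beta$ therefore moves \emph{away} from the minimizer, not toward it.

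The paper proceeds quite differently. It introduces scalar parameters $\tau=\sup\{y:\theta(y)<1\}$ and $\gamma=\sup\{y:\beta(y)<1\}$ and proves a relaxed lower bound depending only on $\tau,\gamma$ and an inner scalar minimum over $t\le\gamma$ (\Cref{lemma:analytical_lower_bound}). The crucial step is showing that the adversary's best $\beta$ is the \emph{constant} function $\beta\equiv\tau$, and this hinges on the inequality $h(1)\cdot(g(y)-g'(y))\le 1$ for all $y$---a condition the constants $a,b,c$ are tuned to satisfy with almost no slack (\Cref{lemma:g-h-sat-cond}), and which your plan never isolates. The exponential-then-flat shape of $g$ does matter, but not where you guessed: it pins the inner minimum in $t$ at $t=\min\{\gamma,c\}$ (\Cref{lemma:minimum_of_theta}), after which a two-variable calculus minimization in $(\tau,\gamma)$ finishes the proof.
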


\paragraph{Computer-Aided Analysis.}
We also consider step functions $g$ and $h$ and treat their values at each step as a finite number of variables. 
Then, we can employ powerful solvers for QCQP to optimize the step functions up to a certain number of steps.
Further, based on the optimal solution by QCQP solvers, we construct step functions $g$ and $h$ with even more steps and verify the resulting competitive ratios.
With this computer-aided approach, we get a much better competitive ratio than the analytical approach, even when we consider functions with less than $10$ steps.
The final bound is from a verification-based approach with $13$ steps.
We present the details in \Cref{section:computer-aided-competitive-analysis}.

\begin{theorem}[Main Theorem]
    \label{thm:main}
    There are step functions $g$ and $h$ such that \algoname is $0.659$-competitive for Oblivious Bipartite Matching.
\end{theorem}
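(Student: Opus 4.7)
The plan is to instantiate the universal lower bound of \Cref{lemma:universal-lower-bound} with a specific pair of step functions $g$ and $h$ obtained via a QCQP solver, and verify rigorously that the resulting bound is at least $0.659 \cdot w_{uv}$ for every edge $(u,v)$ and every admissible marginal-rank pair $(\theta, \beta)$; \Cref{lemma:primal_dual} then yields the theorem. By \Cref{lemma:theta_beta_increasing}, $\theta$ and $\beta$ are non-decreasing maps $[0,1] \to [0,1]$, and although they are determined by the rest of the instance, it suffices to treat them as adversarial and bound the expression in \Cref{lemma:universal-lower-bound} uniformly in this worst case.

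First I would reduce the problem to a finite-dimensional QCQP. Fix breakpoints $0 = z_0 < z_1 < \cdots < z_K = 1$ and restrict $g, h$ to step functions taking values $g_i, h_i$ on $[z_{i-1}, z_i)$. Since $g$ and $h$ are piecewise constant and right-continuous, snapping the range of $\theta$ and $\beta$ to breakpoints leaves the integrand of \Cref{lemma:universal-lower-bound} unchanged in the worst case, by the monotonicity of all functions involved. The designer's problem becomes: maximize $F$ subject to (i) the budget constraints $h_i g_j + h_j g_i \leq 1$ for every pair $(i,j)$, (ii) the monotonicities $g_i \geq g_{i+1}$ and $h_i \leq h_{i+1}$, and (iii) $\mathrm{RHS}_{\theta,\beta}(g,h) \geq F$ for every monotone discrete profile $(\theta, \beta)$ over the breakpoints. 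All constraints are quadratic in the step values, so the program is amenable to off-the-shelf QCQP solvers. I would warm-start the solver with the ``maximal'' one-parameter family of \Cref{lemma:general-h-satisfy-constraint}, restricting $(g_i, h_i)$ to lie on the unit-circle arc together with its tangent segment, which both reduces dimensionality and provides a strong initial point.

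Finally, I would take the numerical solution at $K = 13$, present it as explicit rational step values $\{(g_i, h_i)\}_{i=0}^{12}$, and perform a rigorous verification with two parts: (a) directly check the budget and monotonicity constraints, a quadratic number of elementary inequalities; and (b) certify that $\mathrm{RHS}_{\theta,\beta}(g,h) \geq 0.659$ for every admissible discrete $(\theta, \beta)$. The hard part is (b): the number of monotone discrete profiles over $K$ breakpoints is superexponential, so brute-force enumeration is infeasible at $K = 13$. I would address this by exploiting the local structure of \Cref{lemma:universal-lower-bound}: the shaded-area term couples $\theta$ and $\beta^{-1}$ only through local contributions within each breakpoint cell, whereas the two non-shaded terms involving $\alpha_u$ and $\alpha_v$ separate cleanly across cells once the shaded region has been fixed. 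This structure lets the adversarial minimization be recast as a polynomial-size dynamic program over the step indices that produces a safe lower bound on the true worst case. Once the certified bound exceeds $0.659$, \Cref{lemma:primal_dual} completes the proof.
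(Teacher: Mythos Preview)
Your high-level plan (discretize $g,h$ as step functions, formulate a QCQP, then fix a specific solution and verify the lower bound against all admissible marginal-rank pairs) matches the paper's. Two points of divergence are worth flagging.

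First, you argue for ``snapping'' $\theta,\beta$ to the breakpoints by an adversarial monotonicity argument. The paper takes a cleaner route: it shows directly (\Cref{lemma:step-marginal-ranks}) that when $g$ is an $n$-segment step function, the marginal ranks $\theta,\beta$ are \emph{automatically} $n$-segment step functions with values in $\{0,\tfrac{1}{n},\ldots,1\}$. The reason is that the matching $M(\bot,y_v)$ depends on $y_v$ only through $g(y_v)$, which is constant on each interval, and $\theta(y_v)=g^{-1}(\cdot)$ always lands on a breakpoint. So there is no need to argue about snapping; the finite search space $\mathcal{S}_n$ comes for free from the structure of $g$.

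Second, and more substantively, your verification step rests on the premise that brute-force enumeration at $K=13$ is infeasible, which is not correct. The number of monotone step functions is $|\mathcal{S}_n|=\binom{2n-1}{n}$, which is exponential but not ``superexponential'': at $n=13$ this is about $5.2\times 10^{6}$, so roughly $2.7\times 10^{13}$ pairs $(\theta,\beta)$. The paper exploits the symmetry $F(\theta,\beta)=F(\beta,\theta)$ and a further dominance reduction (it suffices to check $\Theta_i\ge B_i^{-1}$ for all $i$) and simply enumerates; the full sweep completes in about one day on a multicore workstation. Your proposed polynomial-size dynamic program is therefore unnecessary, and as sketched it is also not obviously sound: the $\alpha_v$-term is indexed by $y_v$ and involves $\theta(y_v)$ and $\beta^{-1}(y_v)$, while the $\alpha_u$-term is indexed by $y_u$ and involves $\beta(y_u)$ and $\theta^{-1}(y_u)$; both terms couple $\theta$ and $\beta$ globally through the inverses, so the claimed ``clean separation across cells once the shaded region is fixed'' would need a real argument.
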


\section{Computer-Aided Competitive Analysis}
\label{section:computer-aided-competitive-analysis}

This section focuses on a special class of functions $g$ and $h$, i.e., the step functions. 
The motivation is the tractability of the resulting optimization problem.
While the computational complexity is exponential in the number of steps $n$, we show that computer-assisted analyses are possible for small $n$, which already yields a much better competitive ratio than the analytical approach in \Cref{app:analytical-competitive-analysis}. 
This computer-aided approach is inspired by \cite{PT25}, who considered discretization with LP-based optimization for vertex-weighted online bipartite matching with random arrivals.

\subsection{Discretization}
\label{Subsection:discretization}

Let $n$ be a positive integer and define $[n] \defeq \{1, 2, \dots, n\}$. 
We consider functions $g$ and $h$ that are $n$-segment step functions defined as follows.

\begin{definition}[Step Function]
\label{definition:n-seg-step-function}

We say $f : [0, 1] \to \Real_{\ge 0}$ is an $n$-segment step function if there are constants $F_1, F_2, \dots, F_n \in \Real_{\ge 0}$ such that
\[
    f(x) = F_i~, \quad x\in \left[\frac{i - 1}{n}, \frac{i}{n}\right),\quad i\in [n]~.
\]
\end{definition}

By definition, $n$-segment step functions are right-continuous. 
We follow the convention that for any $n$-segment step function denoted by a lower-case letter, the corresponding upper-case letter represents the values of the segments.
For example, for an $n$-segment step function $g$, let
\[
    G_i = g\bigg(\frac{i - 1}{n}\bigg)~, i\in[n] ~.
\]

In line with previous sections, we consider non-increasing function $g$ with boundary value $g(1) = G_{n + 1} = 0$ in the subsequent discussion. Correspondingly, we also artificially extend the marginal ranks
with $\theta(1) = \Theta_{n + 1} =1 $ and $ \beta(1) = B_{n + 1} = 1$.

\begin{lemma}
    \label{lemma:step-marginal-ranks}
    If $g$ and $h$ are $n$-segment step functions, so are the marginal ranks $\theta$ and $\beta$.
    Further, the values of the marginal ranks are multiples of $\frac{1}{n}$ between $0$ and $1$ for any vertex rank $y \in [0, 1)$.
\end{lemma}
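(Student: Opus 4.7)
The plan is to exploit two structural observations specific to step-function $g$ (the role of $h$ does not enter since marginal ranks depend only on $g$): first, the inverse $g^{-1}$ takes values only in the finite grid $\{0, \tfrac{1}{n}, \tfrac{2}{n}, \dots, 1\}$; and second, on each segment $\bigl[\tfrac{i-1}{n}, \tfrac{i}{n}\bigr)$, the matching $M(\bot, y_v)$ is invariant as $y_v$ varies. Plugging these facts into the formula $\theta(y_v) = g^{-1}\bigl(g(y_p) w_{pv}/w_{uv}\bigr)$ from \Cref{eqn:marginal-rank} will immediately yield that $\theta$ is an $n$-segment step function whose values are multiples of $\tfrac{1}{n}$, and the same argument handles $\beta$ after swapping the roles of $u$ and $v$.

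For the first observation, I would unfold \Cref{definition:inverse-function} using \Cref{definition:n-seg-step-function}. With $g$ equal to $G_j$ on $\bigl[\tfrac{j-1}{n}, \tfrac{j}{n}\bigr)$ and extended by $g(1) = G_{n+1}=0$, for any $t \ge 0$ let $i_\star = \min\{j \in [n+1] : G_j \le t\}$, which exists because $G_{n+1}=0$. Monotonicity of the $G_j$'s then gives $\{x\in[0,1]: g(x)\le t\} = \bigl[\tfrac{i_\star-1}{n}, 1\bigr]$, so $g^{-1}(t) = \tfrac{i_\star-1}{n} \in \{0, \tfrac{1}{n}, \dots, 1\}$.

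For the second observation, fix a segment index $i \in [n]$ and two ranks $y_v, y'_v \in \bigl[\tfrac{i-1}{n}, \tfrac{i}{n}\bigr)$. With all other vertices' ranks held fixed, every edge not incident to $v$ has the same perturbed weight in the two runs, and every edge $(u', v)$ incident to $v$ has its perturbed weight scaled by the common factor $g(y_v) = g(y'_v) = G_i$. All perturbed weights therefore differ between the two runs by a single multiplicative constant on $v$'s star (and are identical off the star), so the induced ordering under the pre-drawn infinitesimal tie-breaking is identical, whence $M(\bot, y_v) = M(\bot, y'_v)$. In particular, $v$'s partner $p$ and its rank $y_p$ are constant on the segment, using the dummy convention $w_{pv}=0$ when $v$ is unmatched.

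Combining the two observations, the argument of $g^{-1}$ in \Cref{eqn:marginal-rank} is constant on each segment $\bigl[\tfrac{i-1}{n}, \tfrac{i}{n}\bigr)$, so $\theta$ is constant there with a value in $\{0, \tfrac{1}{n}, \dots, 1\}$; this directly yields both parts of the lemma on $[0,1)$, and the boundary convention $\theta(1)=1$ is trivially a multiple of $\tfrac{1}{n}$. The unmatched case is consistent: when $v$ has no partner, $g(y_p) w_{pv}/w_{uv}=0$ and $g^{-1}(0)=1$ because $g(y)>0$ for $y<1$, giving $\theta(y_v)=1$. I do not anticipate any genuine obstacle; the main point that requires care is ensuring that the tie-breaking infinitesimals are drawn independently of $y_v$ at the outset, so that the invariance of orderings on each segment is truly preserved across the comparison of the two runs.
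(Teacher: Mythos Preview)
Your proposal is correct and follows essentially the same approach as the paper: both arguments use that $g^{-1}$ lands on the grid $\{0,\tfrac{1}{n},\dots,1\}$ and that the argument $g(y_p)w_{pv}/w_{uv}$ in \Cref{eqn:marginal-rank} is constant on each segment of $y_v$. Your second observation---that $M(\bot,y_v)$ is invariant across a segment because $g(y_v)$ is constant there---is exactly what the paper is using when it asserts in one line that ``$\frac{g(y_p)\cdot w_{pv}}{w_{uv}}$ is an $n$-segment step function in $y_v$''; you have simply made that step explicit, and your tie-breaking caveat is unnecessary since $g(y_v)=g(y_v')$ implies the perturbed weights are literally identical in the two runs.
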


\begin{proof}
By symmetry, we only need to prove the lemma for $\theta$. 
Consider any $y_v \in [0, 1)$. 
Let $p$ be the vertex matched with $v$ in $M(\bot, y_v)$;
let $p$ be a dummy with zero edge-weights if $v$ is unmatched in $M(\bot, y_v)$. 
Recall that
\[
    \theta(y_v) 
    ~=~
    g^{-1}\left(\frac{g(y_p)\cdot w_{pv}}{w_{uv}}\right) 
    ~=~
    \inf\Big\{\, x \in [0, 1] : g(x) \le \frac{g(y_p)\cdot w_{pv}}{w_{uv}} \,\Big\}
    ~.
\]
Since $g$ is an $n$-segment step function, $\frac{g(y_p)\cdot w_{pv}}{w_{uv}}$ is an $n$-segment step function in $y_v$, and thus, so is $\theta$ by the above definition.
Further, the infimum always returns a multiple of $\frac{1}{n}$, because rounding $x$ down to the closest multiple of $\frac{1}{n}$ yields the same $g(x)$.
\end{proof}

\Cref{lemma:step-marginal-ranks} allows us to focus on a \emph{finite} (yet exponential) number of possible marginal ranks.
Since we will repeatedly refer to the set of possible marginal ranks, let us introduce a notation $\mathcal{S}_n$ to denote the set of non-decreasing $n$-segment step functions with range $\{0, \frac{1}{n}, \dots, \frac{n-1}{n}, 1\}$.
The conclusion of \Cref{lemma:step-marginal-ranks} can then be restated as $\theta, \beta \in \mathcal{S}_n$. 
By the definition of inverse functions (\Cref{definition:inverse-function}), the inverse threshold ranks $\theta^{-1}$ and $\beta^{-1}$ also belong to $\mathcal{S}_n$.
Observe that the size of $\mathcal S_n$ is equal to the number of $n$-element multi-set out of $n$ different elements. 
Therefore, we have $|\mathcal S_n| = \binom{2n - 1}{n} \le \sum_{i = 0}^{2n-1}\binom{2n - 1}{i} = 2^{2n - 1}$. Thus, we have the following fact.

\begin{fact}\label{fact:num-of-seg-step-function}
    $|\mathcal{S}_n| = O ( 4^n )$.
\end{fact}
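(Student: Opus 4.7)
The plan is to reduce the count to a classical stars-and-bars computation. Each element of $\mathcal{S}_n$ is a non-decreasing $n$-segment step function whose range lies in the finite set $\{0, 1/n, 2/n, \ldots, 1\}$ of size $n+1$, so it is fully specified by the tuple $(F_1, \ldots, F_n)$ of segment values, subject only to the monotonicity constraint $F_1 \le F_2 \le \cdots \le F_n$. My first step is therefore to make this encoding explicit and argue that the map $f \mapsto (F_1, \ldots, F_n)$ is a bijection between $\mathcal{S}_n$ and the set of such monotone tuples over an $(n+1)$-element alphabet.

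The key step is to recognize any such non-decreasing tuple as an $n$-element multi-set drawn from a set of $n+1$ symbols, and apply the standard stars-and-bars identity to count these multi-sets as $\binom{(n+1)+n-1}{n} = \binom{2n}{n}$. If, in addition, some boundary segment value turns out to be forced by the conventions introduced for the marginal ranks (for instance, the artificial extension $\Theta_{n+1} = B_{n+1} = 1$, or a structural reason why $F_1 = 0$), the effective alphabet on the $n$ free segments shrinks and the count drops to $\binom{2n-1}{n}$ or less; this refinement does not affect the order of magnitude of the bound.

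Finally, to pass to the stated asymptotic, I would invoke the elementary inequality $\binom{2n}{n} \le \sum_{i=0}^{2n}\binom{2n}{i} = 2^{2n} = 4^n$, which yields $|\mathcal{S}_n| = O(4^n)$ as claimed. The entire argument is a direct combinatorial encoding followed by a textbook binomial bound, so I do not anticipate any genuine obstacle. The only bookkeeping point worth verifying carefully is that the encoding is indeed onto the set of all monotone tuples (i.e., every such tuple arises from some $f \in \mathcal{S}_n$), which is immediate from the definition of an $n$-segment step function in \Cref{definition:n-seg-step-function}.
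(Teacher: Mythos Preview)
Your proposal is correct and follows essentially the same route as the paper: encode a function in $\mathcal{S}_n$ by its non-decreasing tuple of segment values, count such tuples as multi-sets via stars-and-bars, and bound the resulting central binomial coefficient by the full row sum $2^{2n}$ (the paper writes $\binom{2n-1}{n} \le 2^{2n-1}$, using an $n$-element alphabet rather than your $n{+}1$, but the asymptotic conclusion is identical).
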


Next, we substitute $n$-segment step functions $g$, $h$, and the corresponding threshold ranks to the lower bound in \Cref{lemma:universal-lower-bound} for general functions $g$ and $h$.

\begin{lemma}[Discretization Lower Bound]\label{lemma:discretization-lower-bound}
    Suppose $g$ and $h$ are $n$-segment step functions. 
    For any marginal ranks $\theta, \beta \in \mathcal{S}_n$, $\frac{\E[\alpha_u + \alpha_v]}{w_{uv}}$ is at least
    \begin{equation}
        \label{eqn:step-function-bound}
        \begin{aligned}
            \frac{1}{n} \sum_{i = 1}^n\big(\Theta_i -  B_i^{-1}\big)^+ & + \frac{1}{n} \sum_{i = 1}^{n} \big(1 - (\Theta_i - B_i^{-1})^+\big) \cdot H_i \cdot G_{n\Theta_i + 1} \\
            & + \frac{1}{n} \sum_{i = 1}^n \big(1 - (B_i - \Theta_i^{-1})^+ \big) \cdot H_i \cdot G_{nB_i + 1}
            ~,
        \end{aligned}
    \end{equation}
    where recall that $\Theta_i = \theta(\frac{i - 1}{n})$, $\Theta_i^{-1} = \theta^{-1}(\frac{i - 1}{n})$, $B_i = \beta(\frac{i - 1}{n})$, $B_i^{-1} = \beta^{-1}(\frac{i -1 }{n})$, $H_i = h(\frac{i -1 }{n})$, and $G_i = g(\frac{i -1 }{n})$ by our convention.
\end{lemma}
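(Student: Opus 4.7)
The plan is to reduce the Discretization Lower Bound to a direct specialization of the Universal Lower Bound in \Cref{lemma:universal-lower-bound}. Every ingredient on the right-hand side of that lemma is a right-continuous step function of $y_v$ (or $y_u$) that is constant on each sub-interval $\left[\tfrac{i-1}{n}, \tfrac{i}{n}\right)$, so each of the three integrals collapses to a sum of $n$ equal-width pieces of length $\tfrac{1}{n}$, and the indices of the resulting sum match the definitions of $\Theta_i, B_i, H_i, G_j$ by construction.

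First, I would check the step-function structure of every factor. By assumption $h$ is an $n$-segment step function taking value $H_i$ on $\left[\tfrac{i-1}{n}, \tfrac{i}{n}\right)$, and the hypothesis $\theta, \beta, \theta^{-1}, \beta^{-1} \in \mathcal{S}_n$ ensures these four functions are constant on the same subintervals, with values $\Theta_i, B_i, \Theta_i^{-1}, B_i^{-1}$ respectively. Hence $\big(\theta(y_v) - \beta^{-1}(y_v)\big)^+$ equals the constant $(\Theta_i - B_i^{-1})^+$ on the $i$-th subinterval, and symmetrically for $\big(\beta(y_u) - \theta^{-1}(y_u)\big)^+$.

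The only non-trivial point is evaluating the composed factors $g(\theta(y_v))$ and $g(\beta(y_u))$. Because $\Theta_i \in \{0, \tfrac{1}{n}, \dots, \tfrac{n-1}{n}, 1\}$, write $\Theta_i = k/n$ for the integer $k = n\Theta_i \in \{0, 1, \dots, n\}$. For $k < n$, the point $k/n$ lies in $\left[\tfrac{k}{n}, \tfrac{k+1}{n}\right)$, so the definition of $g$ as a step function gives $g(k/n) = G_{k+1} = G_{n\Theta_i + 1}$. For the boundary case $k = n$, i.e.\ $\Theta_i = 1$, the convention $g(1) = 0 = G_{n+1}$ makes the same formula $g(\Theta_i) = G_{n\Theta_i + 1}$ hold. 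An identical argument yields $g(\beta(y_u)) = G_{nB_i + 1}$ on the $i$-th subinterval for $y_u$.

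With these observations in hand, I would split each integral as $\int_0^1 = \sum_{i=1}^n \int_{(i-1)/n}^{i/n}$. Since the integrand is constant on each subinterval, every sub-integral equals $\tfrac{1}{n}$ times that constant value, and summing over $i$ produces exactly the three sums in \eqref{eqn:step-function-bound}. There is no real analytic or combinatorial obstacle; the main point requiring care is the bookkeeping that turns $g(\Theta_i)$ and $g(B_i)$ into $G_{n\Theta_i + 1}$ and $G_{nB_i + 1}$ uniformly across interior and boundary cases via the convention $G_{n+1} = 0$.
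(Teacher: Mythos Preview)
Your proposal is correct and follows essentially the same approach as the paper: apply the Universal Lower Bound (\Cref{lemma:universal-lower-bound}), observe that every factor in each integrand is an $n$-segment step function, and collapse each integral into a finite sum of $n$ pieces of width $\tfrac{1}{n}$. Your treatment of the index bookkeeping $g(\Theta_i) = G_{n\Theta_i+1}$, including the boundary case $\Theta_i = 1$ via $G_{n+1} = 0$, is in fact more explicit than the paper's own proof.
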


\begin{proof}
    We verify that the three terms in the lemma correspond to those in \Cref{lemma:universal-lower-bound}.
    
    Since $\theta$ and $\beta^{-1}$ are $n$-segment functions, we have
    \[
        \int_0^1 \Big( \theta(y_v) - \beta^{-1}(y_v) \Big)^+\d y_v 
        ~=~
        \sum_{i = 1}^{n} \int_\frac{i-1}{n}^\frac{i}{n} \Big( \theta(y_v) - \beta^{-1}(y_v) \Big)^+\d y_v 
        ~=~ 
        \frac{1}{n} \sum_{i = 1}^{n}\Big( \Theta_i - B^{-1}_i \Big)^+
        ~.
    \]
    
    Since $\theta$, $\beta^{-1}$, $h$ and $g$ are $n$-segment step functions, we have
    \begin{align*}
        & \int_0^1 \Big(1 - \big(\theta(y_v) - \beta^{-1}(y_v)\big)^+ \Big) \cdot h(y_v) \cdot  g(\theta(y_v)) \,\d y_v \\ 
        & \qquad
        =~ \sum_{i = 1}^{n} \int_\frac{i- 1}{n}^\frac{i}{n}  \Big(1 - \big(\theta(y_v) - \beta^{-1}(y_v)\big)^+ \Big) \cdot h(y_v) \cdot g(\theta(y_v)) \,\d y_v \\
        & \qquad
        =~ \frac{1}{n} \sum_{i = 1}^{n} \Big(1 - \big(\Theta_i - B^{-1}_i\big)^+ \Big) \cdot H_i \cdot G_{n\Theta_i + 1} 
        ~.
    \end{align*}

    Symmetrically, we can show that 
    \begin{equation*}
        \int_0^1 \Big(1 - \big(\beta(y_u) - \theta^{-1}(y_u)\big)^+\Big)\cdot h(y_u) \cdot g(\beta(y_u)) \,\d y_u 
        \,=\,
        \frac{1}{n} \sum_{i = 1}^n \big(1 - (B_i - \Theta_i^{-1})^+ \big) \cdot H_i \cdot G_{nB_i + 1}~. \qedhere
    \end{equation*}
\end{proof}

\subsection{Optimization-based Approach}
\label{Subsection:optimization-based-approach}

By the framework of randomized primal-dual analysis (\Cref{lemma:primal_dual}) and the bound in \Cref{lemma:discretization-lower-bound}, it remains to choose $n$-segment step functions such that \Cref{eqn:step-function-bound} is at least $F$ for any threshold ranks $\theta, \beta \in \mathcal{S}_n$, for the largest possible competitive ratio $F$.
This can be formulated as a QCQP below with $2n+1$ variables $F$, $G_1$ to $G_n$, and $H_1$ to $H_n$:

\begin{align*}
    \max \quad & F \\
    \text{s.t.} \quad & 
    \eqref{eqn:step-function-bound} ~\ge~ F 
    & & \forall \theta, \beta \in \mathcal{S}_n \\
    & H_i \cdot G_j + H_j \cdot G_i \le 1 & & \forall i, j \in [n] \\
    & G_1 \ge G_2 \ge \dots \ge G_n \ge G_{n+1} = 0 \\
    & 0 \le H_1 \le H_2 \le \dots \le H_n
\end{align*}

Our numerical results show that, up to appropriate scaling, the optimal solution for the above QCQP fits the general form given in \Cref{lemma:general-h-satisfy-constraint}.
To automate appropriate scaling, we solve the QCQP with two extra normalizing constraints:
\[
    G_1^2 + H_1^2 = 1 
    ~,\quad
    G_1 \ge H_1
    ~,
\]
so that $(G_1, H_1)$ lies on the unit circle, and below the 45-degree line $G = H$.

The problem has $2n +1$ variables and at most $O(4^{2n})$ constraints by \Cref{fact:num-of-seg-step-function}.
However, we observe that many of the constraints are redundant and can be removed to reduce the program size.
See \Cref{app:redundancy-reduction} for details.
Our first approach is to solve the QCQP using Gurobi~\cite{Gurobi} on an M2 Mac Pro with 24 physical cores and 192GB of memory.
At $n = 5$, the resulting competitive ratio is $0.6389$, which is already greater than $1-\frac{1}{e}$ and the bound obtained through the analytical method in \Cref{theorem:analytical-lower-bound}.
With this approach, $n = 7$ is the largest number of segments for which we can solve the QCQP directly, with a competitive ratio of $0.6487$.

For $n \ge 8$, both memory size and time become bottlenecks. 
Our second approach relies on the constraint-generation method, which is standard for solving programs with a smaller number of variables but a large number of constraints.
Concretely, we start with a relaxed program that only keeps a subset of the first constraint for $\theta(y) \equiv 1$.
After solving the relaxed program, we check the first constraints for all $\theta, \beta \in \mathcal{S}_n$, and construct the next relaxed program by adding back the constraints violated by at least half of the maximum violation.
We repeat this process until all constraints are satisfied. 

The constraint-generation approach allows us to solve QCQP relaxations with significantly fewer constraints.
For example, at $n = 7$, the original QCQP has about $2,000,000$ quadratic constraints, even after eliminating redundant combinations, including halving the number of combinations by the symmetry of $\theta$ and $\beta$.
In contrast, the constraint-generation approach only solves QCQPs with at most $57,000$ quadratic constraints, and finishes within one minute.
Using this approach, we are able to solve the QCQP for $n = 8$ and $9$.

\begin{table}[th]
    \renewcommand{\arraystretch}{1.15}
    \centering
    \caption{Summary of results from optimization-based methods}
    \label{tab:optimization-results}
    \medskip
    \begin{tabular}{lcccccc}
    \toprule
    Number of segments $n$ & 4 & 5 & 6 & 7 & 8 & 9\\
    \midrule
    Competitive ratio $F$ & $0.6321$ & $0.6389$ & $0.6447$ & $0.6487$ & $0.6515$ & $0.6537$ \\
    Time (direct) & $<$ 1 sec & 2 sec & 30 sec & 11 min \\
    \phantom{Time} (constr.\ gen.) & & & & 1 min & 32 min & 1 day \\
    \bottomrule
    \end{tabular}
\end{table}

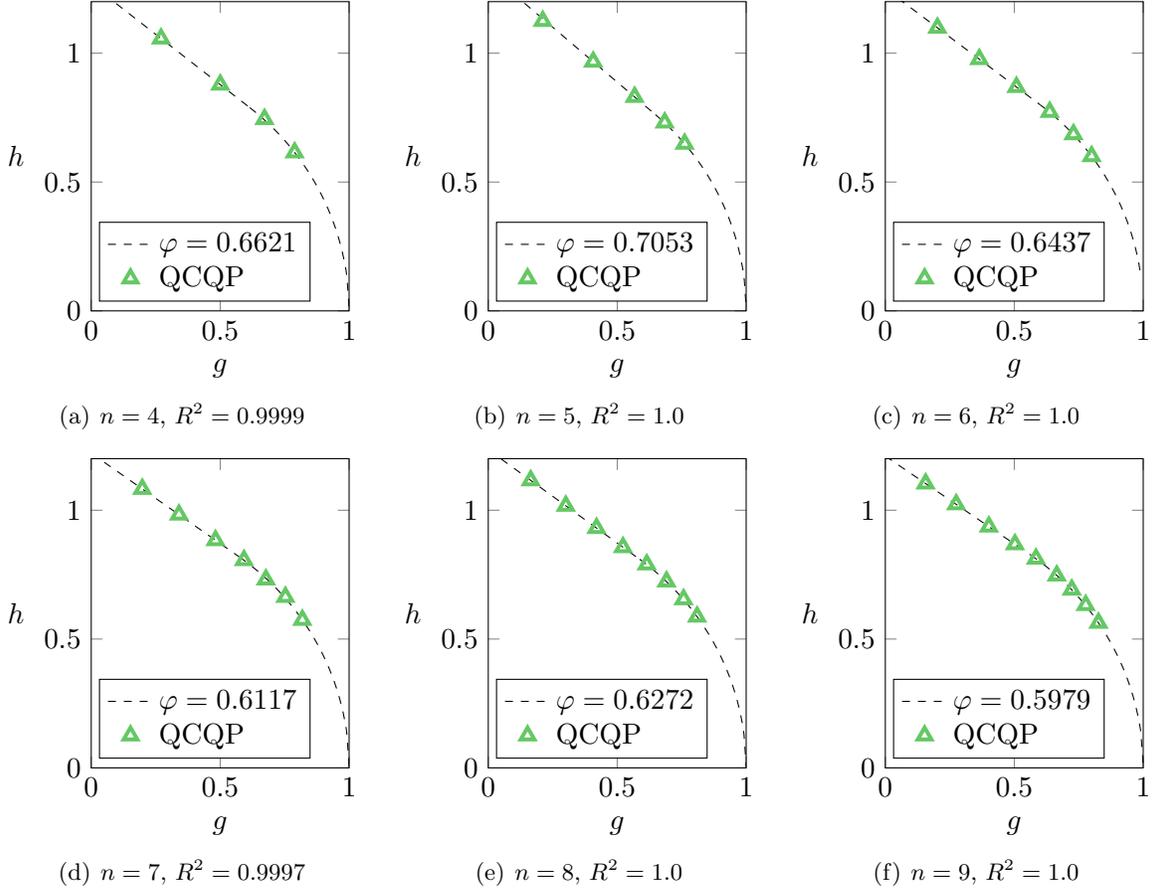
\begin{figure}[th]
    \centering
    \subfigure[$n=4$, $R^2 = 0.9999$]{
    \begin{tikzpicture}
        \begin{axis} [
            width=.4\textwidth,
            xmin=0,
            xmax=1.0,
            xtick distance=0.5,
            xlabel={$g$},
            ymin=0,
            ymax=1.2,
            ylabel={$h$},
            ylabel style={rotate=-90},
            unit vector ratio=1 1 1,
            legend pos=south west,
            legend cell align={left},
        ]
            \addplot+[forget plot,dashed,draw=black,domain=0.6147:1,smooth,samples=50,mark=none] {sqrt(1-x^2)};
            \addplot[dashed,draw=black,domain=0:0.6148] {(1 - x * 0.6147) / 0.7887};
            \addlegendentry{$\varphi = 0.6621$};
            \addplot+ [
                only marks,
                mark options={mark=triangle,color=myGreen,mark size=3,line width=1.5pt},
            ] coordinates {
                (0.7887, 0.6148)
                (0.6719, 0.7442)
                (0.5004, 0.8779)
                (0.2708, 1.0568)
            };
            \addlegendentry{QCQP};
        \end{axis}
    \end{tikzpicture}
    }
    \subfigure[$n=5$, $R^2 = 1.0$]{
    \begin{tikzpicture}
        \begin{axis} [
            width=.4\textwidth,
            xmin=0,
            xmax=1.0,
            xtick distance=0.5,
            xlabel={$g$},
            ymin=0,
            ymax=1.2,
            ylabel={$h$},
            ylabel style={rotate=-90},
            unit vector ratio=1 1 1,
            legend pos=south west,
            legend cell align={left},
        ]
            \addplot+[forget plot,dashed,draw=black,domain=0.6483:1,smooth,samples=50,mark=none] {sqrt(1-x^2)};
            \addplot[dashed,draw=black,domain=0:0.6483] {(1 - x * 0.6483) / 0.7614};
            \addlegendentry{$\varphi = 0.7053$};
            \addplot+ [
                only marks,
                mark options={mark=triangle,color=myGreen,mark size=3,line width=1.5pt},
            ] coordinates {
                (0.7614, 0.6483)
                (0.6842, 0.7308)
                (0.5675, 0.8302)
                (0.4069, 0.9669)
                (0.2113, 1.1250)                
            };
            \addlegendentry{QCQP};
        \end{axis}
    \end{tikzpicture}
    }
    \subfigure[$n=6$, $R^2 = 1.0$]{
    \begin{tikzpicture}
        \begin{axis} [
            width=.4\textwidth,
            xmin=0,
            xmax=1.0,
            xtick distance=0.5,
            xlabel={$g$},
            ymin=0,
            ymax=1.2,
            ylabel={$h$},
            ylabel style={rotate=-90},
            unit vector ratio=1 1 1,
            legend pos=south west,
            legend cell align={left},
        ]
            \addplot+[forget plot,dashed,draw=black,domain=0.6002:1,smooth,samples=50,mark=none] {sqrt(1-x^2)};
            \addplot[dashed,draw=black,domain=0:0.6002] {(1 - x * 0.6002) / 0.7999};
            \addlegendentry{$\varphi = 0.6437$};
            \addplot+ [
                only marks,
                mark options={mark=triangle,color=myGreen,mark size=3,line width=1.5pt},
            ] coordinates {
                (0.7998, 0.6002)
                (0.7296, 0.6853)
                (0.6373, 0.7720)
                (0.5085, 0.8687)
                (0.3644, 0.9768)  
                (0.2021, 1.0986)
            };
            \addlegendentry{QCQP};
        \end{axis}
    \end{tikzpicture}
    }

    \subfigure[$n=7$, $R^2 = 0.9997$]{
    \begin{tikzpicture}
        \begin{axis} [
            width=.4\textwidth,
            xmin=0,
            xmax=1.0,
            xtick distance=0.5,
            xlabel={$g$},
            ymin=0,
            ymax=1.2,
            ylabel={$h$},
            ylabel style={rotate=-90},
            unit vector ratio=1 1 1,
            legend pos=south west,
            legend cell align={left},
        ]
            \addplot+[forget plot,dashed,draw=black,domain=0.5743:1,smooth,samples=50,mark=none] {sqrt(1-x^2)};
            \addplot[dashed,draw=black,domain=0:0.5743] {(1 - x * 0.5743) / 0.8187};
            \addlegendentry{$\varphi = 0.6117$};
            \addplot+ [
                only marks,
                mark options={mark=triangle,color=myGreen,mark size=3,line width=1.5pt},
            ] coordinates {
                (0.8187, 0.5743)
                (0.7531, 0.6635)
                (0.6773, 0.7311)
                (0.5923, 0.8061)
                (0.4819, 0.8835)  
                (0.3403, 0.9828)
                (0.1980, 1.0826)
            };
            \addlegendentry{QCQP};
        \end{axis}
    \end{tikzpicture}
    }
    \subfigure[$n=8$, $R^2 = 1.0$]{
    \begin{tikzpicture}
        \begin{axis} [
            width=.4\textwidth,
            xmin=0,
            xmax=1.0,
            xtick distance=0.5,
            xlabel={$g$},
            ymin=0,
            ymax=1.2,
            ylabel={$h$},
            ylabel style={rotate=-90},
            unit vector ratio=1 1 1,
            legend pos=south west,
            legend cell align={left},
        ]
            \addplot+[forget plot,dashed,draw=black,domain=0.5869:1,smooth,samples=50,mark=none] {sqrt(1-x^2)};
            \addplot[dashed,draw=black,domain=0:0.5869] {(1 - x * 0.5869) / 0.8097};
            \addlegendentry{$\varphi = 0.6272$};
            \addplot+ [
                only marks,
                mark options={mark=triangle,color=myGreen,mark size=3,line width=1.5pt},
            ] coordinates {
                (0.8096, 0.5869)
                (0.7573, 0.6542)
                (0.6912, 0.7234)
                (0.6145, 0.7896)
                (0.5224, 0.8564)  
                (0.4198, 0.9308)
                (0.3007, 1.0171)
                (0.1646, 1.1158)
            };
            \addlegendentry{QCQP};
        \end{axis}
    \end{tikzpicture}
    }
    \subfigure[$n=9$, $R^2 = 1.0$]{
    \begin{tikzpicture}
        \begin{axis} [
            width=.4\textwidth,
            xmin=0,
            xmax=1.0,
            xtick distance=0.5,
            xlabel={$g$},
            ymin=0,
            ymax=1.2,
            ylabel={$h$},
            ylabel style={rotate=-90},
            unit vector ratio=1 1 1,
            legend pos=south west,
            legend cell align={left},
        ]
            \addplot+[forget plot,dashed,draw=black,domain=0.5629:1,smooth,samples=50,mark=none] {sqrt(1-x^2)};
            \addplot[dashed,draw=black,domain=0:0.5629] {(1 - x * 0.5629) / 0.8265};
            \addlegendentry{$\varphi = 0.5979$};
            \addplot+ [
                only marks,
                mark options={mark=triangle,color=myGreen,mark size=3,line width=1.5pt},
            ] coordinates {
                (0.8265, 0.5629)
                (0.7769, 0.6318)
                (0.7230, 0.6916)
                (0.6648, 0.7465)
                (0.5842, 0.8120)  
                (0.5025, 0.8677)
                (0.4016, 0.9364)
                (0.2748, 1.0228)
                (0.1563, 1.1034)
            };
            \addlegendentry{QCQP};
        \end{axis}
    \end{tikzpicture}
    }    
    \caption{The optimal $n$-segment step functions $g$ and $h$ with the rightmost mark begin $(G_1, H_1)$, the general form in \Cref{lemma:general-h-satisfy-constraint} with the best $\varphi$, and the coefficient of determination ($R^2$)}
    \label{fig:optimization-results}
\end{figure}

\Cref{tab:optimization-results} summarizes our optimization-based results.
\Cref{fig:optimization-results} shows the optimal $n$-segment step functions $g$ and $h$ for $4 \le n \le 9$, and the best value of $\varphi$ for fitting their values according to the general form presented in \Cref{lemma:general-h-satisfy-constraint}.

\subsection{Verification-based Approach: Proof of Main Theorem}
\label{Subsection:verification-based-approach}

Finally, given the highly structured optimal step functions $g$ and $h$ presented in \Cref{fig:optimization-results}, we can guess approximately optimal choices of $n$-segment step functions $g$ and $h$ for larger values of $n$.
Iterating over all combinations of marginal ranks $\theta$ and $\beta$ is much faster than solving the optimization problem. 
Formally, for any given $n$-segment step functions $g$ and $h$, by \Cref{lemma:discretization-lower-bound}, we compute the resulting competitive ratio as
\begin{align*}
    F = \min_{\theta, \beta\in \mathcal S_n} \eqref{eqn:step-function-bound}~.
\end{align*}

We can verify the competitive ratio of our guessed $g$ and $h$ within a day up to $n = 13$.
Computing the ratio from $n = 14$ takes more than 10 days and only leads to an improvement of less than $10^{-4}$.
\Cref{tab:main-theorem} presents the guessed $13$-segment step functions $g$ and $h$, leading to a verified competitive ratio of $0.659$ and proving \Cref{thm:main}.
\Cref{tab:verification-results} demonstrates the competitive ratios from the verification-based approach for $7 \le n \le 13$.

\begin{table}[th]
    \renewcommand{\arraystretch}{1.15}
    \centering
    \caption{Summary of results from verification-based method}
    \label{tab:verification-results}
    \medskip
    \begin{tabular}{lccccccc}
    \toprule
    Number of segments $n$ & 7 & 8 & 9 & 10 & 11 & 12 & 13 \\
    \midrule
    Competitive ratio $F$ & $0.6479$ & $0.6506$ & $0.6529$ & $0.6549$ & $0.6565$ & $0.6575$ & $0.6590$ \\
    Time & \multicolumn{2}{c}{$<$ 1 sec} & 2 sec & 26 sec & 6 min & 90 min & 1 day \\
    \bottomrule
    \end{tabular}
\end{table}

\begin{table}[th]
    \renewcommand{\arraystretch}{1.15}
    \centering
    \caption{Guessed $13$-segment step functions $g$ and $h$ (up to $10^{-4}$) for proving \Cref{thm:main}}
    \label{tab:main-theorem}
    \medskip
    \begin{tabular}{@{\hskip 0.1in}c@{\hskip 0.5in}c@{\hskip 0.5in}c@{\hskip 0.1in}}
    \toprule
    $i$ & $G_i$ & $H_i$ \\
    \midrule
    $1$ & $0.8200$ & $0.5724$ \\
    $2$ & $0.7883$ & $0.6152$ \\
    $3$ & $0.7530$ & $0.6580$ \\
    $4$ & $0.7139$ & $0.7002$ \\
    $5$ & $0.6708$ & $0.7416$ \\
    $6$ & $0.6237$ & $0.7817$ \\
    $7$ & $0.5724$ & $0.8200$ \\
    $8$ & $0.5152$ & $0.8599$ \\
    $9$ & $0.4498$ & $0.9055$ \\
    $10$ & $0.3763$ & $0.9569$ \\
    $11$ & $0.2945$ & $1.0140$ \\
    $12$ & $0.2045$ & $1.0767$ \\
    $13$ & $0.1064$ & $1.1453$ \\
    \bottomrule
    \end{tabular}
\end{table}

\section{Hardness Results}
\label{sec:hardness_bipartite}

In this section, we provide upper bounds for any (randomized) algorithms for the oblivious bipartite matching problem, showing that no (randomized) algorithm can achieve a competitive ratio strictly larger than $0.7961$.
All hard instances we provide in this section are unweighted.
Thus, the upper bounds apply to all three models: unweighted, vertex-weighted, and edge-weighted.
Moreover, our bounds hold under the assumption that the underlying graph (but not the identities of the vertices) is known to the algorithm.
We also extend our hard instance and analysis to general graphs and provide an upper bound of $0.7583$.

\subsection{Warm-up Analysis}
\label{ssec:warm-up-bipartite}

We first do a warm-up analysis by giving a hard instance with four vertices, for which no algorithm can achieve a competitive ratio strictly larger than $7/8 = 0.875$.
The result will also serve as a building block for the later analysis (for establishing the upper bound of $0.8241$).

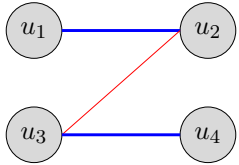
\begin{figure}[htb]
\begin{center}
    \resizebox{0.2\textwidth}{!}{
    \begin{tikzpicture}

	\draw [fill = gray!30] (2.5,1.5) circle (0.4); 
        \node at (2.5,1.5) {$u_1$};
        \draw [fill = gray!30] (2.5,0) circle (0.4); 
        \node at (2.5,0) {$u_3$};
        
	\draw [fill = gray!30] (5,1.5) circle (0.4); 
        \node at (5,1.5) {$u_2$};
	\draw [fill = gray!30] (5,0) circle (0.4); 
        \node at (5,0) {$u_4$};
        
	\draw [color=blue, very thick] (2.9,1.5)--(4.6,1.5); %
	\draw [color=red] (2.9,0)--(4.6,1.5); %
	\draw [color=blue, very thick] (2.9,0)--(4.6,0); %
    \end{tikzpicture} }
\end{center}
\caption{A warm-up hard instance for bipartite graphs, where the edges from the perfect matching are presented in thick blue color while the other edge is colored red.}
\label{fig:warmup-hardness-bipartite}
\end{figure}

Recall that the algorithm can see two vertices on the left and two on the right, but not the edges between them.
By randomly shuffling the vertices and using Yao's Lemma, it suffices to analyze the expected performance of any deterministic algorithm, over the random instances.
Note that the deterministic algorithm can be \emph{adaptive}, e.g, depending on the result of the past queries, the next edge to be queried can be different.

\begin{lemma} \label{lemma:warm-up-bipartite}
    No algorithm can achieve a competitive ratio strictly larger than $7/8 = 0.875$ for the (unweighted) oblivious matching problem with the instance shown in Figure~\ref{fig:warmup-hardness-bipartite}. 
\end{lemma}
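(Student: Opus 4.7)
The plan is to apply Yao's principle: instead of analyzing a randomized algorithm on a fixed graph, fix the abstract graph (a path of length $3$: $u_1 - u_2 - u_3 - u_4$ realized in the bipartition $\{u_1, u_3\}$ vs.\ $\{u_2, u_4\}$) and uniformly at random relabel the two vertices on each side. The left side has two possible labelings and so does the right side, giving four equally likely identifications of the visible positions $(l_1, l_2, r_1, r_2)$ with the actual vertices $(u_1, u_3, u_2, u_4)$. Since the offline optimum is the perfect matching $\{(u_1,u_2),(u_3,u_4)\}$ of weight $2$ under every labeling, it suffices to prove that no deterministic adaptive algorithm achieves expected matching weight larger than $7/4$ over this random relabeling.

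Next, I would fix an arbitrary deterministic algorithm and observe that, by the symmetry of the labeling distribution, I may assume its first query is the pair $(l_1,r_1)$. Over the random relabeling, this pair is, with probability $1/4$ each, one of the four edges of $K_{2,2}$, namely $(u_1,u_2)$, $(u_1,u_4)$, $(u_3,u_2)$, $(u_3,u_4)$. I would then do a case analysis:
\begin{itemize}
    \item If $(l_1,r_1)$ realizes $(u_1,u_2)$ or $(u_3,u_4)$, the queried edge exists and is a matching edge; after this query only the pair $(l_2,r_2)$ remains as a legal query, and in both cases it corresponds to the other matching edge, which exists. The algorithm collects value $2$.
    \item If $(l_1,r_1)$ realizes $(u_1,u_4)$, the query fails and reveals the full labeling (since it identifies the unique non-edge of $K_{2,2}$). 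The algorithm can then query the two matching edges in turn and collect value $2$.
    \item If $(l_1,r_1)$ realizes the ``middle'' edge $(u_3,u_2)$, the query succeeds and matches $u_2$ and $u_3$; the only remaining legal query is $(l_2,r_2)$, which corresponds to the non-edge $(u_1,u_4)$, so the algorithm is stuck at value $1$.
\end{itemize}
Summing gives an expected value of $\frac{3}{4}\cdot 2 + \frac{1}{4}\cdot 1 = \frac{7}{4}$, so the competitive ratio is at most $\frac{7/4}{2} = \frac{7}{8}$.

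\textbf{Main obstacle.} The only genuinely non-routine point is justifying that the adaptive algorithm cannot extract extra information between the first and the second query. The argument here is simple because after the first query at most one legal pair of unmatched vertices remains, so the second-move ``strategy'' is forced, but I would spell this out carefully to make sure the adaptive case really reduces to the non-adaptive one. Beyond that, the invocation of Yao's principle (symmetrizing over the relabeling) and the observation that $\textsf{OPT}=2$ is labeling-independent are standard and should go through without complication.
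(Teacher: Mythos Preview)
Your proposal is correct and follows essentially the same argument as the paper: apply Yao's principle by uniformly relabeling each side, assume the first query is a fixed pair by symmetry, and do the identical four-way case split (two good edges giving $2$, the middle edge $(u_3,u_2)$ giving $1$, and the unique non-edge revealing the labeling and giving $2$) to obtain expected value $7/4$. The paper's write-up is slightly terser but the logic is the same; your additional remark that after a successful first query only one legal pair remains is exactly the observation that makes the adaptive case immediate.
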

\begin{proof}
    We use $\{a,c\}$ to name the two vertices on the left, $\{b,d\}$ for the two on the right (whose identities are unknown), and assume w.l.o.g. that the algorithm queries $(a,b)$ first.
    Clearly, the first edge $(a,b)$ being queried is uniformly distributed in $\{ (u_1,u_2), (u_1,u_4), (u_3,u_2), (u_3,u_4) \}$ due to the random shuffling, among which only $(u_1,u_4)$ is a \emph{null edge} (e.g., the query is unsuccessful).
    \begin{itemize}
        \item If the first query is successful and the edge is either $(u_1,u_2)$ or $(u_3,u_4)$ (which happens w.p. $1/2$), then the algorithm will match two edges;
        \item If the first query is successful and the edge is $(u_3,u_2)$ (which happens w.p. $1/4$), then the algorithm can match only one edge;
        \item If the first query is unsuccessful (which happens w.p. $1/4$), we are certain that $a = u_1$ and $b = u_4$. Therefore by querying $(a,d)$ and $(c,b)$ next, the algorithm matches two edges.
    \end{itemize}
    
    In summary, the expected number of edges the algorithm matches is $\frac{3}{4}\times 2 + \frac{1}{4}\times 1 = \frac{7}{4}$, which implies a competitive ratio of $\frac{7}{8}$ since the maximum matching has size $2$.
\end{proof}

It can be shown that the competitive ratio of the Ranking algorithm on the above instance is $0.875$, and thus is optimal.

\subsection{A Better Upper Bound}

Next we consider the instance shown in Figure~\ref{fig:hardness-bipartite}, which has six vertices and admits a unique perfect matching $\{ (u_1,u_2),(u_3,u_4),(u_5,u_6) \}$.
We show that no algorithm can do better than $89/108$-approximate for this instance.
Interestingly, it can be shown that the competitive ratio of Ranking on the above instance is $89/108$, and is optimal again.

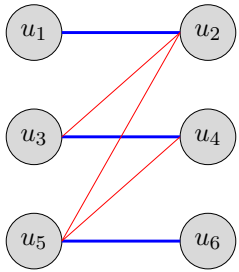
\begin{figure}[htb]
\begin{center}
    \resizebox{0.2\textwidth}{!}{
    \begin{tikzpicture}

	\draw [fill = gray!30] (2.5,3) circle (0.4); 
        \node at (2.5,3) {$u_1$};
	\draw [fill = gray!30] (2.5,1.5) circle (0.4); 
        \node at (2.5,1.5) {$u_3$};
        \draw [fill = gray!30] (2.5,0) circle (0.4); 
        \node at (2.5,0) {$u_5$};
        
	\draw [fill = gray!30] (5,3) circle (0.4); 
        \node at (5,3) {$u_2$};
	\draw [fill = gray!30] (5,1.5) circle (0.4); 
        \node at (5,1.5) {$u_4$};
	\draw [fill = gray!30] (5,0) circle (0.4); 
        \node at (5,0) {$u_6$};
        
	\draw [color=blue, very thick] (2.9,3)--(4.6,3); %
	\draw [color=blue, very thick] (2.9,1.5)--(4.6,1.5); %
	\draw [color=blue, very thick] (2.9,0)--(4.6,0); %
	\draw [color=red] (2.9,1.5)--(4.6,3); %
	\draw [color=red] (2.9,0)--(4.6,1.5); %
	\draw [color=red] (2.9,0)--(4.6,3); %
    \end{tikzpicture} }
\end{center}
\caption{Hard instance for bipartite graphs: There are six edges in the instance, among which $(u_1,u_2),(u_3,u_4),(u_5,u_6)$ appear in the (unique) perfect matching, and we call them \emph{good} edges (represented by the thick blue edges). We call the other three edges $(u_3,u_2),(u_5,u_2),(u_5,u_4)$ \emph{bad} edges (represented by the red edges).
For completeness, we call $(u_1,u_4),(u_1,u_6),(u_3,u_6)$ \emph{null} edges.}
\label{fig:hardness-bipartite}
\end{figure}

\begin{theorem} \label{theorem:hardness-bipartite}
    No algorithm can achieve a competitive ratio strictly larger than $\frac{89}{108}\leq 0.8241$ for the (unweighted) oblivious matching problem with the instance shown in Figure~\ref{fig:hardness-bipartite}.
\end{theorem}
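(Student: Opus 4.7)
The plan is to follow the template of Lemma 4.3 and apply Yao's principle. Since the maximum matching has size three, it suffices to exhibit a distribution over labelings (the uniform distribution over the $36$ labelings of the six vertices) and show that no deterministic adaptive algorithm matches more than $\frac{89}{36}$ edges in expectation against it; the claimed ratio $\frac{89}{108}$ then follows immediately.

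By vertex symmetry, the algorithm's first query may be assumed without loss of generality to be a fixed pair $(a, b)$ with $a \in L$, $b \in R$. Under the uniform labeling, $(a, b)$ is distributed uniformly over $\{u_1, u_3, u_5\} \times \{u_2, u_4, u_6\}$, partitioned into three good edges (probability $\tfrac{1}{3}$), three bad edges (probability $\tfrac{1}{3}$), and three null edges (probability $\tfrac{1}{3}$). The outcome of the first query both determines whether an edge is matched and updates the posterior over the labels of the remaining vertices; from there, I would build the continuation decision tree in each of these three top-level cases. For example, after a failed first query the algorithm has learned $a \in \{u_1, u_3\}$ and $b \in \{u_4, u_6\}$, and the existence probabilities for any next candidate query under the refined posterior are straightforward to compute. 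Summing the weighted matching sizes over the leaves of the optimal deterministic decision tree gives the target value $\tfrac{89}{36}$.

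Throughout, I would exploit the graph's obvious left-right swap symmetry (sending $u_i \mapsto u_{7-i}$, which preserves each edge class) to collapse isomorphic branches, and treat the algorithm as choosing, at each internal node, the next query that maximizes expected future matched edges. Where convenient, the warm-up Lemma 4.3 can be invoked as a building block once a branch reduces the remaining instance to the four-vertex configuration of Figure 1, avoiding re-derivation.

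The main obstacle is the sheer bookkeeping of the case analysis: even after symmetry reductions, several levels of the decision tree must be enumerated, and one must carefully track both the edges already matched and the conditional distribution over unrevealed labels. A useful sanity check and hint for the extremal strategy is that the classical Ranking algorithm attains exactly $\frac{89}{108}$ on this instance; so the bound is tight, the optimal deterministic strategy against the uniform distribution is essentially Ranking, and one can cross-verify each leaf of the analysis against Ranking's behavior on that labeling.
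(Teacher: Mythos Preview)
Your proposal is correct and follows essentially the same approach as the paper: Yao's principle with uniform random labeling, a three-way split on whether the first query hits a good, bad, or null edge (each with probability $\tfrac{1}{3}$), and reuse of the four-vertex warm-up bound as a building block. The paper's proof isolates the null-edge case as a separate lemma (your acknowledged ``main obstacle'') and carries out that case analysis in detail by distinguishing a second query on a \emph{fresh} pair versus an \emph{extending} pair; one minor caveat is that your remark that ``the optimal deterministic strategy is essentially Ranking'' conflates a randomized algorithm with a deterministic decision tree---Ranking's matching ratio certifies tightness but is not itself the optimal deterministic adversary strategy.
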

\begin{proof}
    We use $\{a,c,e\}$ and $\{b,d,f\}$ to name the three vertices on the left and right, respectively, and assume w.l.o.g. that $(a,b)$ is the first queried edge.
    \begin{itemize}
        \item If $(a,b)$ is a bad edge (which happens w.p.\ $1/3$), then algorithm (with any follow-up queries) is guaranteed to match exactly two edges;
        \item If $(a,b)$ is a good edge (which happens w.p.\ $1/3$), then the remaining instance (with vertices $a,b$ removed) is the same as the one shown in Figure~\ref{fig:warmup-hardness-bipartite}, for which no algorithm can match more than $7/4$ edges in expectation;
        \item The case when $(a,b)$ is a null edge is slightly more complex: while the algorithm does not match any edge nor conclude the identity of $a$ or $b$, there is useful information the algorithm can elicit to decide an optimal follow-up query sequence. We show in Lemma~\ref{lemma:first-edge-is-null-bipartite} that the expected number of edges the optimal query sequence can match is at most $8/3$.
    \end{itemize}

    \begin{lemma} \label{lemma:first-edge-is-null-bipartite}
        If the first edge $(a,b)$ queried by the algorithm is a null edge, the expected number of edges matched by the algorithm is at most $8/3$. 
    \end{lemma}

    We defer the proof of the lemma to the end of this section.
    Given the lemma, we can derive an upper bound on the competitive ratio of the algorithm as follows:
    \begin{equation*}
        \frac{1}{3}\times \left( \frac{1}{3}\times 2 + \frac{1}{3}\times (1+\frac{7}{4}) + \frac{1}{3}\times \frac{8}{3} \right) = \frac{89}{108} \leq 0.8241. \qedhere
    \end{equation*}
\end{proof}

Finally, we prove Lemma~\ref{lemma:first-edge-is-null-bipartite}.

\begin{proofof}{Lemma~\ref{lemma:first-edge-is-null-bipartite}}
    Recall that at the moment the algorithm has identified a null edge $(a,b)$, which could be any of $\{ (u_1,u_4),(u_1,u_6),(u_3,u_6) \}$, each with probability $1/3$.

    The remaining $8$ pairs of vertices can be categorized into two types:
    \begin{itemize}
        \item The pairs $\{(c,d),(c,f),(e,d),(e,f)\}$ have no endpoint in $\{a,b\}$. We call them \emph{fresh} pairs;
        \item Each of $\{(a,d),(a,f),(c,b),(e,b)\}$ has one endpoint in $\{a,b\}$. We call them \emph{extending} pairs.
    \end{itemize}

    \begin{figure}[htb]
    \begin{center}
        \resizebox{0.2\textwidth}{!}{
        \begin{tikzpicture}
    	\draw [fill = gray!30] (2.5,3) circle (0.4); 
            \node at (2.5,3) {$a$};
    	\draw [fill = gray!30] (2.5,1.5) circle (0.4); 
            \node at (2.5,1.5) {$c$};
            \draw [fill = gray!30] (2.5,0) circle (0.4); 
            \node at (2.5,0) {$e$};
    	\draw [fill = gray!30] (5,3) circle (0.4); 
            \node at (5,3) {$b$};
    	\draw [fill = gray!30] (5,1.5) circle (0.4); 
            \node at (5,1.5) {$d$};
    	\draw [fill = gray!30] (5,0) circle (0.4); 
            \node at (5,0) {$f$};
    	\draw [dotted] (2.9,3)--(4.6,3); %
    	\draw (2.9,3)--(4.6,1.5); %
    	\draw (2.9,3)--(4.6,0); %
    	\draw (2.9,1.5)--(4.6,3); %
    	\draw (2.9,0)--(4.6,3); %
    	\draw [dashed] (2.9,1.5)--(4.6,1.5); %
    	\draw [dashed] (2.9,1.5)--(4.6,0); %
    	\draw [dashed] (2.9,0)--(4.6,1.5); %
    	\draw [dashed] (2.9,0)--(4.6,0); %
        \end{tikzpicture} }
    \end{center}
    \caption{The current configuration: $(a,b)$ has been identified as a null edge. The dashed edges are fresh pairs while the solid edges are extending pairs.}
    \label{fig:configuration-after-null-bipartite}
    \end{figure}
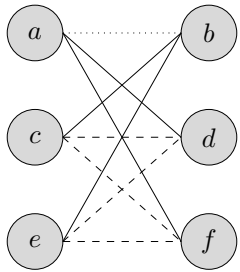

    See Figure~\ref{fig:configuration-after-null-bipartite} for an illustration of the current configuration.
    Due to the symmetry of the instance and the random shuffling, we only need to consider two types of algorithms: one that queries a fresh pair next and the other queries an extending pair next.

    We first show that the algorithm that queries a fresh pair matches at most $5/2$ edges in expectation, which is strictly smaller than $8/3$.
    Assume w.l.o.g. that the pair being queried is $(c,d)$.
    We show that the probability of $(c,d)$ being a bad edge is at least $1/2$.
    Recall that $(a,b)$ could be any of the three null edges $\{ (u_1,u_4),(u_1,u_6),(u_3,u_6) \}$.
    It can be verified that no matter which null edge $(a,b)$ is, among the four fresh pairs, at least two (sometimes three) of them are bad. Since the vertices are randomly shuffled, we have $\Pr[(c,d) \text{ is bad}] \geq 1/2$.
    Therefore, the expected number of edges matched by the algorithm (with any query sequence following $(a,b)$ and $(c,d)$) is at most
    \begin{equation*}
        \Pr[(c,d) \text{ is bad}]\times 2 + \Pr[(c,d) \text{ is not bad}]\times 3 \leq \frac{5}{2} < \frac{8}{3}.
    \end{equation*}

    Next we consider the algorithm that queries an extending pair, say $(a,d)$ (w.l.o.g.).
    \begin{itemize}
        \item The probability of $(a,d)$ being a bad edge is
        \begin{equation*}
            \frac{1}{3}\times 0 + \frac{1}{3}\times  0 + \frac{1}{3}\times \frac{1}{2} = \frac{1}{6},
        \end{equation*}
        where the three terms on the LHS correspond to the probability of $(c,d)$ being bad for the three cases of $(a,b)\in \{ (u_1,u_4),(u_1,u_6),(u_3,u_6) \}$.
        Note that if $(a,d)$ is bad then the algorithm is doomed to match exactly two edges;

        \item Similarly, the probability of $(a,d)$ being a null edge is
        \begin{equation*}
            \frac{1}{3}\times \frac{1}{2} + \frac{1}{3}\times \frac{1}{2} + \frac{1}{3}\times 0 = \frac{1}{3},
        \end{equation*}
        in which case the algorithm can assert that $a = u_1$ and $f = u_2$ (and match them with query $(a,f)$).
        However, the two vertices $\{c,e\}$ (resp. $\{b,d\}$) are still indistinguishable.
        Therefore, by the analysis we presented in Section~\ref{ssec:warm-up-bipartite}, the expected number of edges matched by the algorithm in this case is $1+7/4 = 11/4$.

        \item Finally, the probability of $(a,d)$ being a good edge is
        \begin{equation*}
            \frac{1}{3}\times \frac{1}{2} + \frac{1}{3}\times \frac{1}{2} + \frac{1}{3}\times \frac{1}{2} = \frac{1}{2},
        \end{equation*}
        in which case the algorithm matches $(a,d)$ and has four vertices $\{c,e\}\cup \{b,f\}$ remains.
        However, in this case $b$ and $f$ are different\footnote{For instance, $b$ cannot be $u_2$, and is more likely to be $u_6$ than being $u_4$.} and we distinguish the two algorithms, one that queries a pair involving $b$ while the other queries a pair involving $f$.
        Recall that so far the algorithm has identified a null edge $(a,b)$ and matched an edge $(a,d)$.
        While the algorithm is not informed of whether $(a,d)$ is good or bad, it is in fact w.l.o.g. to assume that it is a good edge, because for the case when $(a,d)$ is bad, such assumption is harmless (the algorithm is doomed to match two edges in total).
        Under this assumption, we can compute the probability of an edge incident to $b$ (resp. $f$) being bad:
        \begin{align*}
            \Pr[(c,d) \text{ is bad}] & = \frac{1}{3}\times \frac{1}{2} + \frac{1}{3}\times 0 + \frac{1}{3}\times 0 = \frac{1}{6}, \\
            \Pr[(e,f) \text{ is bad}] & = \frac{1}{3}\times 0 + \frac{1}{3}\times \frac{1}{2} + \frac{1}{3}\times \frac{1}{2} = \frac{1}{3}.
        \end{align*}
        Consequently, it is optimal to query an edge incident to $b$ next, which results in an expected total number of matched edges of $\frac{1}{6}\times 2 + \frac{5}{6} \times 3 = \frac{17}{6}$.
    \end{itemize}

    In summary, the expected number of matched edges is at most $\frac{1}{6}\times 2 + \frac{1}{3}\times \frac{11}{4} + \frac{1}{2}\times \frac{17}{6} = \frac{8}{3}$. 
\end{proofof}

\subsection{Computer-Aided Upper-Bound Analysis}

By our previous analysis, it is reasonable to expect an even smaller upper bound by further generalizing the hard instances in Figure~\ref{fig:hardness-bipartite} to graphs with more vertices.
Specifically, for any $n \ge 4$, we can construct hard instances with $n$ vertices on the left (namely $u_1, u_3, \dots, u_{2n - 1}$) and another $n$ vertices on the right (namely $u_2, u_4, \dots, u_{2n}$). 
Edges exist between $u_i\in L$ and $u_j\in R$ if and only if $i \geq j-1$. 
We call the hard instance $\mathcal H_n$.

However, we found that even $\mathcal H_4$ is already quite challenging to analyze.
Therefore, we adopt a backward dynamic programming approach that generalizes our previous analysis framework.
At any point in time during the execution of an algorithm, the outcome of the past queries defines the current configuration, based on which we can calculate the expected outcome of every unqueried edge by enumerating all possible embeddings of the underlying graph that agrees with the current configuration.
We introduce the details as follows.

Given the result of past queries, we classify every pair $(u,v)$ into three types, namely, (1) existing edge if $(u,v)$ is queried and the edge exists; (2) null edge if $(u,v)$ is queried but the edge does not exist; and (3) unqueried edge.
We call the current configuration a \emph{state} of the algorithm.
Our goal is to compute the expected number of edges matched by the \emph{optimal} strategy, under the current state.
Since the structure of the underlying graph is known but the identities of the vertices are not, the algorithm can first enumerate all $(2n)!$ possible embeddings of graph, and then keep only those that agrees with the current state.
Note that each of the kept embeddings happen with the same probability.
Then for every unqueried edge between two unmatched vertices, we can calculate the probability for its existence if it is hypothetically queried, which corresponds to the \emph{transit probability} to another state.
Note that the final state is when all edges between the unmatched vertices are queried, in which case the total number of matched edges are finalized.
 
Therefore under every state, the optimal strategy is to choose the edge that gives the maximum expected number of matched edges.
See Figure~\ref{fig:state-transition} for an illustrating example when $n=2$.

\begin{figure}[htb]
\begin{center}
    \resizebox{\textwidth}{!}{
    \begin{tikzpicture}
        \tikzset{rect/.style={rounded corners=3pt}}
        
        \node at (3.75, -1) {Current State};
        \node at (-2.15, -1.8) {Possible embeddings};

        \draw[color = blue, very thick] (11, 1.3) -- (11.3, 1.3) node[black, right, scale = 0.7] {Existing};
        \draw[color = black, very thick, dashed] (11, 0.75) -- (11.3, 0.75) node[black, right, scale = 0.7] {Not existing} ;
        \draw[color = black] (11, 0.2) -- (11.3, 0.2) node[black, right, scale = 0.7] {Unquried} ;

	\draw [fill = gray!30] (2.5,1.5) circle (0.4); 
        \node at (2.5,1.5) {$a$};
        \draw [fill = gray!30] (2.5,0) circle (0.4); 
        \node at (2.5,0) {$c$};
        
	\draw [fill = gray!30] (5,1.5) circle (0.4); 
        \node at (5,1.5) {$b$};
	\draw [fill = gray!30] (5,0) circle (0.4); 
        \node at (5,0) {$d$};
        
	\draw [color=black] (2.9,1.5)--(4.6,1.5); %
        \draw [color=black] (2.9,1.5)--(4.6,0); %
	\draw [color=black] (2.9,0)--(4.6,0); %
	\draw [color=blue, very thick] (2.9,0)--(4.6,1.5); %

        \draw [fill = gray!30] (8,2.5) circle (0.3); 
        \node at (8, 2.5) {$a$};
        \draw [fill = gray!30] (8,1.5) circle (0.3); 
        \node at (8, 1.5) {$c$};
        
	\draw [fill = gray!30] (9.5,2.5) circle (0.3); 
        \node at (9.5, 2.5) {$b$};
	\draw [fill = gray!30] (9.5,1.5) circle (0.3); 
        \node at (9.5, 1.5) {$d$};
        
	\draw [color=black] (8.3, 2.5)--(9.2,2.5); %
        \draw [color=blue, very thick] (8.3, 2.5)--(9.2,1.5); %
	\draw [color=black] (8.3,1.5)--(9.2,1.5); %
	\draw [color=blue, very thick] (8.3, 1.5)--(9.2,2.5); %

        \draw[black, very thick, ->] (9.9, 2) -- (10.2, 2) node [right] {$\dots$};

        \draw [fill = gray!30] (8,0) circle (0.3); 
        \node at (8, 0) {$a$};
        \draw [fill = gray!30] (8,-1) circle (0.3); 
        \node at (8, -1) {$c$};
        
	\draw [fill = gray!30] (9.5,0) circle (0.3); 
        \node at (9.5, 0) {$b$};
	\draw [fill = gray!30] (9.5,-1) circle (0.3); 
        \node at (9.5, -1) {$d$};

	\draw [color=black] (8.3, 0)--(9.2,0); %
        \draw [color=black, very thick, dashed] (8.3, 0)--(9.2,-1); %
	\draw [color=black] (8.3,-1)--(9.2,-1); %
	\draw [color=blue, very thick] (8.3, -1)--(9.2,0); %

        \draw[black, very thick, ->] (9.9, -0.5) -- (10.2, -0.5) node [right] {$\dots$};

        \draw [fill = gray!30] (-1.5,2.5) circle (0.3); 
        \node at (-1.5, 2.5) {$a$};
        \draw [fill = gray!30] (-1.5,1.5) circle (0.3); 
        \node at (-1.5, 1.5) {$c$};
        
	\draw [fill = gray!30] (0,2.5) circle (0.3); 
        \node at (0, 2.5) {$b$};
	\draw [fill = gray!30] (0,1.5) circle (0.3); 
        \node at (0, 1.5) {$d$};

	\draw [color=black, very thick, dashed] (-1.2, 2.5)--(-0.3,2.5); %
        \draw [color=blue, very thick] (-1.2, 2.5)--(-0.3, 1.5); %
	\draw [color=blue, very thick] (-1.2,1.5)--(-0.3, 1.5); %
	\draw [color=blue, very thick] (-1.2,1.5)--(-0.3, 2.5); %

        \draw [fill = gray!30] (-4.3,2.5) circle (0.3); 
        \node at (-4.3, 2.5) {$a$};
        \draw [fill = gray!30] (-4.3,1.5) circle (0.3); 
        \node at (-4.3, 1.5) {$c$};
        
	\draw [fill = gray!30] (-2.8,2.5) circle (0.3); 
        \node at (-2.8, 2.5) {$b$};
	\draw [fill = gray!30] (-2.8,1.5) circle (0.3); 
        \node at (-2.8, 1.5) {$d$};

	\draw [color=blue, very thick] (-4, 2.5)--(-3.1,2.5); %
        \draw [color=blue, very thick] (-4, 2.5)--(-3.1,1.5); %
	\draw [color=black, very thick, dashed] (-4,1.5)--(-3.1,1.5); %
	\draw [color=blue, very thick] (-4,1.5)--(-3.1,2.5); %

        \draw [fill = gray!30] (-2.9,0) circle (0.3); 
        \node at (-2.9, 0) {$a$};
        \draw [fill = gray!30] (-2.9,-1) circle (0.3); 
        \node at (-2.9,-1) {$c$};
        
	\draw [fill = gray!30] (-1.4,0) circle (0.3); 
        \node at (-1.4,0) {$b$};
	\draw [fill = gray!30] (-1.4,-1) circle (0.3); 
        \node at (-1.4,-1) {$d$};

	\draw [color=blue, very thick] (-2.6, 0)--(-1.7,0); %
        \draw [color=black, very thick, dashed] (-2.6, 0)--(-1.7,-1); %
	\draw [color=blue, very thick] (-2.6,-1)--(-1.7,-1); %
	\draw [color=blue, very thick] (-2.6, -1)--(-1.7,0); %

        \draw[double, double distance=0.5pt, line width=1.2pt, ->] (1.6,0.75) -- (0.8,0.75);

        \draw[black, very thick] (5.8, 0.75) -- (7, 0.75);
        \node[scale = 0.7] at (6.4, 1.4) {Query};
        \node[scale = 0.7] at (6.4, 1) {$(a, d)$};

        \draw[black, very thick] (7, 0.75) arc (270:360:0.2);
        \draw[black, very thick] (7.2, 0.95) -- (7.2, 1.8);
        \draw[black, very thick] (7.2, 1.8) arc (180:90:0.2);
        \draw[black, very thick, ->] (7.4, 2) -- (7.6, 2); 
        \node [scale = 0.7] at (7, 2.4) {w.p. $\displaystyle\frac{2}{3}$};
        
        \draw[black, very thick] (7, 0.75) arc (90:0:0.2);
        \draw[black, very thick] (7.2, 0.55) -- (7.2, -0.3);
        \draw[black, very thick] (7.2, -0.3) arc (180:270:0.2);
        \draw[black, very thick, ->] (7.4, -0.5) -- (7.6, -0.5); 
        \node [scale = 0.7] at (7, -0.9) {w.p. $\displaystyle\frac{1}{3}$};
    \end{tikzpicture} }
\end{center}
\caption{State transition: Given the current state, the figure shows the transition when querying $(a, d)$. Conditioned on $(c, b)$ being an existing edge, we can list the possible embedding of the underlying graph that agrees with the current state on the right, and compute the expected outcome if we query $(a,d)$ next.}
\label{fig:state-transition}
\end{figure}
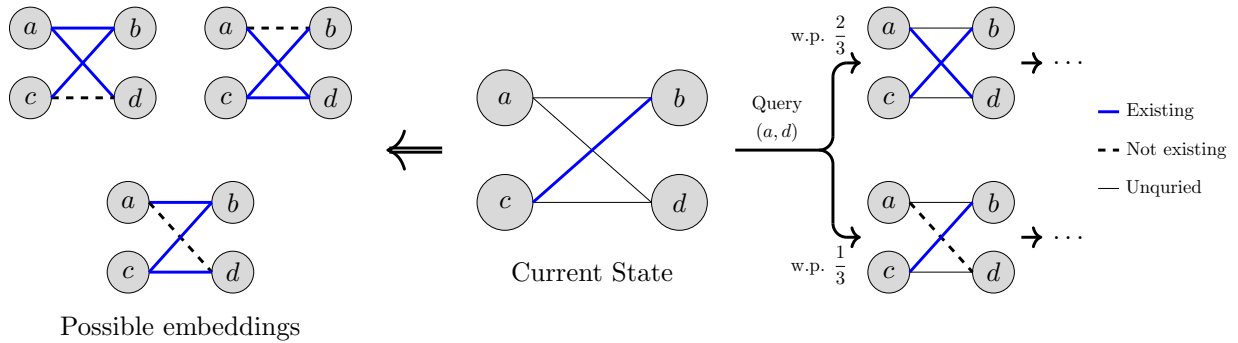

\begin{theorem}\label{theorem:hardness-dp}
    No algorithm can achieve a competitive ratio strictly larger than $0.7961$ for the (unweighted) oblivious bipartite matching problem with the instance $\mathcal H_6$. 
\end{theorem}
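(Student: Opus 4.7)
By Yao's lemma, it suffices to bound the expected size of the matching produced by an optimal \emph{deterministic adaptive} algorithm when the vertex identities of $\mathcal{H}_6$ are drawn from a uniformly random labeling of the $6+6$ vertices. The plan is to compute this expected size \emph{exactly} via backward dynamic programming over the state space of information the algorithm can possess, mirroring (and mechanizing) the hand analysis of \Cref{theorem:hardness-bipartite}.

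First, I would formalize a \emph{state} as a triple $(E^+, E^-, M)$ consisting of the set of queried pairs known to be edges, the set of queried pairs known to be null, and the induced matching $M$ (equivalently, the transcript of query outcomes up to reordering of still-indistinguishable vertices). Given a state $S$, let $\mathcal{E}(S)$ denote the set of permutations $\pi$ of the labels $\{a,b,\dots\}$ to $\{u_1,\dots,u_{12}\}$ such that the underlying graph $\mathcal{H}_6$ restricted to queried pairs agrees with $(E^+, E^-)$; by the random labeling, every $\pi \in \mathcal{E}(S)$ is equally likely conditioned on reaching $S$. For each unqueried pair $(x,y)$ whose endpoints are both unmatched in $M$, the conditional probability that the next query succeeds is $p_{x,y}(S) = |\{\pi \in \mathcal{E}(S) : (\pi(x),\pi(y)) \in \mathcal{H}_6\}|/|\mathcal{E}(S)|$, and the state transitions to either $S \cup \{(x,y) \in E^+, M \cup (x,y)\}$ or $S \cup \{(x,y) \in E^-\}$ accordingly.

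Second, I would define $V(S)$ to be the maximum expected final matching size achievable from state $S$, giving the recurrence
\begin{equation*}
V(S) \,=\, \max_{(x,y)\text{ admissible}} \Big( p_{x,y}(S)\cdot \big(1 + V(S^+_{x,y})\big) + (1-p_{x,y}(S))\cdot V(S^-_{x,y}) \Big),
\end{equation*}
with the base case $V(S) = 0$ when no admissible pair remains. The competitive ratio upper bound is $V(S_0)/6$ where $S_0$ is the empty-transcript initial state, since $\mathcal{H}_6$ has a unique perfect matching of size $6$. This recurrence is a clean generalization of the case analyses used in \Cref{lemma:warm-up-bipartite} and \Cref{lemma:first-edge-is-null-bipartite}, where the three cases (bad/good/null first edge) correspond exactly to transitions out of $S_0$ and the subsequent sub-problems were $V$-values of smaller states solved by hand.

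The main obstacle is the blow-up of the state space: naively there are super-exponentially many transcripts on $36$ possible query pairs, and enumerating $\mathcal{E}(S)$ by brute force over $12!$ permutations for every state is prohibitive. I would address this in three ways. (i) \emph{Symmetry reduction}: states that differ only by a relabeling of still-indistinguishable unmatched left (resp.\ right) vertices have the same value, so I would canonicalize each state by sorting vertices according to their induced query-outcome signature before memoization. (ii) \emph{Pruning admissible queries}: at each state I only need to consider one representative from each orbit of $(x,y)$ under the symmetry group of the state, drastically reducing the branching. (iii) \emph{Efficient transition probabilities}: rather than re-enumerating $\mathcal{E}(S)$ from scratch, I would maintain it incrementally, or compute $p_{x,y}(S)$ via a small subroutine that counts consistent labelings using the bipartite adjacency constraints of $\mathcal{H}_6$. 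With these optimizations the DP is straightforward to implement, and evaluating $V(S_0)/6$ on $\mathcal{H}_6$ will yield the numerical bound $0.7961$ claimed in the statement; the correctness of this bound reduces to the correctness of the recurrence (which is immediate from the definition of adaptive algorithms and Yao's principle) and the correctness of the implementation, which I would cross-check against the closed-form values $\tfrac{7}{8}$ for $\mathcal{H}_2$ (\Cref{lemma:warm-up-bipartite}) and $\tfrac{89}{108}$ for $\mathcal{H}_3$ (\Cref{theorem:hardness-bipartite}).
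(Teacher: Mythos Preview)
Your proposal is correct and matches the paper's approach: the paper proves this theorem via exactly the backward dynamic programming you describe, with states given by the transcript of query outcomes, transition probabilities computed by enumerating all embeddings of $\mathcal{H}_6$ consistent with the current state, and the value function taken as the maximum expected matching size over admissible next queries. The paper does not spell out your symmetry-reduction and canonicalization optimizations, but otherwise the method (including the cross-check against $\mathcal{H}_3$) is the same.
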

\begin{table}[th]
    \renewcommand{\arraystretch}{1.15}
    \centering
    \caption{Summary of upper bounds from $\mathcal H_3$ to $\mathcal H_6$.}
    \label{tab:upper-bounds-results}
    \medskip
    \begin{tabular}{lccccc}
    \toprule
    Number of vertices on each side $n$ & 3 & 4 & 5 & 6 \\
    \midrule
    Upper bound ratio  & $0.8241$ & $0.8047$ & $0.7981$ & $0.7961$ \\
    Time & \multicolumn{2}{c}{$<$ 1 sec}  & 1 sec & 3 min \\
    \bottomrule
    \end{tabular}
\end{table}

We run the DP program on an M2 Mac Pro with 24 physical cores and 192GB memory. \Cref{tab:upper-bounds-results} demonstrates the competitive ratios for $3 \le n \le 6$.  
It can be verified that the competitive ratios of Ranking on these instances match the upper bounds. Thus Ranking is optimal again.

\subsection{Upper Bound for General Graphs}
\label{ssec:hardness_general}

One advantage of our dynamic programming approach is that it does not rely on the graph being bipartite. 
Therefore, we can apply the same technique to upper bound the competitive ratios for general graphs.
In the following, we construct a non-bipartite hard instance and establish an upper bound of $0.7583$ for the competitive ratio of any (randomized adaptive) algorithm.

Consider the hard instance $\mathcal{\hat{H}}_n$ with $2n$ vertices $\{ u_1,u_2,\ldots,u_{2n} \}$, where there is an edge between $u_i$ and $u_j$ if and only if $i$ is odd and $i \geq j-1$.
See Figure~\ref{fig:hardness-general} for the examples of $\mathcal{\hat{H}}_2$ and $\mathcal{\hat{H}}_3$.

\begin{figure}[th]
    \centering
    \subfigure{
    \begin{tikzpicture}
        \draw [fill = gray!30] (2.5,1.5) circle (0.4); 
        \node at (2.5,1.5) {$u_1$};
        \draw [fill = gray!30] (2.5,0) circle (0.4); 
        \node at (2.5,0) {$u_3$};
        
	\draw [fill = gray!30] (5,1.5) circle (0.4); 
        \node at (5,1.5) {$u_2$};
	\draw [fill = gray!30] (5,0) circle (0.4); 
        \node at (5,0) {$u_4$};
        
	\draw [color=blue, very thick] (2.9,1.5)--(4.6,1.5); %
	\draw [color=blue, very thick] (2.9,0)--(4.6,0); %
	\draw [color=red] (2.5,1.1)--(2.5,0.4); %
	\draw [color=red] (2.83,0.2)--(4.6,1.5); %
    \end{tikzpicture}
    }
    \qquad\qquad
    \subfigure{
    \begin{tikzpicture}
	\draw [fill = gray!30] (2.5,3) circle (0.4); 
        \node at (2.5,3) {$u_1$};
	\draw [fill = gray!30] (2.5,1.5) circle (0.4); 
        \node at (2.5,1.5) {$u_3$};
        \draw [fill = gray!30] (2.5,0) circle (0.4); 
        \node at (2.5,0) {$u_5$};
	\draw [fill = gray!30] (5,3) circle (0.4); 
        \node at (5,3) {$u_2$};
	\draw [fill = gray!30] (5,1.5) circle (0.4); 
        \node at (5,1.5) {$u_4$};
	\draw [fill = gray!30] (5,0) circle (0.4); 
        \node at (5,0) {$u_6$};
        
	\draw [color=blue, very thick] (2.9,3)--(4.6,3); %
	\draw [color=blue, very thick] (2.9,1.5)--(4.6,1.5); %
	\draw [color=blue, very thick] (2.9,0)--(4.6,0); %
	\draw [color=red] (2.5,2.6)--(2.5,1.9); %
	\draw [color=red] (2.5,1.1)--(2.5,0.4); %
	\draw [color=red] (2.2,2.75) to [out=225,in=135] (2.2,0.25); %
	\draw [color=red] (2.83,1.75)--(4.6,3); %
	\draw [color=red] (2.83,0.2)--(4.6,1.5); %
	\draw [color=red] (2.75,0.32)--(4.6,3); %
    \end{tikzpicture}
    }
    \caption{Examples of hard instances $\mathcal{\hat{H}}_n$ for $n=2,3$.}
    \label{fig:hardness-general}
\end{figure}
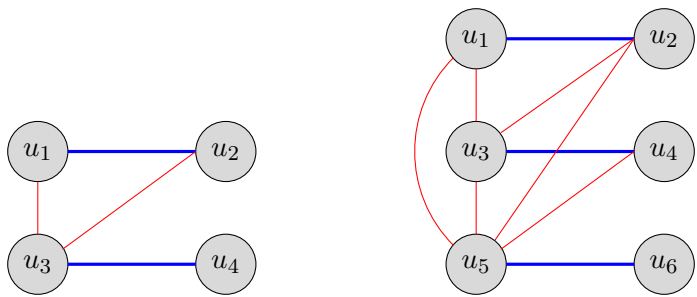

We remark that the hard instance $\mathcal{\hat{H}}_2$ was considered in~\cite{GT12}, where the state-of-the-art upper bound of $19/24 \approx 0.7917$ was established.
Deploying our dynamic programming approach, we show an improved upper bound of $91/120 \approx 0.7583$, with the hard instance being $\mathcal{\hat{H}}_3$.
Interestingly, it can be verified that Ranking achieves competitive ratios of $19/24$ and $91/120$ for $\mathcal{\hat{H}}_2$ and $\mathcal{\hat{H}}_3$ respectively, and is optimal again.

\begin{theorem}\label{theorem:hardness-dp-general}
    No algorithm can achieve a competitive ratio strictly larger than $0.7583$ for the (unweighted) oblivious matching problem with the instance $\mathcal{\hat{H}}_3$. 
\end{theorem}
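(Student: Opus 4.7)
The plan is to directly extend the dynamic programming approach developed for the bipartite case (used to prove \Cref{theorem:hardness-dp}) to the non-bipartite instance $\mathcal{\hat{H}}_3$. First, by Yao's minimax principle together with a random relabeling of the $6$ vertices, it suffices to bound from above the expected matching size of any deterministic adaptive algorithm over a uniformly random embedding of $\mathcal{\hat{H}}_3$ that preserves the graph's structure (which is public to the algorithm, but whose vertex identities are hidden). The offline maximum matching has size $3$, namely $\{(u_1,u_2),(u_3,u_4),(u_5,u_6)\}$, so the competitive ratio is the expected number of matched edges divided by $3$.

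Following the framework illustrated in \Cref{fig:state-transition}, I would define a \emph{state} of the algorithm to be the outcome of its queries so far, i.e., a partition of the pairs of distinct vertices (on the algorithm's side) into three classes: existing edges (queried and realized, hence matched), null edges (queried and not realized), and unqueried pairs, together with the identification of which vertices are already matched. For each state $S$, one enumerates all $(2n)! = 720$ permutations of the true labels $\{u_1,\ldots,u_6\}$ and keeps only those consistent with $S$; each consistent embedding is equiprobable. For any unqueried pair $(x,y)$ between two unmatched vertices, the conditional probability that it is a true edge is the fraction of consistent embeddings in which $(x,y)$ maps to an edge of $\mathcal{\hat{H}}_3$; querying $(x,y)$ transitions $S$ to one of two successor states, weighted by this probability. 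Let $V(S)$ denote the optimal expected number of additional matches achievable from $S$. Then
\begin{equation*}
    V(S) ~=~ \max_{(x,y)} \Big( \Pr[(x,y) \text{ exists} \mid S] \cdot \big( 1 + V(S_{(x,y),1}) \big) + \Pr[(x,y) \text{ null} \mid S] \cdot V(S_{(x,y),0}) \Big),
\end{equation*}
with the base case $V(S)=0$ when no unmatched pair has any chance of being an edge. The desired upper bound is $V(S_0)/3$, where $S_0$ is the initial empty-query state.

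The main obstacle is purely computational: the state space for $\mathcal{\hat{H}}_3$ is substantially larger than for $\mathcal{H}_3$ because $\mathcal{\hat{H}}_3$ is not bipartite, so there is no a priori two-sided symmetry to collapse states, and the number of potentially queryable pairs among $6$ vertices is $\binom{6}{2}=15$ instead of $9$. To keep the DP tractable, I would (i) memoize states up to the automorphism group of the \emph{current} induced subgraph on unmatched vertices, together with the partial information encoded by the null edges, and (ii) prune by the observation that once a vertex has been eliminated (matched or isolated from further useful queries) the remaining subproblem is the induced instance on the surviving vertices. Given the $3$-minute runtime reported for $\mathcal{H}_6$ in the bipartite case, the same hardware should handle $\mathcal{\hat{H}}_3$ easily.

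Once the DP terminates, I expect $V(S_0) = 91/40$, so that the competitive ratio is at most $(91/40)/3 = 91/120 \le 0.7583$, matching the value mentioned in the text. To complement this, I would also evaluate the performance of the Ranking algorithm on $\mathcal{\hat{H}}_3$ and verify it achieves exactly $91/120$, confirming both the optimality of Ranking on this instance (as observed in the text) and the tightness of the DP-produced bound.
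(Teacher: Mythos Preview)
Your proposal is correct and follows essentially the same approach as the paper: the paper explicitly states that its dynamic programming framework does not rely on bipartiteness, applies it verbatim to $\mathcal{\hat{H}}_3$, and reports the resulting bound $91/120 \approx 0.7583$ (with the matching Ranking value noted as well). Your description of the state space, the Bellman recursion, and the final value $V(S_0)=91/40$ all align with the paper's computer-aided verification.
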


\begin{table}[th]
    \renewcommand{\arraystretch}{1.15}
    \centering
    \caption{Summary of upper bounds for $\mathcal{\hat{H}}_2$ and $\mathcal{\hat{H}}_3$.}
    \label{tab:upper-bounds-results-general}
    \medskip
    \begin{tabular}{lccc}
    \toprule
    Number of vertices on each side $n$ & 2 & 3 \\
    \midrule
    Upper bound ratio  & $0.7917$ & $0.7583$  \\
    Time & \multicolumn{2}{c}{$<$ 1 sec}   \\
    \bottomrule
    \end{tabular}
\end{table}

\section{Discussion}\label{section:discussion}

In this section, we provide some discussions on our algorithms, and make a comparison between our algorithm with the existing ones.

\subsection{Existing Algorithms}\label{subsection:existing-algorithm}

\paragraph{Ranking Algorithm.}
This $(1-\frac{1}{e})$-competitive Online Bipartite Matching algorithm \cite{KVV90,AGKM11,DJK13} can be adopted as a non-adaptive Oblivious Bipartite Matching algorithm for the unweighted and vertex-weighted special cases, with the same competitive ratio of $1-\frac{1}{e}$.

First, consider unweighted matching. 
Ranking starts by sampling a \emph{rank} $y_u \in [0, 1)$ uniformly and independently for each vertex $u \in L$ on the left (the offline vertices in Online Bipartite Matching).
Then, it processes the vertices $v \in R$ on the right (the online vertices) one at a time by an arbitrary order (the arrival order), matching each $v$ to the unmatched neighbor with the smallest rank.
We can imitate how Ranking matches $v$ in the Oblivious Bipartite Matching model by querying $v$'s incident edges in the ascending order of the neighbors' ranks.

Next, consider the problem of (offline-side) vertex-weighted matching, where each vertex $u \in L$ on the left has a non-negative weight $w_u$.
In other words, all edges incident to $u$ have the same weights $w_u$.
Ranking again draws ranks $y_u$ for vertices $u \in L$ on the left, and processes vertices $v \in R$ on the right one at a time.
For each $v \in R$, it queries the edges incident to $v$ in the descending order of $w_u \cdot f(y_u)$, for a discount function $f(y) = 1 - e^{y - 1}$ derived from the analysis. 

\paragraph{Two-Sided Ranking Algorithm.}
Ranking has a competitive ratio larger than $1-\frac{1}{e}$ in the random-order model of Online Bipartite Matching \cite{MY11,KMT11,HTWZ19,JW21,PT25}, where the online vertices $v \in R$ arrive in a random order rather than an arbitrary one. 
It is folklore that the random arrival of online vertices is equivalent to drawing arrival time $y_v \in [0, 1)$ uniformly and independently for online vertices $v \in R$, and the arrival time of a vertex can be seen as its rank.
Indeed, for unweighted matching, the \algoname algorithm (with a strictly decreasing function $g$) is equivalent to the original Ranking algorithm with a random order of online vertices.%
\footnote{The earliest online vertex $v$ (i.e., the one with the smallest rank) would be matched by \algoname with the neighbor $u$ with the smallest rank, because $u$'s other neighbors have larger rank than $v$ and thus the corresponding edges rank lower. Conditioned on this, the second earliest online vertex would be matched with the neighbor other than $u$ with the smallest rank, and so forth.}
Hence, the best known competitive ratio of $0.696$ \cite{MY11} applies.
For vertex-weighted matching, the existing approaches~\cite{HTWZ19,JW21,PT25} treated the online and offline vertices asymmetrically.
These algorithms consider the online vertices in the ascending order of their arrival time/ranks.
For some function $f$ and for each online vertex $v$, they query $v$'s incident edges $(u, v)$ in descending order of $w_u \cdot f(y_u, y_v)$, which equals $v$'s gain if it is matched with $u$.
The bivariate function $f$ in these works cannot be decomposed into the product of two univariate functions of $y_u$ and $y_v$.

\paragraph{Perturbed Greedy.} 
Since we are not constrained by an arrival order of the vertices in the Oblivious Bipartite Matching problem, it is more natural to query the edges $(u, v)$ directly in the descending order of $w_{uv} \cdot f(y_u, y_v)$.
Tang et al.~\cite{TWZ23} gave a simple proof of a $1-\frac{1}{e}$ competitive ratio for edge-weighted Oblivious Bipartite Matching using function $f(y_u, y_v) = 1 - e^{y_u-1}$, where $w_{uv} \cdot f(y_u, y_v)$ is $v$'s gain if it is matched with $u$, similar to the above vertex-weighted approaches.
They also borrowed an idea from a variant of the Ranking algorithm for Fully Online Matching~\cite{HKTWZZ20} and considered $f(y_u, y_v) = 1 - \min \{ y_u, y_v \}$ for general graphs, and let $w_{uv} \cdot f(y_u, y_v)$ be the gain of the vertex with smaller rank between $u$ and $v$.

\bigskip

All approaches above introduce artificial asymmetries among the vertices, even though the Oblivious Bipartite Matching problem is fully symmetric.
Further, they unnecessarily bind two different concepts---the perturbed weight of an edge and the gain of an endpoint.
This binding introduces an unnatural non-monotonicity, as a vertex $u$ may gain more from the match with another vertex $v$ when it has a smaller rank $y_u$, but intuitively $u$ should offer more to $v$ to justify the smaller rank/higher priority.

By contrast, the new approach in this paper restores full symmetry and removes the binding of the two concepts by defining them as quadratic combinations of two univariate functions $g$ and $h$.
The perturbed function $f(y_u, y_v) = g(y_u) \cdot g(y_v)$ is symmetric, while a vertex $u$'s gain $w_{uv} \cdot h(y_u) \cdot g(y_v)$ is only partially bind with the perturbed weights through the common function $g$ and the ranks of the vertices.

\subsection{Deriving the Quadratic Forms from First Principles}
\label{subsection:aximatic}

This subsection shows that the quadratic forms of perturbed weights and guaranteed gains in this paper are the unique choice that satisfy a set of natural properties.

\paragraph{Primal-Dual Perturbed Greedy.}
We first present a general framework capturing \algoname as a special case. 
Let $\rank : [0, 1)^2 \to \Real_{>0}$ and $\share : [0, 1)^2 \to \Real_{>0}$ be two bivariate functions, with the latter function satisfying $\share(x, y) + \share(x, y)\le 1$ for any $x, y \in [0, 1)$.
Consider a Primal-Dual Perturbed Greedy algorithm as follows.
In the beginning, draw a rank $y_v \in [0, 1)$ uniformly and independently for each vertex $v \in V$. 
Then, query the edges $(u, v)$ in the descending order of their perturbed weights defined as
\[
    \hat w(u, v) ~=~ \rank(y_u, y_v)\cdot w_{uv}
    ~.
\]
When the algorithm matches $(u, v)$, it further sets the corresponding dual variables such that
\begin{equation}
\label{eqn:general-share}
\begin{aligned}
    \alpha_u &  ~\geq~ \share(y_u, y_v)\cdot w_{uv} ~, \\
    \alpha_v &  ~\geq~ \share(y_v, y_u)\cdot w_{uv} ~, \\
    \alpha_u + \alpha_v & ~=~ w_{uv}
    ~,
\end{aligned}
\end{equation}
where we will refer to the right-hand-sides of the two inequalities as $u$ and $v$'s \emph{guaranteed gains}.

Similar to the case of \algoname, the restriction that $\rank(x, y)$ and $\share(x, y)$ are strictly positive for any $x, y \in [0, 1)$ is only for the convenience of the arguments, as it prevents the boundary cases of zero perturbed weights and zero gains.
It can be removed with appropriate treatments of the zero-weight cases.

\paragraph{Natural Properties.}
We now present a set of properties and why we consider them natural.
First, it is natural to assume \textbf{Symmetry} of the perturbed weights w.r.t.\ the ranks of its two endpoints, since the Oblivious Bipartite Matching problem is symmetric. 
\begin{align}
    \forall y_u, y_v \in [0, 1):\ \rank(y_u, y_v) = \rank(y_v, y_u).\label{prop:symmetry}
\end{align}

Next, consider the order by which the algorithm queries the incident edges of a vertex $u$, and $u$'s preference over these incident edges based on its guaranteed gain given in \Cref{eqn:general-share}.
It is natural and even desirable to let these two orders be the same, which we will refer to as \textbf{Rank-Share Consistency}.
Formally speaking, we expect that
\begin{align}
    & \forall y_u, y_v, y_{v'} \in [0, 1), \forall w_{uv}, w_{uv'} > 0 : \notag \\[1ex]
    & \quad
    \rank(y_u, y_v) \cdot w_{uv} \ge \rank(y_u, y_{v'}) \cdot w_{uv'} 
    \quad\Leftrightarrow\quad 
    \share(y_u, y_v)\cdot w_{uv} \ge \share(y_u, y_{v'}) \cdot w_{uv'} ~. 
    \label{prop:good-share-1}
\end{align}
That is, the algorithm queries $(u, v)$ before $(u, v')$ if and only if $u$'s guaranteed gain from $(u,v)$ is greater than that from $(u,v')$.
The symmetric claim with the roles of $u, v$ swapped is mathematically identical, and thus, omitted.
Without Rank-Share Consistency, a vertex may gain less at a smaller rank/higher priority, because the algorithm may match it to an earlier edge with a smaller guaranteed gain.

Finally, consider the preference of any vertex $u$ over its incident edges. 
We consider it natural that $u$'s rank $y_u$, which intuitively captures its willingness to share with the other side, shall not affect its preference over the edges.
We call this property \textbf{Preference Consistency}.
Admittedly, this is the least justified property among the three, and we leave it for future research whether it could be replaced by a weaker or alternative property in derivation of the quadratic forms.
\begin{align}
    & \forall y_u, y_u', y_v, y_{v'} \in [0, 1), \forall w_{uv}, w_{uv'} > 0 : \notag \\[1ex]
    & \quad
    \rank(y_u, y_v) \cdot w_{uv} \ge \rank(y_u, y_{v'}) \cdot w_{uv'} \quad\Leftrightarrow\quad \rank(y_u', y_v) \cdot w_{uv} \ge \rank(y_u', y_{v'}) \cdot w_{uv'} ~. 
    \label{prop:consist-pref-1}
\end{align}
In other words, vertex $u$ prefers $(u, v)$ over $(u, v')$ at rank $y_u$ if and only if it has the same preference at another rank $y_u'$.
The symmetric claim with the roles of $u$ and $v$ swapped is mathematically identical, and thus, omitted.

\begin{theorem}
    \label{thm:derive-quadratic-form}
    If functions $\rank$ and $\share$ satisfy Symmetry, Gain-Share Consistency, and Preference Consistency, then there are univariate functions $g, h : [0, 1) \to \Real_{>0}$ such that $\rank(x, y) = g(y_u)\cdot g(y_v)$ and $\share(x, y) = h(x)\cdot g(y)$; and vice versa.
\end{theorem}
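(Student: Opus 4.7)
The plan is to exploit the three properties sequentially, first pinning down $\mathrm{Rank}$ up to a scalar via Preference Consistency plus Symmetry, then bootstrapping to $\mathrm{Share}$ via Rank-Share Consistency. The vice-versa direction will be a routine substitution check.

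The key observation is that each biconditional of the form ``$A(y_u, y_v) w_{uv} \ge A(y_u, y_{v'}) w_{uv'} \iff B(y_u, y_v) w_{uv} \ge B(y_u, y_{v'}) w_{uv'}$ for all positive weights'' is equivalent to the equality of ratios $A(y_u, y_v)/A(y_u, y_{v'}) = B(y_u, y_v)/B(y_u, y_{v'})$. Applied to Preference Consistency \eqref{prop:consist-pref-1}, this yields that the ratio $\mathrm{Rank}(y_u, y_v)/\mathrm{Rank}(y_u, y_{v'})$ is independent of $y_u$. Fixing a reference rank $y_0 \in [0,1)$ and defining $g_0(y) \defeq \mathrm{Rank}(y_0, y) > 0$, I can therefore write
\[
    \mathrm{Rank}(y_u, y_v) ~=~ c(y_u) \cdot g_0(y_v)
\]
for some positive function $c$. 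Plugging this into Symmetry \eqref{prop:symmetry} gives $c(y_u) g_0(y_v) = c(y_v) g_0(y_u)$, so the ratio $c(y)/g_0(y)$ equals a positive constant $K$. Setting $g(y) \defeq \sqrt{K} \cdot g_0(y)$ then gives $\mathrm{Rank}(y_u, y_v) = g(y_u) \cdot g(y_v)$.

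Next, I apply the same ratio trick to Rank-Share Consistency \eqref{prop:good-share-1}. Fixing $y_u$ and varying $y_v, y_{v'}, w_{uv}, w_{uv'}$, the biconditional forces
\[
    \frac{\mathrm{Share}(y_u, y_v)}{\mathrm{Share}(y_u, y_{v'})} ~=~ \frac{\mathrm{Rank}(y_u, y_v)}{\mathrm{Rank}(y_u, y_{v'})} ~=~ \frac{g(y_v)}{g(y_{v'})}
    ~,
\]
so $\mathrm{Share}(y_u, y_v)/g(y_v)$ is a function of $y_u$ alone. Denoting it by $h(y_u)$, I obtain the desired form $\mathrm{Share}(y_u, y_v) = h(y_u) \cdot g(y_v)$. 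Since both $\mathrm{Share}$ and $g$ are positive, $h$ is positive as well.

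For the converse, assuming $\mathrm{Rank}(x, y) = g(x) g(y)$ and $\mathrm{Share}(x, y) = h(x) g(y)$ with $g, h$ positive, Symmetry is immediate, Preference Consistency follows because $g(y_u) > 0$ cancels on both sides of the inequality, and Rank-Share Consistency follows by multiplying the cancelled inequality $g(y_v) w_{uv} \ge g(y_{v'}) w_{uv'}$ by the positive factor $h(y_u)$. I do not anticipate a serious obstacle; the only subtlety is to notice that positivity of $\mathrm{Rank}$ and $\mathrm{Share}$ on $[0,1)^2$ (built into the framework) is precisely what allows the ratio manipulation and the division by $g(y_u)$ to be done without boundary issues.
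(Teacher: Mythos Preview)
Your proposal is correct and follows essentially the same approach as the paper: both arguments exploit the key observation that the weight-quantified biconditionals force equality of ratios, then use Preference Consistency plus Symmetry to factor $\mathrm{Rank}$ as $g(x)g(y)$, and finally Rank-Share Consistency to factor $\mathrm{Share}$ as $h(x)g(y)$. The only cosmetic difference is that the paper fixes the reference rank at $0$ and writes out the algebra explicitly (defining $g(y) = \mathrm{Rank}(y,0)/\sqrt{\mathrm{Rank}(0,0)}$), whereas you use an arbitrary reference $y_0$ and a cleaner abstract factorization via the constant $K$; the underlying reasoning is identical.
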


The second direction is straightforward and has been verified in the earlier parts of the paper.
The first direction follows from the next two lemmas.

\begin{lemma}\label{lemma:consist-pref}
    If function $\rank$ satisfies Symmetry and Preference Consistency, then there is a function $g : [0, 1) \to \Real_{>0}$ such that $\rank(x, y) = g(x)\cdot g(y)$.
\end{lemma}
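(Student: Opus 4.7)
The plan is to first extract the factorizability of $\rank$ from Preference Consistency, and then apply Symmetry to force the two factors to be identical (up to scaling) univariate functions.

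\medskip

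The first step is to argue that Preference Consistency is equivalent to the statement that the ratio
\[
    \frac{\rank(y_u, y_v)}{\rank(y_u, y_{v'})}
\]
is independent of $y_u$. One direction is easy: if this ratio equals some common value $r(y_v, y_{v'})$ for every $y_u$, then the inequality $\rank(y_u, y_v)\cdot w_{uv} \ge \rank(y_u, y_{v'})\cdot w_{uv'}$ reduces to $r(y_v, y_{v'}) \ge \frac{w_{uv'}}{w_{uv}}$, which clearly does not depend on $y_u$. For the converse, suppose toward contradiction that for some $y_u, y_u', y_v, y_{v'}$ we had
\[
    \frac{\rank(y_u, y_v)}{\rank(y_u, y_{v'})} \;>\; \frac{\rank(y_u', y_v)}{\rank(y_u', y_{v'})}.
\]
Then I can choose positive weights $w_{uv}, w_{uv'}$ with ratio $\frac{w_{uv'}}{w_{uv}}$ strictly between the two sides of the above inequality; this produces a violation of \eqref{prop:consist-pref-1}, a contradiction.

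\medskip

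The second step uses this independence to factor $\rank$. Specializing to $y_u' = 0$ and $y_{v'} = 0$, I obtain for all $x, y \in [0, 1)$:
\[
    \frac{\rank(x, y)}{\rank(x, 0)} \;=\; \frac{\rank(0, y)}{\rank(0, 0)},
\]
which rearranges to
\[
    \rank(x, y) \;=\; \frac{\rank(x, 0)\cdot \rank(0, y)}{\rank(0, 0)}.
\]
Now Symmetry gives $\rank(x, 0) = \rank(0, x)$, so defining
\[
    g(y) \;\defeq\; \frac{\rank(0, y)}{\sqrt{\rank(0, 0)}},
\]
which is well-defined and strictly positive since $\rank > 0$, we get the desired identity $\rank(x, y) = g(x)\cdot g(y)$.

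\medskip

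I do not anticipate a real obstacle here; the only subtle point is the weight-selection argument in the first step, which requires being careful that the ratio of edge weights can realize any value in $(0, \infty)$ to force the two ratios of $\rank$ values to coincide. The factorization via the fixed point $0$ and the use of Symmetry to identify the two univariate factors are then routine.
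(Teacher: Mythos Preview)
Your proposal is correct and essentially mirrors the paper's proof: both extract the ratio identity $\frac{\rank(y_u, y_v)}{\rank(y_u, y_{v'})} = \frac{\rank(y_u', y_v)}{\rank(y_u', y_{v'})}$ from Preference Consistency (the paper by choosing weights that force equality on one side and observing equality transfers; you by a strictly-between-weight contradiction), then specialize to the anchor $0$ and invoke Symmetry to define $g(y) = \rank(0,y)/\sqrt{\rank(0,0)}$.
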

\begin{proof}
    It is easy to see that if the first part of \Cref{prop:consist-pref-1} holds with equality, then so does the second part. 
    For any $y_u, y_u', y_v, y_{v'} \in [0, 1)$, let 
    \[
        w_{uv} = \frac{1}{\rank(y_u, y_v)} ~,\quad w_{uv'} = \frac{1}{\rank(y_u, y_{v'})}
        ~,
    \]
    so that the first part of \Cref{prop:consist-pref-1} holds with equality.
    Then, the second part of \Cref{prop:consist-pref-1} with equality is
    \begin{equation}
        \label{eqn:rank-invariant}
        \frac{\rank(y_u', y_v)}{\rank(y_u, y_v)} ~=~
        \frac{\rank(y_u', y_{v'})}{\rank(y_u, y_{v'})}
        ~.
    \end{equation}

    Therefore, for any $x, y\in [0, 1)$, we have
    \begin{align*}
        \rank(x, y) & = \frac{\rank(x, y) \cdot \rank(x,0)}{\rank(0,0)} \cdot \frac{\rank(0, 0)}{\rank(x,0)} \\
        & = \frac{\rank(x, y) \cdot \rank(x,0)}{\rank(0,0)} \cdot \frac{\rank(0,y)}{\rank(x,y)} \tag{by \Cref{eqn:rank-invariant}} \\
        & = \frac{\rank(x, 0)}{\sqrt{\rank(0, 0)}}\cdot \frac{\rank(0, y)}{\sqrt{\rank(0, 0)}} \\
        & = \frac{\rank(x, 0)}{\sqrt{\rank(0, 0)}}\cdot \frac{\rank(y, 0)}{\sqrt{\rank(0, 0)}} 
        ~.
        \tag{by Symmetry}
    \end{align*}

    Hence, the lemma holds with function $g(y) = \frac{\rank(y, 0)}{\sqrt{\rank(0, 0)}}$. 
\end{proof}

\begin{lemma}\label{lemma:good-share}
    If Rank-Share Consisntency holds and $\rank(x, y) = g(x)\cdot g(y)$ for some $g: [0, 1) \to \mathbb R_{> 0}$, then there is a function $h: [0, 1)\to \Real_{> 0}$ such that $\share(x, y) = h(x)\cdot g(y)$.
\end{lemma}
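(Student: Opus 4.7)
The plan is to mimic the argument of \Cref{lemma:consist-pref}: first observe that Rank-Share Consistency with strict inequality replaced by equality gives a useful equality relation, then substitute $\rank(x,y) = g(x)\cdot g(y)$ and extract $h$ as a univariate ratio.

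Concretely, I would first argue that if the first part of \eqref{prop:good-share-1} holds with equality, then so does the second part (otherwise one could perturb $w_{uv}$ or $w_{uv'}$ infinitesimally to contradict the ``if and only if''). Substituting $\rank(x,y) = g(x)\cdot g(y)$ into the equality version of \eqref{prop:good-share-1} and using $g(y_u) > 0$ to cancel, the equivalence becomes
\[
    g(y_v)\cdot w_{uv} = g(y_{v'})\cdot w_{uv'}
    \quad\Longleftrightarrow\quad
    \share(y_u,y_v)\cdot w_{uv} = \share(y_u,y_{v'})\cdot w_{uv'}.
\]

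Next I would choose the weights to force the left-hand equality: for any $y_u, y_v, y_{v'} \in [0,1)$, set $w_{uv} = 1/g(y_v)$ and $w_{uv'} = 1/g(y_{v'})$, which are strictly positive since $g > 0$. Then the right-hand equality gives
\[
    \frac{\share(y_u,y_v)}{g(y_v)} = \frac{\share(y_u,y_{v'})}{g(y_{v'})}.
\]
Since this holds for all $y_v, y_{v'} \in [0,1)$, the ratio $\share(y_u,y_v)/g(y_v)$ depends only on $y_u$. Defining $h(y_u) \defeq \share(y_u, 0)/g(0)$ (which is strictly positive because $\share$ and $g$ are) yields $\share(y_u, y_v) = h(y_u)\cdot g(y_v)$, as required.

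I do not foresee any real obstacle; the only subtle point is justifying the ``equality implies equality'' step rigorously from the ``iff'' statement of Rank-Share Consistency, which is a standard continuity/perturbation argument (or can be obtained directly by applying \eqref{prop:good-share-1} in both directions to rule out strict inequality on either side). Everything else is algebraic manipulation enabled by the positivity of $g$ and the freedom to choose $w_{uv}, w_{uv'}$ arbitrarily.
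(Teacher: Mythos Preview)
Your proposal is correct and follows essentially the same approach as the paper: both derive the equality $\frac{\share(y_u,y_v)}{g(y_v)} = \frac{\share(y_u,y_{v'})}{g(y_{v'})}$ by choosing weights that force equality in the rank condition, then define $h(x) = \share(x,0)/g(0)$. The only cosmetic difference is that the paper sets $w_{uv} = 1/\rank(y_u,y_v)$ and substitutes $\rank = g\cdot g$ afterward, whereas you substitute first and set $w_{uv} = 1/g(y_v)$; these differ by the common factor $g(y_u)$, which is immaterial.
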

\begin{proof}
    It is easy to see that if the first part of \Cref{prop:good-share-1} holds with equality, then so does the second part. 
    For any $y_u, y_v, y_{v'} \in [0, 1)$, let
    \[
        w_{uv} = \frac{1}{\rank(y_u, y_v)}
        ~,\quad
        w_{uv'} = \frac{1}{\rank(y_u, y_{v'})}
        ~,
    \]
    so that the first part of \Cref{prop:good-share-1} holds with equality.
    Then, the second part with equality is
    \begin{equation}
        \label{eqn:gain-share-consistency}
        \frac{\share(y_u, y_v)}{\rank(y_u, y_v)} 
        ~=~ 
        \frac{\share(y_u, y_{v'})}{\rank(y_u, y_{v'})}
        ~.
    \end{equation}

    Therefore, for any $x, y\in [0, 1)$, we have
    \begin{align*}
        \share(x, y) & = \frac{\share(x, y) \cdot \share(x, 0)}{\rank(x, 0)} \cdot \frac{\rank(x, 0)}{\share(x, 0)} \\
        & =  \frac{\share(x, y) \cdot \share(x, 0)}{\rank(x, 0)} \cdot \frac{\rank(x, y)}{\share(x, y)} 
        \tag{by \Cref{eqn:gain-share-consistency}} \\
        & = \frac{\share(x, 0) \cdot \rank(x, y)}{\rank(x, 0)}\\
        & = \frac{\share(x, 0)}{g(0)} \cdot g(y) \tag{by $\rank(x,y) = g(x) g(y)$}
        ~.%
    \end{align*}

    Hence, the lemma holds with $h(y) = \frac{\share(x, 0)}{g(0)}$.
\end{proof}

\section{Conclusion and Open Questions}

In this work, we propose a novel polynomial-time algorithm named \algoname, for the Edge-weighted Oblivious Bipartite Matching problem, and prove that it is at least $0.659$-competitive using randomized primal-dual analysis.
The algorithm is parameterized by two functions $g$ and $h$ that define the perturbed weight of the edges and the dual variable assignments, and satisfies several natural and desirable properties, including Symmetry, Rank-share Consistency, and Preference Consistency.
These properties allow us to extend existing analysis frameworks for the unweighted or vertex-weighted version of the problem to the edge-weighted case, and prove the first competitive ratio beating $1-1/e$.
We also show upper bounds of $0.796$ and $0.758$ for the competitive ratios of any (randomized, adaptive) algorithms for the unweighted Oblivious Matching problem on bipartite and general graphs, respectively.

A few interesting problems are left open.
As mentioned by Derakhshan and Farhad~\cite{DF23}, a natural open question is whether there is a separation in the best possible competitive ratios for the model of Oblivious Matching and that of the Query-Commit Matching.
One possible direction towards answering this open question is to further improve the upper bounds for the Oblivious Matching problem that are given by the hard instances $\mathcal{H}_6$ and $\mathcal{\hat{H}}_3$.
However, we remark that studying even larger hard instances with similar graph structures either leads to an improvement of less than $0.001$ ($\mathcal{H}_7$) or does not lead to an improvement at all ($\hat {\mathcal{H}}_4$).
Finally, another natural question is whether \algoname can be applied to other related problems, e.g., Edge-weighted Oblivious Matching in general graphs, and give a good competitive ratio.

\newpage

\bibliography{ref}
\bibliographystyle{alpha}

\newpage

\appendix
\section{Missing Proofs from \texorpdfstring{\Cref{sec:quadratic-ranking}}{}}

\subsection{Proof for \texorpdfstring{\Cref{lemma:universal-lower-bound}}{}}
\label{app-subsec:proof-universal-lower-bound}

\begin{lemma}
    \label{app-lemma:non-decreasing-right-continuous}
    Consider a non-decreasing right-continuous function $f : [0, 1] \to \Real_{\ge 0}$.
    For any $x, y\in [0, 1]$, if $f^{-1}(y) < 1$, then $x > f^{-1}(y) \implies f(x) > y$; $x \le f^{-1}(y) \implies f(x) \le y$.
\end{lemma}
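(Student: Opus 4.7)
The plan is to argue directly from the definition $f^{-1}(y) = \inf S$, where I abbreviate $S := \{z \in [0,1] : f(z) > y\}$. The hypothesis $f^{-1}(y) < 1$ guarantees that $S$ is non-empty, since the convention $\inf \varnothing = 1$ from \Cref{definition:inverse-function} would otherwise force $f^{-1}(y) = 1$. With $S \ne \varnothing$ in hand, the two implications can be attacked separately, with the second one requiring right-continuity in the delicate boundary subcase.

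For the first implication, $x > f^{-1}(y) \implies f(x) > y$: since $x$ strictly exceeds $\inf S$, the defining property of the infimum yields some $z \in S$ with $\inf S \le z < x$. By the definition of $S$, $f(z) > y$, and by the non-decreasingness of $f$, $f(x) \ge f(z) > y$. This part is routine and parallels the corresponding step in the proof of \Cref{lemma:right-continuous-effect} for the non-increasing case.

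For the second implication, $x \le f^{-1}(y) \implies f(x) \le y$: I would split into two subcases. The easy subcase is $x < f^{-1}(y) = \inf S$, where $x \notin S$ is immediate from the infimum characterization, giving $f(x) \le y$ by the definition of $S$. The main obstacle is the boundary subcase $x = f^{-1}(y)$, which mirrors the role played by right-continuity in \Cref{lemma:right-continuous-effect}. The plan is to express $f(x)$ as the right-limit $\lim_{z \to x^+} f(z)$ using right-continuity, and then exploit the infimum structure of $S$: by picking a sequence $z_n \downarrow x$ with $z_n \in [0,1]$ carefully chosen relative to $S$, the values $f(z_n)$ can be controlled to lie $\le y$, and passing to the limit gives $f(x) \le y$. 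This step is the analog of the computation $f(f^{-1}(y)) = \lim_{z \to f^{-1}(y)^+} f(z) \le \sup_{z \in S} f(z) \le y$ used in \Cref{lemma:right-continuous-effect}, adapted to the non-decreasing setting by replacing the set $\{z : f(z) \le y\}$ with its complement-like analog and reversing the relevant supremum/infimum.
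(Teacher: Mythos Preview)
Your treatment of the first implication and of the strict subcase $x < f^{-1}(y)$ in the second implication is correct and essentially matches the paper's argument.

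However, your plan for the boundary subcase $x = f^{-1}(y)$ does not work. You propose to pick a sequence $z_n \downarrow x$ with $f(z_n) \le y$ and pass to the limit via right-continuity. But no such sequence exists: since $f$ is non-decreasing, the set $S = \{z : f(z) > y\}$ is an upper set, so every $z_n > \inf S = x$ lies in $S$ and hence satisfies $f(z_n) > y$. Right-continuity then yields $f(x) = \lim_{n} f(z_n) \ge y$, the wrong direction. The analogy with \Cref{lemma:right-continuous-effect} breaks because there the defining set is $\{z : f(z) \le y\}$ with a non-strict inequality, which is preserved under limits; here $S$ is defined by a strict inequality, and its complement need not contain $\inf S$. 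Concretely, take $f(z) = \mathbf{1}(z \ge 1/2)$ and $y = 1/2$: then $f^{-1}(y) = 1/2 < 1$, and $x = 1/2$ satisfies $x \le f^{-1}(y)$, yet $f(x) = 1 > y$. So the boundary claim is actually false.

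The paper's own proof glosses over this: it asserts ``$x \le \inf S$, therefore $x \notin S$'' without further argument, which fails precisely when $x = \inf S \in S$. The lemma as stated is thus incorrect at the boundary. This does not affect the downstream use in the proof of \Cref{lemma:universal-lower-bound}: the lemma is invoked only inside integrals over $y_u$ where the event $y_u = \beta^{-1}(y_v)$ has Lebesgue measure zero, so the strict version $x < f^{-1}(y) \implies f(x) \le y$ (which you do establish) is all that is needed there.
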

\begin{proof}
    If $f^{-1}(y) < 1$, then $S = \{z\in[0, 1]: f(z) > y\}$ is non-empty. 
    Since $f$ is right-continuous, we have $f^{-1}(y) = \inf S$.  
    If $x > f^{-1}(y)$, then $x> \inf S$. Since $f$ is non-decreasing, we have $x\in S$, which implies $f(x) > y$. 
    Similarly, if $x \le f^{-1}(y)$, then $x\le \inf S$.
    Therefore $x \notin S$, which means $f(x) \leq y$.
\end{proof}

\begin{lemma}[\Cref{lemma:universal-lower-bound} restated]\label{app:lemma:universal-lower-bound}
    We have
    \begin{align*}
        \frac{\E[\alpha_u + \alpha_v]}{w_{uv}} \geq \int_0^1 \underbrace{\Big( \theta(y_v) - \beta^{-1}(y_v) \Big)^+}_{\text{\rm shaded area, \Cref{lemma:extra-gain}}} \d y_v & + \int_0^1 \underbrace{\Big(1 - \big(\theta(y_v) - \beta^{-1}(y_v)\big)^+ \Big) \cdot h(y_v) \cdot  g(\theta(y_v))}_{\text{\rm non-shaded area, \Cref{lemma:basic-gain} for $\alpha_v$}} \,\d y_v \\
        & + \int_0^1 \underbrace{\Big(1 - \big(\beta(y_u) - \theta^{-1}(y_u)\big)^+\Big)\cdot h(y_u) \cdot g(\beta(y_u))}_{\text{\rm non-shaded area, \Cref{lemma:basic-gain} for $\alpha_u$}}  \,\d y_u
        ~,
    \end{align*}
    where $x^+$ denotes $\max\{x, 0\}$.
\end{lemma}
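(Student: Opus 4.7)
The plan is to split the integration region $[0,1)^2$ into the ``shaded'' subregion $\mathcal{S} = \{(y_u,y_v) : y_u < \theta(y_v),\ y_v < \beta(y_u)\}$, where Lemma~\ref{lemma:extra-gain} applies and gives $(\alpha_u+\alpha_v)/w_{uv} = 1$, and its complement $\mathcal{S}^c$, where only the individual bounds in Lemma~\ref{lemma:basic-gain} are available. Since $\E[\alpha_u+\alpha_v]/w_{uv} = \int_0^1\!\int_0^1 (\alpha_u+\alpha_v)/w_{uv}\, \d y_u\, \d y_v$, it suffices to lower-bound the integrand on each piece and apply Fubini.

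First I will characterize the slices of $\mathcal{S}$. Fix $y_v$. The constraint $y_u < \theta(y_v)$ is direct. For the constraint $y_v < \beta(y_u)$, I invoke Lemma~\ref{app-lemma:non-decreasing-right-continuous} applied to the non-decreasing right-continuous function $\beta$: this yields the equivalence $\beta(y_u) > y_v \Longleftrightarrow y_u > \beta^{-1}(y_v)$ whenever $\beta^{-1}(y_v) < 1$, and both sides fail identically when $\beta^{-1}(y_v) = 1$. Hence the $y_v$-slice of $\mathcal{S}$ equals $(\beta^{-1}(y_v), \theta(y_v))$ and has Lebesgue measure $(\theta(y_v)-\beta^{-1}(y_v))^+$. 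By symmetry, the $y_u$-slice of $\mathcal{S}$ has length $(\beta(y_u)-\theta^{-1}(y_u))^+$.

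Next I assemble the bound. On $\mathcal{S}$, Lemma~\ref{lemma:extra-gain} gives $(\alpha_u+\alpha_v)/w_{uv}=1$, whose integral equals the total area of $\mathcal{S}$, i.e.\ $\int_0^1 (\theta(y_v)-\beta^{-1}(y_v))^+ \d y_v$; this is the first term. On $\mathcal{S}^c$, I split $\alpha_u/w_{uv} + \alpha_v/w_{uv}$ and use Lemma~\ref{lemma:basic-gain} to bound each summand. For the $\alpha_v/w_{uv} \geq h(y_v)g(\theta(y_v))$ piece I integrate $y_u$ first: the $y_v$-slice of $\mathcal{S}^c$ has length $1 - (\theta(y_v)-\beta^{-1}(y_v))^+$, producing the second term. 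For the $\alpha_u/w_{uv} \geq h(y_u)g(\beta(y_u))$ piece I integrate $y_v$ first with the swapped Fubini order, producing the third term. Summing the three contributions matches the claimed inequality.

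The main obstacle is boundary-case bookkeeping: ensuring the $<$/$\le$ discrepancies between the statement of $\mathcal{S}$ and the equivalence from Lemma~\ref{app-lemma:non-decreasing-right-continuous} only alter a measure-zero set; handling the extended values $\theta(1)=\beta(1)=1$ so that when $\beta^{-1}(y_v)=1$ the formula $(\theta(y_v)-\beta^{-1}(y_v))^+$ still correctly evaluates to $0$ (since $\theta(y_v)\le 1$); and checking that the endpoints where Lemma~\ref{lemma:extra-gain}'s strict inequalities become equalities do not cause any actual double-counting between the shaded and unshaded contributions. These are all technical but routine once $\mathcal{S}$ is pinned down via Lemma~\ref{app-lemma:non-decreasing-right-continuous}; beyond this point the argument reduces to a clean application of Fubini's theorem.
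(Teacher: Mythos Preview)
Your proposal is correct and follows essentially the same route as the paper's proof: both split the square into the region $\mathcal{S}=\{y_u<\theta(y_v),\,y_v<\beta(y_u)\}$ and its complement, apply Lemma~\ref{lemma:extra-gain} on $\mathcal{S}$ and Lemma~\ref{lemma:basic-gain} separately for $\alpha_u$ and $\alpha_v$ on $\mathcal{S}^c$, and use Lemma~\ref{app-lemma:non-decreasing-right-continuous} to identify the $y_v$-slice of $\mathcal{S}$ with $(\beta^{-1}(y_v),\theta(y_v))$. The paper phrases the two containments (that this interval lies inside $\mathcal{S}$, and its complement lies inside $\mathcal{S}^c$) separately rather than asserting slice equality, but this is exactly your observation that the discrepancy is measure zero.
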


\begin{proof}
    We break $\E[\alpha_u + \alpha_v]$ into three parts and lower bound each of them (see Figure~\ref{fig:universal_lower_bound}):
    \begin{align*}
        \E[\alpha_u + \alpha_v] & = \E\Big[(\alpha_u + \alpha_v)\cdot \bm{1}\Big( (y_v < \beta(y_u)) \wedge (y_u < \theta(y_v)) \Big)\Big] \\
        & + \E\Big[\alpha_v \cdot \bm{1}\Big( (y_v \ge \beta(y_u)) \vee (y_u \ge \theta(y_v)) \Big)\Big] \\
        & + \E\Big[\alpha_u \cdot \bm{1}\Big( (y_v \ge \beta(y_u)) \vee (y_u \ge \theta(y_v)) \Big)\Big].
    \end{align*}

    Here, $\bm{1}(A)\in \{0,1\}$ is the indicator function for event $A$.
    For the first term, we consider any fixed $y_v \in [0, 1)$. 
    If $\beta^{-1}(y_v) < \theta(y_v)$ (which implies $\beta^{-1}(y_v) < 1$), then for $y_u \in (\beta^{-1}(y_v), \theta(y_v))$, we have both $y_v < \beta(y_u)$ ($\beta^{-1}(y_v) < 1$ and $y_u < \theta(y_v)$ by  \Cref{app-lemma:non-decreasing-right-continuous}. 
    
    Then by \Cref{lemma:extra-gain} we have $\alpha_u + \alpha_v = w_{uv}$.
    Therefore, we have,
    \begin{equation*}
        \E\Big[(\alpha_u + \alpha_v)\cdot \bm{1}\Big( (y_v < \beta(y_u)) \wedge (y_u < \theta(y_v)) \Big)\Big] \geq w_{uv}\cdot\int_0^1 \Big( \theta(y_v) - \beta^{-1}(y_v) \Big)^+ \d y_v.
    \end{equation*}

    For the second term, we consider any fixed $y_v \in [0, 1]$.
    If $\theta(y_v) = 1$, we have $g(\theta(y_v)) = 0$. 
    Thus we only need to consider the case when $\theta(y_v) < 1$.
    If $\beta^{-1}(y_v) < \theta(y_v) < 1$, then for $y_u \notin (\beta^{-1}(y_v), \theta(y_v))$, we have either $y_u \ge \theta(y_v)$ or $y_v \ge \beta(y_u)$ (the second is by \Cref{app-lemma:non-decreasing-right-continuous}).
    Then by \Cref{lemma:basic-gain} we have $\alpha_v \geq h(y_v)\cdot g(\theta(y_v))\cdot w_{uv}$. 
    Therefore we have
    \begin{equation*}
        \E\Big[\alpha_v \cdot \bm{1}\Big( (y_v \ge \beta(y_u)) \vee (y_u \ge \theta(y_v)) \Big)\Big] \ge \int_0^1 \Big(1 - \big(\theta(y_v) - \beta^{-1}(y_v)\big)^+ \Big) \cdot h(y_v) \cdot  g(\theta(y_v)) \d y_v.
    \end{equation*}

    Symmetrically, we can show that
    \begin{equation*}
        \E\Big[\alpha_u \cdot \bm{1}\Big( (y_v \ge \beta(y_u)) \vee (y_u \ge \theta(y_v)) \Big)\Big] \le \int_0^1 \Big(1 - \big(\beta(y_u) - \theta^{-1}(y_u)\big)^+\Big)\cdot h(y_u) \cdot g(\beta(y_u)) \d y_u.
    \end{equation*}

    Combining the three lower bounds and rearranging the inequality yields the lemma.
\end{proof}

\subsection{Proof for Lemma \ref{lemma:add-neighbor}}\label{app:proof4lemma:add-neighbor}

\begin{lemma}[\Cref{lemma:add-neighbor} restated]
    For any ranks $y_u$ and $y_v$, vertex $u$ prefers $M(y_u, y_v)$ over $M(y_u, \bot)$.
    Symmetrically, vertex $v$ prefers $M(y_u, y_v)$ over $M(\bot, y_v)$.
\end{lemma}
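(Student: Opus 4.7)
The plan is to use a standard alternating-path exchange argument, comparing two synchronized executions of the algorithm: Run~$A$ with both $u$ and $v$ present (producing $M(y_u, y_v)$) and Run~$B$ with $v$ removed (producing $M(y_u, \bot)$). Every vertex other than $v$ has the same rank in the two runs, so edges not incident to $v$ have identical perturbed weights and are processed in the same relative order by the greedy rule; the only edges exclusive to Run~$A$ are those incident to $v$. I would run both processes simultaneously, processing each edge of $E$ (or $E \cup \{(v,\cdot)\}$) in the common descending order of perturbed weights.

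I would then prove by induction on the processing step the following invariant: the symmetric difference $M_A \triangle M_B$ of the current partial matchings is an alternating path $P$ with $v$ as one of its endpoints, whose edges alternate between $M_A$ and $M_B$, and along which every vertex other than $v$ is matched in $M_A$ to an edge that it weakly prefers (in the sense of Definition~\ref{definition:better-matching}) over its $M_B$-edge; vertices not on $P$ are matched identically in both runs. Initially both matchings are empty and the invariant holds trivially. When the next edge $(x,y)$ is processed, I would check the four cases: (i) both runs match it, (ii) only Run~$A$ matches it because $x$ or $y$ is already matched in $M_B$, (iii) only Run~$B$ matches it, or (iv) neither run matches it. In each case the tip of $P$ either extends, retracts, or stays put by one edge; the crucial point is that whenever a vertex is matched in one run but not the other, the edge it subsequently matches to in the later run has smaller perturbed weight, which by Rank-Share Consistency translates into a weakly worse match from that vertex's viewpoint.

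At termination, the invariant says $u$ is either off the alternating path (hence matched identically in $M_A$ and $M_B$) or on it (hence weakly preferring $M_A$), which is exactly the lemma for $u$. The statement for $v$ follows by swapping the roles of $u$ and $v$ and repeating the argument with Run~$B$ replaced by the run that removes $u$. The main obstacle will be the bookkeeping for case~(ii)/(iii), where the ``displaced'' vertex's subsequent match has to be tracked down to show that $P$ indeed extends as one alternating chain rather than branching; here I would rely on the fact that, at the moment the synchronized order first makes the two runs disagree at some vertex $z$, the edge that $z$ matches in the behind run has been already processed in the ahead run, so the extension of $P$ is forced and its monotonicity of preference follows from the common processing order.
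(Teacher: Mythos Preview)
Your alternating-path approach is sound, but the invariant as stated has a small gap: when the path $P$ has even length, its non-$v$ endpoint $z$ lies on the same side of the bipartition as $v$ and is matched in $M_B$ but not in $M_A$, so the claim ``every vertex other than $v$ is matched in $M_A$ to an edge it weakly prefers over its $M_B$-edge'' fails at $z$. This does not affect the conclusion for $u$, since $u$ is on the opposite side of the bipartition from $v$ and hence always sits at an odd position along $P$; there, either $u$ is internal (and its $M_A$-edge was processed before its $M_B$-edge, hence preferred) or $u$ is the head of an odd-length path (matched only in $M_A$). You should weaken the invariant to cover only internal vertices of $P$ and invoke this parity observation at the end.

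The paper takes a different, somewhat cleaner route: rather than tracking the alternating path, it maintains the set-containment invariant that (with, say, $v \in R$ and $u \in L$) the unmatched vertices of $R$ in Run~$A$ form a superset of those in Run~$B$, while the matched vertices of $L$ in Run~$A$ form a superset of those in Run~$B$. From this, if $u$ is matched to some $q \in R$ in Run~$B$, then at the moment $(u,q)$ is processed $q$ is still unmatched in Run~$A$, so $u$ is either already matched there (to a higher-ranked edge) or matches $q$ now. This sidesteps the path bookkeeping and the parity argument entirely; your approach, by contrast, yields the stronger structural fact that $M_A \triangle M_B$ is a single alternating path, which the paper does not need here.
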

\begin{proof}
    By symmetric, it suffices to prove the statement that $v$ prefers $M(y_u, y_v)$ over $M(\bot, y_v)$.
    
    With the ranks of vertices other than $u$ fixed, we consider two runs the algorithms with $u$'s rank being $y_u$ and $u$ removed (i.e., $y_u = \bot$) respectively.
    We will process the edges in the two runs by the descending order of their perturbed weights.
    Notice that the difference in these two runs is that there are more edges in the first run, which are the edges incident to $u$.
    For convenience of comparison between the two runs, we add a dummy copy of $u$ to the second run with rank $y_u$, but mark it as matched at the beginning.
    Suppose that $u\in L$. We use mathematical induction to show that at any moment when the same edge is considered in the two runs, 
    \begin{itemize}
        \item the set of unmatched vertices in $L$ in the first run is a superset of those in the second;
        \item the set of matched vertices in $R$ in the first run is a superset of those in the second.
    \end{itemize}

    Intuitively speaking, due to the existence of $u$, in the first run, there will be more unmatched vertices in $L$, and the vertices in $R$ are more likely to get matched early.
    The base case holds trivially when the algorithm queries the first edge.
    Now suppose that the statement holds before edge $(p,q)\in L\times R$ is considered, we show that the statement still holds afterwards, which is trivially true if the query outcome is the same in both runs.
    Now suppose that $(p,q)$ is included in the first run but not in the second.
    By induction hypothesis, when $(p,q)$ is considered in the second run, $q$ is unmatched, which implies that $p$ is already matched, and the statement follows.
    Similarly, i0f $(p,q)$ is included in the second run but not in the first, then $p$ must be unmatched in the first run, which implies that $q$ is already matched, and the statement follows.

    Next we show that $v$ prefers $M(y_u, y_v)$ over $M(\bot, y_v)$.
    
    If $v$ is unmatched in the second run, the lemma holds trivially.
    Otherwise, suppose $v$ is matched to $p$ in the second run.
    By the previous argument, when $(p,v)$ is considered in the first run, $p$ must be unmatched.
    Therefore, either $v$ is already matched, or $v$ would be matched to $p$.
    In either case, the statement follows.
\end{proof}

\subsection{Proof for \texorpdfstring{\Cref{lemma:general-h-satisfy-constraint}}{}}
\label{app:proof-general-h-satisfy-constraint}

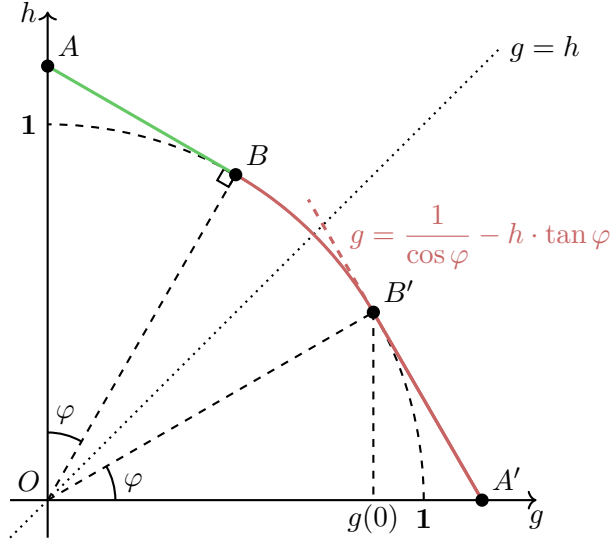
\begin{figure}[!ht]
    \centering
    \begin{tikzpicture}[scale = 5]
        \draw[->, thick] (-0.1,0) -- (1.3,0) node[below] {$g$};
        \draw[->, thick] (0,-0.1) -- (0,1.3) node[left] {$h$};

        \draw[black, thick, dashed] (1, 0) arc (0:90:1);

        \node at (-0.05, 1) {$\mathbf{1}$};
        \node at (1, -0.05) {$\mathbf{1}$};
        \node at (-0.05, 0.05) {$O$};
        \draw[-, thick, dashed] (0.866, 0.5) -- (0.866, 0);
        \node at (0.866, -0.05) {$g(0)$};

        \coordinate (A) at (0.5,0.866);
        \draw[black, thick, rotate=60] (A) rectangle ($(A) + (-0.035,0.035)$);
        
        \draw[myGreen, very thick] (0,1.155) -- (0.5,0.866);  
        \draw[myRed, very thick]   (0.866, 0.5) -- (1.155, 0);
        \draw[myRed, very thick]   (0.866, 0.5) -- (1.155, 0); 

        \draw[black, dotted, thick] (-0.1, -0.1) -- (1.2, 1.2) node[right] {$g = h$};
        
        \draw[myRed, very thick] (0.866, 0.5) arc (30:60:1);
        \draw[myRed, very thick, dashed]  (0.866, 0.5) -- (0.693, 0.8); \node[myRed] at (1.15, 0.7) {$g = \displaystyle\frac{1}{\cos \varphi} - h\cdot\tan \varphi$};

        \fill[black] (0.5,0.866) circle (0.5pt) node[above right] {$B$};

        \fill[black] (0.866, 0.5) circle (0.5pt) node[above right] {$B'$};

        \fill[black] (0, 1.155) circle (0.5pt) node[above right] {$A$};

        \fill[black] (1.155, 0) circle (0.5pt) node[above right] {$A'$};

        \draw[black, thick] (0,0) ++(0:0.18cm) arc (0:30:0.18cm) node[midway, right] {$\varphi$};  
        \draw[black, thick] (0,0) ++(60:0.18cm) arc (60:90:0.18cm) node[midway, above] {$\varphi$};

        \draw[black, thick, dashed]  (0, 0) -- (0.866, 0.5);
        \draw[black, thick, dashed]  (0, 0) -- (0.5,0.866);
    \end{tikzpicture}

    \caption{Illustration for $\vec u\cdot \vec v\le 1$. When $g(x)\in [0, \sin \varphi)$, the point $(g(x), h(y))$ is on the green linear segment. By symmetry, the point $(h(y), g(y))$ is on the red curve.}
    \label{fig:one-vector-on-segment}
\end{figure}

\begin{lemma}[\Cref{lemma:general-h-satisfy-constraint} restated]\label{app-lemma:general-h-satisfy-constraint}
    For any function $g : [0, 1] \to [0, \cos \varphi]$, and 
    \[
        h(y) = \begin{cases}
                \dfrac{1}{\cos \varphi} - g(y) \cdot \tan \varphi & \mbox{if $0\le g(y)< \sin \varphi$\,;} \\[2ex]
                \sqrt{1 - g(y)^2} & \mbox{if $\sin \varphi\le g(y) \le \cos \varphi$\,,}
        \end{cases}
    \]
    Constraint \eqref{eqn:budget-balanced} holds for any $x, y \in [0, 1]$.
\end{lemma}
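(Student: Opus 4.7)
The plan is to interpret $h(x)g(y) + g(x)h(y)$ as the inner product $\langle \vec u, \vec v\rangle$ with $\vec u \defeq (h(x), g(x))$ and $\vec v \defeq (g(y), h(y))$. By the definition of $h$, both $(g(x), h(x))$ and $(g(y), h(y))$ trace the green curve in \Cref{fig:one-vector-on-segment}, which consists of the unit-circle arc joining $B' = (\cos\varphi, \sin\varphi)$ and $B = (\sin\varphi, \cos\varphi)$, together with the tangent segment from $B$ to $A = (0, 1/\cos\varphi)$ along the line $g\sin\varphi + h\cos\varphi = 1$. I would split into cases according to which piece of the curve each of the two points lies on.

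In the arc--arc case both $\vec u$ and $\vec v$ are unit vectors, so Cauchy--Schwarz immediately gives $\langle \vec u, \vec v\rangle \le 1$. For the remaining cases I would exploit that, for any fixed $y$, the function $L_y(g,h) \defeq h\cdot g(y) + g\cdot h(y)$ is linear in $(g,h)$, and the inequality to prove at $(x,y)$ is $L_y(g(x), h(x)) \le 1$. It therefore suffices to show that the entire green curve lies in the half-plane $\{L_y \le 1\}$ for every admissible $y$.

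When $(g(y), h(y)) = (\cos\alpha, \sin\alpha)$ lies on the arc with $\alpha \in [\varphi, \pi/2 - \varphi]$, the level line $L_y = 1$ is precisely the tangent to the unit circle at the arc point $(\sin\alpha, \cos\alpha)$. Since this tangent supports the unit disk, the arc portion of the green curve lies automatically in $\{L_y \le 1\}$, and the tangent segment from $A$ to $B$ is handled by checking only its two endpoints (by linearity of $L_y$). When $(g(y), h(y))$ lies on the tangent segment of the green curve, it is a convex combination of $A$ and $B$, hence $L_y$ is the corresponding convex combination of $L_A$ and $L_B$, and it suffices to verify that the green curve is contained in $\{L_A \le 1\} \cap \{L_B \le 1\}$. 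The inclusion $L_A \le 1$ reduces to $g \le \cos\varphi$, which is exactly the hypothesis on the range of $g$; the inclusion $L_B \le 1$ is the tangent inequality to the unit circle at $B'$, which again supports the unit disk (handling the arc) and is verified on the tangent segment by checking its two endpoints.

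The argument is essentially two supporting-line observations stitched together by bilinearity. The only concrete numerical check is $L_B(A) = \tan\varphi$, which must be $\le 1$; this is precisely where the standing assumption $\varphi \le \pi/4$ enters the proof. I do not anticipate any real obstacle beyond bookkeeping across the cases.
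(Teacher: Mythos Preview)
Your proposal is correct and the case analysis closes cleanly; in particular the identification of $L_B(A)=\tan\varphi$ as the single place where $\varphi\le\pi/4$ is needed matches the paper's implicit use of that hypothesis.

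The route, however, is genuinely different from the paper's. For the arc--arc case both of you do essentially the same thing (the paper uses AM--GM, you use Cauchy--Schwarz on unit vectors). For the mixed case the paper applies the orthogonal reflection
\[
A=\begin{pmatrix}\sin\varphi&\cos\varphi\\ \cos\varphi&-\sin\varphi\end{pmatrix}
\]
to both vectors, so that $\vec u\cdot\vec v=A\vec u\cdot A\vec v$, and then bounds the four transformed coordinates one by one: the first coordinates become exactly the tangent-line expressions $g\sin\varphi+h\cos\varphi$ and $h\sin\varphi+g\cos\varphi$, each shown to be $\le 1$, while the second coordinates are shown to have opposite signs. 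You instead keep the original coordinates and argue via supporting half-planes: for fixed $y$ the map $L_y$ is linear, so the green curve lies in $\{L_y\le 1\}$ once you check the arc (tangent-line supports the disk) and the two endpoints $A,B$ of the linear piece; and when $y$ is itself on the linear piece you reduce by convexity to $L_A$ and $L_B$. Your argument is more geometric and avoids the explicit change of basis, at the cost of a slightly longer case bookkeeping; the paper's coordinate rotation compresses the mixed case into four coordinate inequalities but is less transparent about \emph{why} those particular bounds hold. Both proofs ultimately exploit the same fact---that the tangent at $B$ (resp.\ $B'$) is a supporting line of the unit disk---just packaged differently.
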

\begin{proof}
    Fix arbitrary $x, y\in [0, 1]$. If both $g(x)$ and $g(y)$ are in $[\sin \varphi, \cos \varphi]$, then we have
    \[
        h(x)g(y) + h(y)g(x) \le \frac{1 - g(x)^2 + g(y)^2}{2} + \frac{1 - g(y)^2 + g(x)^2}{2} = 1,
    \]
    where the inequality holds by the inequality of arithmetic and geometric means.

    Otherwise, suppose $g(x)\in [0, \sin \varphi)$. We consider the inner product of vectors $\vec u = (g(x), h(x))$ and $\vec v = (h(y), g(y))$. As shown in \Cref{fig:one-vector-on-segment}, by symmetry, we have
    \[
        g(y) \le \frac{1}{\cos \varphi} - h(y)\cdot \tan \varphi.
    \]
    Let
    \[
        A = \left[\begin{matrix}
            \sin \varphi & \cos \varphi \\
            \cos \varphi & -\sin \varphi
        \end{matrix}\right].
    \]
    We have $\vec u \cdot \vec v = A \vec u \cdot A\vec v$ since $\lvert A\rvert = 1$.
    Thus,
    \begin{align*}
        \vec u \cdot \vec v = \left[\begin{matrix}
            g(x) \\
            h(x) 
        \end{matrix}\right]
        \cdot
        \left[\begin{matrix}
            h(y) \\
            g(y) 
        \end{matrix}\right] &  = 
        \left[\begin{matrix}
            g(x)\sin \varphi + h(x)\cos\varphi \\
            g(x)\cos \varphi - h(x)\sin\varphi 
        \end{matrix}\right]
        \cdot
        \left[\begin{matrix}
            h(y) \sin \varphi + g(y)\cos\varphi \\
            h(y) \cos \varphi - g(y)\sin\varphi
        \end{matrix}\right].
    \end{align*}

    To prove $\vec u \cdot \vec v \le 1$, we bound each term in the two vectors.
    \begin{align*}
        g(x)\sin \varphi + h(x)\cos\varphi &= g(x)\sin \varphi + 1 - g(x)\sin \varphi = 1; \\
        h(y) \sin \varphi + g(y)\cos\varphi &\le h(y)\sin \varphi + 1 - h(y)\sin \varphi = 1; \\
        g(x)\cos \varphi - h(x)\sin\varphi & = g(x)\cos \varphi - \tan \varphi + g(x)\frac{\sin^2 \varphi}{\cos \varphi}
         = \frac{g(x) - \cos\varphi}{\cos \varphi} \le 0;\\ 
        h(y) \cos \varphi - g(y)\sin\varphi & \ge h(y)\cos \varphi - \tan \varphi + h(y)\frac{\sin^2 \varphi}{\cos \varphi} 
         = \frac{h(y) - \sin\varphi}{\cos \varphi} \ge 0. 
    \end{align*}

    Therefore, $\vec u \cdot \vec v \le 1 \times 1 + 0 = 1$.
\end{proof}

\section{Analytical Competitive Analysis}
\label{app:analytical-competitive-analysis}

In this section we provide an analytical lower bound of $0.63245$ for $\E[\alpha_u + \alpha_v]$ for some fixed $g$ and $h$, thus proving Theorem~\ref{theorem:analytical-lower-bound}.
We first introduce two variables $\tau, \gamma \in [0,1]$ defined by $\theta$ and $\beta$, and prove a lower bound for $\E[\alpha_u + \alpha_v]$ that depends only on $\tau,\gamma$ and the functions $g$ and $h$.

\begin{definition} 
    Let $\tau := \sup\{ y \in [0,1] : \theta(y) < 1 \}$ and $\gamma := \sup\{ y \in [0,1]: \beta(y) < 1 \}$.
\end{definition}

With a constraint on $g$ and $h$, we have the following lower bound.

\begin{lemma}[Lower Bound for Analytical Analysis] \label{lemma:analytical_lower_bound}
    Fix a differentiable function g. Under the condition that $h(1)\cdot ( g(y) - g'(y)) \leq 1$ for all $y\in [0,1]$,
    we have
    \begin{align*}
        \frac{\E[\alpha_u + \alpha_v]}{w_{uv}} & \geq (1 - \tau)(1 - \gamma) 
        + (1-\tau)\cdot g(\tau)\int_0^\gamma h(x) \d x \\
        & \quad + \int_0^\tau \min_{t \leq \gamma}\left\{ h(y)\cdot g(t) + g(y) \int_0^t h(x) \d x 
        + g(\tau) \int_t^\gamma h(x) \d x \right\} \d y.
    \end{align*}
\end{lemma}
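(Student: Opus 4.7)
The plan is to start from the Universal Lower Bound of Lemma \ref{lemma:universal-lower-bound} and derive the target analytical bound through a region decomposition of the rank square $[0,1]^2$ into three pieces corresponding to the three terms in the target: the rectangle $(\gamma, 1) \times (\tau, 1)$, the strip $(0, \gamma) \times (\tau, 1)$, and the slab $[0,1] \times [0, \tau]$.

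For the rectangle, I would use the definitions of $\tau$ and $\gamma$ to get $\theta(y_v) = 1 > y_u$ and $\beta(y_u) = 1 > y_v$, so both premises of Lemma \ref{lemma:extra-gain} hold vacuously and $\alpha_u + \alpha_v = w_{uv}$ throughout. Integrating yields the first term $(1-\tau)(1-\gamma)$.

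For the strip $(0, \gamma) \times (\tau, 1)$, I would fix $y_u \in (0, \gamma)$ and argue $\int_\tau^1 (\alpha_u + \alpha_v)/w_{uv}\,\d y_v \geq (1-\tau)\, g(\tau)\, h(y_u)$ in two cases. If $\beta(y_u) \leq \tau$, Lemma \ref{lemma:basic-gain} gives $\alpha_u \geq h(y_u)\, g(\beta(y_u))\, w_{uv} \geq h(y_u)\, g(\tau)\, w_{uv}$ because $g$ is non-increasing. If $\beta(y_u) > \tau$, I would combine Lemma \ref{lemma:extra-gain} on $y_v \in (\tau, \beta(y_u))$ (where both premises hold since $\theta(y_v) = 1$) with Lemma \ref{lemma:basic-gain} on $y_v \in (\beta(y_u), 1)$, giving $(\beta(y_u) - \tau) + (1 - \beta(y_u))\, h(y_u)\, g(\beta(y_u))$. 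Differentiating this in $\beta(y_u)$ produces $1 - h(y_u) \bigl( g(\beta(y_u)) - (1 - \beta(y_u))\, g'(\beta(y_u)) \bigr)$, which is non-negative by the stated assumption $h(1)(g(y) - g'(y)) \leq 1$ together with $h(y_u) \leq h(1)$ and $(1-\beta(y_u))g'(\beta(y_u)) \geq g'(\beta(y_u))$. Hence the expression is minimized at $\beta(y_u) = \tau$, where it equals $(1-\tau)\, h(y_u)\, g(\tau)$. Integrating over $y_u$ yields the second term.

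For the slab $[0,1] \times [0, \tau]$, I would fix $y_v \in [0, \tau]$ and split $y_u$ into $(0, t)$, $(t, \gamma)$, and $(\gamma, 1)$, where $t$ plays the role of $\theta(y_v)$ (capped at $\gamma$). On $(0, t)$, the condition $y_u < \theta(y_v)$ for Lemma \ref{lemma:extra-gain} holds, and combining the extra gain with the guaranteed gain of $v$ (Lemma \ref{lemma:basic-gain}) produces a contribution of $g(y_v)\, h(y_u)$ per unit area, integrating to $g(y_v) \int_0^t h(x)\,\d x$. On $(t, \gamma)$, a similar argument applies but with $g(\tau)$ in place of $g(y_v)$, again justified by the monotonicity argument of Step 2 that transfers the local bound $g(y_v)$ to the threshold value $g(\tau)$. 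The separate term $h(y_v)\, g(t)$ captures $v$'s guaranteed gain as a ``base,'' and the minimum over $t \leq \gamma$ arises because the adversary controls $\theta(y_v)$. The contribution from $(\gamma, 1)$ is absorbed by the second term already accounted for (and by the rectangle analysis).

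The main obstacle will be the third step, specifically justifying why the second summand in the minimum uses $g(\tau)$ and not $g(y_v)$ or $g(\beta(y_u))$: this requires an argument analogous to Step 2's monotonicity, where the assumption $h(1)(g(y) - g'(y)) \leq 1$ forces the adversarial configuration of $\beta$ to place its jump at a worst-case location, collapsing a continuum of potential local bounds into the single uniform bound $g(\tau)$. Careful bookkeeping is required to ensure no double-counting between the extra-gain contribution (Lemma \ref{lemma:extra-gain}) and the guaranteed-gain contributions (Lemma \ref{lemma:basic-gain}) on the overlap of the three sub-intervals.
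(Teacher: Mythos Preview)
Your decomposition into the rectangle, strip, and slab matches the paper's, and your treatment of the rectangle and of the strip (Step~2) is correct. The derivative computation you give for Step~2 is exactly the one the paper uses, and it does show that the strip contribution alone is minimized at $\beta(y_u)=\tau$.

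The genuine gap is in Step~3. On the sub-interval $y_u\in(t,\gamma)$ in the slab, Lemma~\ref{lemma:basic-gain} only gives $\alpha_u\ge h(y_u)\,g(\beta(y_u))\,w_{uv}$, and you need this to be at least $h(y_u)\,g(\tau)\,w_{uv}$ to produce the term $g(\tau)\int_t^\gamma h(x)\,\d x$. But nothing forces $\beta(y_u)\le\tau$: an adversarial $\beta$ can take values arbitrarily close to $1$ on $(0,\gamma)$, driving $g(\beta(y_u))$ well below $g(\tau)$. Minimizing the slab contribution by itself therefore does \emph{not} yield the $g(\tau)$ factor, and the ``monotonicity argument analogous to Step~2'' you invoke cannot apply here, because the slab's dependence on $\beta$ has the opposite sign (larger $\beta$ hurts, whereas in the strip larger $\beta$ helps). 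The paper's remedy is precisely to refuse your separation: it adds the strip and slab lower bounds \emph{before} minimizing over $\beta$, so that the surplus the strip generates when $\beta(y_u)>\tau$ (namely $(\beta(y_u)-\tau)+(1-\beta(y_u))h(y_u)g(\beta(y_u))$ exceeding $(1-\tau)h(y_u)g(\tau)$) offsets the slab's deficit. The combined derivative then picks up an extra $\tau\cdot h(x)g'(\beta)$ term from the slab, and the assumption $h(1)(g-g')\le 1$ is exactly what makes the \emph{combined} derivative non-negative. Your Step~2, by already cashing out the strip at its minimum, throws away the surplus you need in Step~3.

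A secondary issue: in the slab you still have the region $y_u\in(\gamma,1)$, $y_v\in(0,\tau)$ to account for, and it is not ``absorbed by the second term and the rectangle'' as you say (those both have $y_v>\tau$). In the paper it contributes the $(\theta(y_v)-\gamma)^+$ extra-gain piece plus $(1-\hat\theta(y_v))h(y_v)g(\theta(y_v))$, and a second monotonicity argument in $\theta$ (again using $h(1)(g-g')\le1$) is needed to show the worst case has $\theta(y_v)\le\gamma$, after which this region folds into the coefficient~$1$ on $h(y)g(t)$.
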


\subsection{Proof of Lemma~\ref{lemma:analytical_lower_bound}}

We break $\E[\alpha_u + \alpha_v]$ into three parts and lower bound each of them:
\begin{align*}
    \E[\alpha_u + \alpha_v] & = \E[(\alpha_u + \alpha_v) \cdot \mathbf{1}(y_v > \tau, y_u > \gamma)] \\
    & + \E[(\alpha_u +\alpha_v) \cdot \mathbf{1}(y_v > \tau, y_u < \gamma)] \\
    & + \E[(\alpha_u +\alpha_v) \cdot \mathbf{1}(y_v < \tau)].
\end{align*}    
    
For the first term, observe that when $y_v\in (\tau,1)$ and $y_u\in (\gamma,1)$, we have both $y_u < \theta(y_v) = 1$ and $y_v < \beta(y_u) = 1$.
Then by Lemma~\ref{lemma:extra-gain}, we have 
\begin{equation}
    \E[(\alpha_u + \alpha_v) \cdot \mathbf{1}(y_v > \tau, y_u > \gamma)] = (1-\tau)(1-\gamma)\cdot w_{uv}. \label{equation:analytical_lower_bound_1}
\end{equation}

Now we lower bound the second term. 
Fix any $x < \gamma$ and consider the case when $y_u = x$. By definition of $\gamma$ we have $\beta(x) < 1$. 
Hence by Lemma~\ref{lemma:basic-gain}, for all $y_v\in (\tau, 1)$ we have $\alpha_u \ge h(x)\cdot g(\beta(x)) \cdot w_{uv}$. 
Moreover, if $\beta(x) > \tau$, then when $y_v\in (\tau, \beta(x))$ we have $y_v < \beta(x)$ and $y_u = x < \gamma \le \theta(y_v) = 1$. 
Then by Lemma~\ref{lemma:extra-gain} we have $\alpha_u + \alpha_v = w_{uv}$.
Therefore we have
\begin{align*}
    &\ \E[(\alpha_u + \alpha_v)\cdot \mathbf{1}(y_v > \tau, y_u < \gamma)]
    = \int_0^\gamma \int_\tau^1 (\alpha_u + \alpha_v) \d y_v \d y_u \\
    & \ge \int_0^\gamma \left( \int_{\tau}^{\max\{\tau, \beta(y_u)\}} (\alpha_u + \alpha_v) \d y_v + \int_{\max\{ \tau, \beta(y_u) \} }^1 \alpha_u \d y_v \right) \d y_u \\
    & \ge \int_0^\gamma \left( \int_{\tau}^{\max\{\tau, \beta(y_u)\}} w_{uv} \d y_v + \int_{\max\{ \tau, \beta(y_u) \} }^1 h(y_u)g(\beta(y_u))w_{uv}\d y_v \right) \d y_u \\
    & \ge w_{uv} \cdot \int_0^\gamma \Big( \big(1 - \max\{\tau, \beta(y_u) \} \big)h(y_u)g(\beta(y_u)) + (\beta(y_u) - \tau)^+  \Big)\d y_u.
\end{align*}

Next, we give a lower bound for the third term. Fix any $y < \tau$ and consider the case when $y_v = y$. By definition of $\tau$ we have $\theta(y) <  1$. Then, by Lemma~\ref{lemma:basic-gain}, for all $y_u \in (0, 1)$, we have $\alpha_v \ge h(y)g(\theta(y))w_{uv}$. 
Moreover, if $\theta(y) > \gamma$, then when $y_u \in (\gamma, \theta(y))$\footnote{The statement holds for all $y_u \in (\beta^{-1}(y), \theta(y))$. Here we only consider $y_u \in (\gamma, \theta(y))$ to get rid of the inverse function and simplify the lower bound.}, we have $\alpha_u + \alpha_v = w_{uv}$.

In addition, for any $y_u \in (0, \min\{\theta(y), \gamma\})$, 
\begin{itemize}
    \item if $\beta(y_u) > y$, then together with $y_u < \theta(y)$, we have that $u$ is matched with $v$ and $\alpha_u \ge h(y_u)g(y)w_{uv}$ by Lemma~\ref{lemma:extra-gain}; 
    \item if $\beta(y_u) \le y$, then by Lemma~\ref{lemma:basic-gain}, $\alpha_u \ge h(y_u)g(\beta(y_u))w_{uv}\ge h(y_u)g(y)w_{uv}$ (the last inequality holds since $g$ is non-increasing).
\end{itemize}

Thus, in both cases $\alpha_u \ge  h(y_u)g(y)w_{uv}$. Moreover, If $\theta(y) < \gamma$, then when $y_u \in (\theta(y), \gamma)$, by Lemma~\ref{lemma:basic-gain} we have $\alpha_u \ge h(y_u)g(\beta(y_u))w_{uv}$.
Let $\bar{\theta}(y) = \min\{ \theta(y), \gamma \}$ and $\hat{\theta}(y) = \max\{ \theta(y), \gamma \}$. 
Summarizing the above arguments, we have
\begin{align*}
    & \quad \frac{1}{w_{uv}} \cdot \E[(\alpha_u + \alpha_v)\cdot \mathbf{1}(y_v < \tau)]
    = \frac{1}{w_{uv}} \cdot \int_0^{\tau} \int_0^1 (\alpha_u + \alpha_v) \d y_u \d y_v \\
    & \ge \frac{1}{w_{uv}} \cdot \int_0^\tau \left( \int_0^\gamma \alpha_v \d y_u + \int_0^{\bar{\theta}(y_v)} \alpha_u\d y_u + \int_{\bar{\theta}(y_v)}^{\gamma} \alpha_u\d y_u  + \int_\gamma^{\hat{\theta}(y_v)} (\alpha_u + \alpha_v)\d y_u + \int_{\hat{\theta}(y_v)}^1 \alpha_v\d y_u  \right) \d y_v\\
    & \ge \int_0^\tau \Bigg( \int_0^\gamma h(y_v)g(\theta(y_v)) \d y_u + \int_0^{\bar{\theta}(y_v)} h(y_u)g(y_v)\d y_u + \int_{\bar{\theta}(y_v)}^{\gamma} h(y_u)g(\beta(y_u))\d y_u \\
    &\qquad\qquad\qquad + \int_\gamma^{\hat{\theta}(y_v)} 1\d y_u + \int_{\hat{\theta}(y_v)}^1 h(y_v)g(\theta(y_v)) \d y_u \Bigg) \d y_v\\
    & = \int_0^\tau \Bigg( ( 1 - (\theta(y_v) - \gamma)^+ )h(y_v)g(\theta(y_v)) + (\theta(y_v) - \gamma)^+ \\
    & \qquad \qquad \qquad  + \int_0^{\bar{\theta}(y_v)} h(y_u)g(y_v)\d y_u + \int_{\bar{\theta}(y_v)}^\gamma h(y_u)g(\beta(y_u)) \d y_u \Bigg) \d y_v.
\end{align*}

Next, we put the lower bounds of $\E[(\alpha_u +\alpha_v) \cdot \mathbf{1}(y_v > \tau, y_u < \gamma)]$ and $\E[(\alpha_u + \alpha_v)\cdot \mathbf{1}(y_v < \tau)]$ together, and show that the minimum is attained when $\beta(y_u) = \tau$ and $\theta(y_v) \leq \gamma$.
Note that
\begin{align}
    &\ \frac{1}{w_{uv}}\cdot \E[(\alpha_u +\alpha_v) \cdot \mathbf{1}(y_v > \tau, y_u < \gamma)] + \frac{1}{w_{uv}}\cdot \E[(\alpha_u + \alpha_v)\cdot \mathbf{1}(y_v < \tau)] \nonumber \\ 
    & \ge \int_0^\gamma \Big( \big(1 - \max\{\tau, \beta(x) \}\big)h(x)g(\beta(x)) + (\beta(x) - \tau)^+ \Big) \d x 
    + \int_0^\tau \Bigg( ( 1 - (\theta(y) - \gamma)^+ )h(y)g(\theta(y)) \nonumber \\
    & \qquad\qquad\qquad  + (\theta(y) - \gamma)^+ + \int_0^{\bar{\theta}(y)} h(x)g(y)\d x + \int_{\bar{\theta}(y)}^\gamma h(x)g(\beta(x))\d x\Bigg) \d y \label{equation:analytical_lower_bound_2+3} .
\end{align}

We show that the above lower bound is minimized when $\beta(x) = \tau$ for all $x < \gamma$.
Clearly, when $\beta(x) \leq \tau$ (in which case $\max\{\tau, \beta(x) \} = \tau$ and $(\beta(x) - \tau)^+ = 0$), the lower bound is non-increasing in $\beta(x)$ since $g$ is non-increasing.
Now consider the partial derivative of the lower bound over $\beta(x) > \tau$, we obtain (for ease of notation, we use $\beta$ to denote $\beta(x)$):
\begin{align*}
    & \int_0^\gamma \big( 1 - h(x)\cdot g(\beta) + (1-\beta)\cdot h(x)\cdot g'(\beta) \big) \d x + \int_0^\tau \int_{\bar{\theta}(y)}^\gamma h(x)\cdot g'(\beta) \d x \d y \\
    \geq &\ \int_0^\gamma \big( 1 - h(x)\cdot g(\beta) + (1-\beta)\cdot h(x)\cdot g'(\beta) \big) \d x + \tau\cdot \int_{0}^\gamma h(x)\cdot g'(\beta) \d x  \\
    \geq &\ \int_0^\gamma \big( 1 - h(x)\cdot g(\beta) + h(x)\cdot g'(\beta) \big) \d x \geq 0,
\end{align*}
where the first inequality holds since $g'(\beta) \leq 0$; the second inequality holds since $\beta > \tau$, the last inequality holds since $h(x)\cdot (g(\beta) - g'(\beta)) \leq 1$, by the assumption of Lemma~\ref{lemma:analytical_lower_bound}.

Therefore the RHS of Equation~\eqref{equation:analytical_lower_bound_2+3} is minimized when $\beta(x) = \tau$ for all $x$, which becomes
\begin{align}
    & \int_0^\gamma (1 - \tau)h(x)g(\tau) \d x 
    + \int_0^\tau \Bigg( ( 1 - (\theta(y) - \gamma)^+ )h(y)g(\theta(y)) \nonumber \\
    & \qquad\qquad\qquad  + (\theta(y) - \gamma)^+ + \int_0^{\bar{\theta}(y)} h(x)g(y)\d x + \int_{\bar{\theta}(y)}^\gamma h(x)g(\tau)\d x\Bigg) \d y \label{equation:analytical_lower_bound_2+3_simplified} .
\end{align}

Notice that the above lower bound (together with Equation~\eqref{equation:analytical_lower_bound_1}) is already very similar to the one claimed in Lemma~\ref{lemma:analytical_lower_bound}.
It remains to show that the above lower bound attains its minimum when $\theta(y) \leq \gamma$.
Recall that $\bar{\theta}(y) = \min\{ \theta(y), \gamma \} = \gamma$ when $\theta(y) > \gamma$.
Hence when $\theta(y) > \gamma$, the partial derivative of Expression~\eqref{equation:analytical_lower_bound_2+3_simplified} over $\theta(y)$ is (where $\theta$ denotes $\theta(y)$)
\begin{equation*}
    \int_0^\tau \big( 1 - h(y)\cdot g(\theta) + (1-\theta+\gamma)\cdot h(y)\cdot g'(\theta) \big) \d y \geq \int_0^\tau \big( 1 - h(y)\cdot (g(\theta) -g'(\theta)) \big) \d y \geq 0.
\end{equation*}

Thus Expression~\eqref{equation:analytical_lower_bound_2+3_simplified} is minimized when $\theta(y) \leq \gamma$, in which case we further lower bound it by
\begin{equation*}
   \int_0^\gamma (1 - \tau)h(x)g(\tau) \d x 
    + \int_0^\gamma \min_{t\leq \gamma} \left\{ h(y)g(t)
   + \int_0^{t} h(x)g(y)\d x + \int_{t}^\gamma h(x)g(\tau)\d x \right\}  \d y .
\end{equation*}

Putting the above lower bound and Equation~\eqref{equation:analytical_lower_bound_1} together completes the proof of the lemma.

\subsection{Fixing \texorpdfstring{$g$}{} and \texorpdfstring{$h$}{}}

Let $g(y) = a - b \cdot \exp(\min\{y,c\})$, where $a = 1.171, b =0.339 , c = 0.652$, and $h(y) = \sqrt{1-g^2(y)}$.
We plot $g$ and $h$ in \Cref{fig:fixing-g-and-h-analysis}.
By definition, $g(y)$ is strictly decreasing when $y\in (0,c)$, and is equal to $g(1) \approx 0.5203$ for all $y\in [c,1)$.
We have $g(y) \in (0.52, 0.832]$ and $h(y)\in (0.5547,0.854)$.
Furthermore, we have
\begin{equation*}
    g'(y) = \begin{cases}
        - b\cdot e^y, & \text{ for } y \leq c \\
        0,  & \text{ for } y > c.
    \end{cases}
\end{equation*}

For notational convenience, we let $G(y) = \int_0^y g(x) \d x$ and $H(y) = \int_0^y h(x) \d x$.
Note that we have $G(1) \approx 0.6329$ and $H(1) \approx 0.76016$.
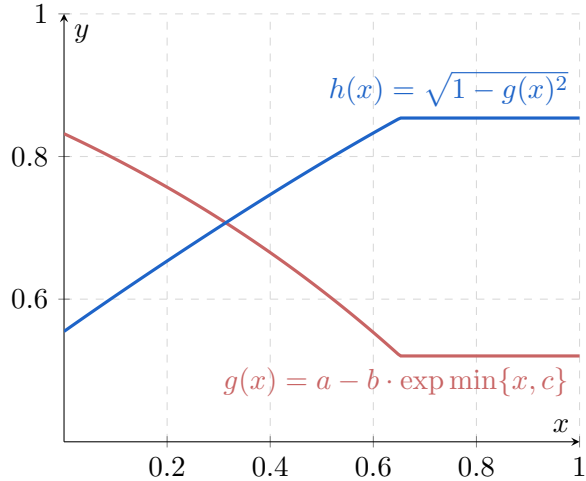
\begin{figure}[!ht]
    \centering
    \begin{tikzpicture}
      \begin{axis}[
        axis lines = middle,
        xlabel = \( x \),
        ylabel = \( y \),
        xmin = 0,    %
        xmax = 1,     %
        ymin = 0.4,    %
        ymax = 1,   %
        grid = both,  %
        grid style = {dashed, gray!30}, %
        samples = 200, %
        domain = 0:1, %
      ]
    
    \addplot[myRed, very thick] {1.171 - 0.339 * exp(min(x, 0.652))} node[below left] {$g(x) = a - b\cdot \exp{\min\{x, c\}}$};
    \addplot[myBlue, very thick] {sqrt(1 - (1.171 - 0.339 * exp(min(x, 0.652)))^2)} node[above left] {$h(x) = \sqrt{1 - g(x)^2}$};
    
    \end{axis}
    \end{tikzpicture}

    \caption{Illustration for $g$ and $h$}
    \label{fig:fixing-g-and-h-analysis}
\end{figure}

We first show that the fixed functions satisfy the condition of Lemma~\ref{lemma:analytical_lower_bound}.

\begin{lemma}    \label{lemma:g-h-sat-cond}
    We have $h(1)\cdot (g(y) - g'(y)) \leq 1$ for all $y\in [0,1)$.
\end{lemma}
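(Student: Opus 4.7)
The plan is to split into the two natural cases determined by the piecewise definition of $g$, exploit the fact that the exponential form of $g$ on $[0, c)$ makes $g(y) - g'(y)$ collapse to a constant, and then verify two clean numerical inequalities.

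First I would handle the range $y \in [0, c)$. Since $g(y) = a - b e^y$ and $g'(y) = -b e^y$ there, the exponential cancels:
\[
g(y) - g'(y) \;=\; (a - b e^y) - (-b e^y) \;=\; a.
\]
Notice this is precisely the reason the authors chose an exponential discount on $[0, c)$ with matching coefficient $b$: the condition of \Cref{lemma:analytical_lower_bound} becomes the \emph{same} constant inequality across all of $[0, c)$. So on this range the required bound reduces to
\[
h(1) \cdot a \;\le\; 1,
\]
i.e.\ $a \sqrt{1 - g(1)^2} \le 1$, equivalently $a^2\bigl(1 - (a - b e^c)^2\bigr) \le 1$. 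I would plug in $a = 1.171$, $b = 0.339$, $c = 0.652$ and check this numerically; this is the one genuinely tight inequality and is essentially the constraint that pinned down the choice of $(a, b, c)$ in \Cref{theorem:analytical-lower-bound}.

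Next I would handle $y \in [c, 1)$. Here $g(y) = a - b e^c = g(1)$ is constant and $g'(y) = 0$ (taking the right derivative at $c$), so $g(y) - g'(y) = g(1)$, and the required inequality becomes
\[
h(1) \cdot g(1) \;\le\; 1.
\]
Since $h(1)^2 + g(1)^2 = 1$ by the definition $h(y) = \sqrt{1 - g(y)^2}$, the AM-GM inequality gives $h(1) g(1) \le \tfrac{1}{2}(h(1)^2 + g(1)^2) = \tfrac{1}{2} \le 1$, so this case is free.

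The only real obstacle is the tightness of the first case: the product $a \sqrt{1 - g(1)^2}$ rounds to $1$ at four significant figures, so the numerical check must be carried out carefully (the constants $a, b, c$ appear to have been calibrated precisely to make $h(1) \cdot a$ sit just below $1$). Once that verification is done, combining the two cases gives the lemma for every $y \in [0, 1)$.
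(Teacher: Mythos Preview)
Your proposal is correct and follows essentially the same approach as the paper. The paper unifies your two cases by observing $g(y)-g'(y)\le a$ for all $y\in[0,1)$ (equality on $[0,c)$, strict inequality on $[c,1)$ since $g(1)=a-be^c<a$) and then performs the single numerical check $h(1)\cdot a=a\sqrt{1-(a-be^c)^2}\approx 0.999992<1$; your case split and the AM--GM shortcut for $y\ge c$ are minor presentational differences.
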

\begin{proof}
    By definition for all $y\in [0,1)$ we have
    \begin{align*}
        g(y) - g'(y) \leq a - b\cdot e^y + b\cdot e^y = a,
    \end{align*}
    which implies
    \begin{equation*}
        h(1)\cdot ( g(y) - g'(y) ) \leq h(1)\cdot a = a\cdot \sqrt{1-(a-b\cdot e^c)^2} \approx 0.999992 < 1. \qedhere
    \end{equation*}
\end{proof}

Recall from Lemma~\ref{lemma:analytical_lower_bound} that we have the following lower bound on $\frac{\E[\alpha_u + \alpha_v]}{w_{uv}}$:
\begin{align*}
    (1 - \tau)(1 - \gamma) 
    + (1-\tau)\cdot g(\tau)\cdot H(\gamma) 
    + \int_0^\tau \min_{t \leq \gamma}\{ f(\tau,\gamma,y,t) \} \d y.
\end{align*}
where
\begin{equation*}
    f(\tau,\gamma,y,t) := h(y)\cdot g(t) + (g(y) - g(\tau))\cdot H(t) + g(\tau)\cdot H(\gamma).
\end{equation*}

In the following, we show that this lower bound is at least $0.63245$ for every $\tau$ and $\gamma$.
We first show a lemma that enables us to remove the variable $t$.

\begin{lemma} \label{lemma:minimum_of_theta}
    For all $y\leq \tau$, we have $\frac{\partial f(\tau,\gamma,y,t)}{\partial t} \leq 0$ when $t \leq c$; $\frac{\partial f(\tau,\gamma,y,t)}{\partial t} \geq 0$ when $t > c$.
\end{lemma}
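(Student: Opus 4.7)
The plan is to compute $\partial f/\partial t$ directly and then split on the piecewise form of $g'$. Since $f(\tau,\gamma,y,t) = h(y)\, g(t) + (g(y) - g(\tau))\, H(t) + g(\tau)\, H(\gamma)$ and $H' = h$, we have
\[
    \frac{\partial f}{\partial t} ~=~ h(y)\, g'(t) + \bigl(g(y) - g(\tau)\bigr)\, h(t).
\]

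The regime $t > c$ is immediate: $g'(t) = 0$ by the piecewise definition of $g$, so the derivative collapses to $(g(y) - g(\tau))\, h(t)$, which is non-negative since $h \geq 0$ and $g(y) \geq g(\tau)$ by the monotonicity of $g$ combined with $y \leq \tau$.

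For $t \leq c$ we have $g'(t) = -b e^t$, and the desired inequality $\partial f/\partial t \leq 0$ becomes $(g(y) - g(\tau))\, h(t) \leq b e^t\, h(y)$. My plan is to bound each side by quantities that are easier to compare. On the left, the monotonicity of $g$ yields $g(y) - g(\tau) \leq g(0) - g(1) = (a-b) - (a - be^c) = b(e^c - 1)$, so it suffices to establish the cleaner inequality $(e^c - 1)\, h(t) \leq e^t\, h(y)$. The key auxiliary step would be to show that $h(t)/e^t$ is non-increasing on $[0,c]$; combined with $h$ being non-decreasing, this chain gives $(e^c - 1)\, h(t) \leq (e^c - 1)\, h(0)\, e^t \leq h(0)\, e^t \leq h(y)\, e^t$, where the middle inequality uses $e^c \leq 2$, i.e., $c \leq \ln 2$, which holds for $c = 0.652$.

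The main obstacle will be verifying the monotonicity of $h(t)/e^t$ on $[0,c]$. Differentiating reduces this to $h'(t) \leq h(t)$. Using $h(t)^2 = 1 - g(t)^2$ we get $h'(t) = -g(t) g'(t)/h(t) = b e^t g(t)/h(t)$, so the inequality becomes $b e^t g(t) \leq 1 - g(t)^2$, i.e., $g(t)(g(t) + b e^t) \leq 1$. Since $g(t) + b e^t = a$ by construction and $g(t) \leq g(0) = a - b$, this reduces to the numerical check $a(a-b) \leq 1$, which is immediate for the given constants. All the analysis thus boils down to two elementary verifications on the parameters, $c \leq \ln 2$ and $a(a-b) \leq 1$.
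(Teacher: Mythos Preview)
Your proof is correct, but it follows a different route from the paper's. Both arguments handle the case $t > c$ identically. For $t \le c$, the paper instead computes the \emph{second} derivative $\phi''(t) = h(y)\,g''(t) + (g(y)-g(\tau))\,h'(t)$, bounds it above by a negative constant (using $g(y)-g(\tau) \le g(0)-g(1)$ and $h(y) \ge h(0)$), and concludes that $\phi'(t) \le \phi'(0)$; a direct numerical evaluation then gives $\phi'(0) \le h(0)\bigl(g'(0) + g(0) - g(1)\bigr) = h(0)\,b\,(e^c - 2) < 0$. Your approach bypasses the second derivative by isolating the auxiliary monotonicity of $h(t)/e^t$ on $[0,c]$ (equivalent to $a\,g(t) \le 1$, which incidentally is the same inequality driving the paper's Lemma~B.6) and then chaining the bounds $(e^c-1)\,h(t) \le (e^c-1)\,h(0)\,e^t \le h(0)\,e^t \le h(y)\,e^t$. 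Both arguments ultimately hinge on the numerical fact $e^c < 2$; yours packages it more transparently as two clean parameter checks $c \le \ln 2$ and $a(a-b) \le 1$, while the paper absorbs it into endpoint evaluations of $\phi'$ and $\phi''$.
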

\begin{proof}
    For ease of notation, let $\phi(t) = f(\tau,\gamma,y,t)$.
    For $t > c$, we have $g'(t) = 0$, which implies
    \begin{align*}
        \phi'(t) = (g(y) - g(\tau))h(t) \ge 0.
    \end{align*}

    For $t \leq c$, we have
    \begin{align*}
        \phi''(t) &= h(y)g''(t) + (g(y) - g(\tau))h'(t)\\
        &= h(y)\left(-b\cdot e^t\right) + (g(y) - g(\tau))\frac{b\cdot e^t\cdot g(t)}{\sqrt{1 - g(t)^2}} \\
        & = b\cdot e^t\left( \frac{g(y) - g(\tau)}{\sqrt{1/g(t)^2 - 1}} - h(y) \right) \\
        & \le b\cdot e^t\left( \frac{g(0) - g(1)}{\sqrt{1/g(0)^2 - 1}} - h(0) \right) < b\cdot e^t \cdot (-0.087) < 0.
    \end{align*}

    Therefore we have $\phi'(t) \le \phi'(0) \le h(0)g'(0) + (g(0) - g(1))h(0) < -0.015 < 0$.
\end{proof}

Given this lemma, we have
\begin{equation*}
    \min_{t\leq \gamma}\{ f(\tau,\gamma,y,t) \} = f(\tau,\gamma,y, \min\{ \gamma,c \} ).
\end{equation*}

In the following, we consider two cases: $\gamma \leq c$ and $\gamma > c$, and show that $\frac{\E[\alpha_u + \alpha_v]}{w_{uv}}$ is at least $0.63245$.
We first show a technical lemma, which will be useful for deriving the lower bounds.

\begin{lemma}\label{lemma:g'/h-decreasing}
    When $y\in[0, c]$, $\frac{g'(y)}{h(y)}$ is non-increasing.
\end{lemma}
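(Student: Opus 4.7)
\begin{proofof}{Lemma \ref{lemma:g'/h-decreasing} (Proposal)}
The plan is a direct calculation of the derivative of $\frac{g'(y)}{h(y)}$ on $[0,c]$, followed by a sign check that exploits the specific closed form of $g$.

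First, I would record the ingredients. On $[0,c]$ we have $g'(y) = -b\,e^{y}$ and hence $g''(y) = -b\,e^{y} = g'(y)$. Since $h(y) = \sqrt{1-g(y)^{2}}$, differentiation gives $h'(y) = -\frac{g(y)\,g'(y)}{h(y)}$. Substituting into the quotient rule yields
\[
\frac{d}{dy}\!\left(\frac{g'(y)}{h(y)}\right)
= \frac{g''(y)\,h(y) - g'(y)\,h'(y)}{h(y)^{2}}
= \frac{g'(y)\bigl[h(y)^{2} + g(y)\,g'(y)\bigr]}{h(y)^{3}}.
\]
Because $g'(y) < 0$ and $h(y) > 0$ on $[0,c]$, proving the lemma reduces to showing that the bracketed factor is non-negative, i.e.\ $h(y)^{2} + g(y)\,g'(y) \ge 0$.

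The key step is then an algebraic simplification that uses the explicit form of $g$. Expanding,
\[
h(y)^{2} + g(y)\,g'(y)
= 1 - g(y)^{2} - b\,e^{y}\,g(y)
= 1 - g(y)\bigl(g(y) + b\,e^{y}\bigr)
= 1 - a\,g(y),
\]
where the last equality uses $g(y) + b\,e^{y} = a$ on $[0,c]$. So the lemma is equivalent to $a \cdot g(y) \le 1$ for $y \in [0,c]$.

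Finally, I would verify this inequality with the chosen constants. Since $g$ is non-increasing, $g(y) \le g(0) = a - b$, and the inequality becomes $a(a-b) \le 1$. Plugging in $a = 1.171$ and $b = 0.339$ gives $a(a-b) \approx 1.171 \cdot 0.832 \approx 0.9743 < 1$, which closes the argument. I do not expect any real obstacle here; the only subtle point is noticing the cancellation $g(y) + b\,e^{y} = a$ that collapses the bracket to the clean condition $a\,g(y) \le 1$, and that this condition holds with some slack for the specific $(a,b,c)$ fixed just before the lemma.
\end{proofof}
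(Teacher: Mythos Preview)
The proposal is correct and follows essentially the same approach as the paper: both differentiate $g'/h$, use $g'' = g'$ together with $h' = -gg'/h$ to reduce the sign question to whether $1 - a\,g(y) \ge 0$, and then check $a(a-b) < 1$ numerically. The only difference is cosmetic (the paper factors the derivative as $\frac{b e^y}{h^3}(ag - 1)$ rather than $\frac{g'}{h^3}(1 - ag)$).
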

\begin{proof}
    By definition and that $g(y) \leq g(0) = a-b < 1/a$, we have
    \begin{align*}
        \left( \frac{g'(y)}{h(y)} \right)' 
        & = \frac{b\cdot e^y}{h^2(y)}\left( h'(y) - h(y) \right)\\
        & = \frac{b\cdot e^y}{h^2(y) \sqrt{1 - g(y)^2}}\Big( g(y)\left( g(y) - g'(y) \right) - 1 \Big)\\
        & = \frac{b\cdot e^y}{h^3(y)}\Big( g(y)\cdot a -1 \Big) < 0. \qedhere
    \end{align*}
\end{proof}

\subsection{Lower Bounding the Competitive Ratio}

We first consider the case when $\gamma \leq c$.

\begin{lemma} \label{lemma:analytical_lower_bound_gamma_le_c}
    When $\gamma \leq c$, we have $\frac{\E[\alpha_u + \alpha_v]}{w_{uv}} \ge 0.63245$.
\end{lemma}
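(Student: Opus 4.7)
The plan is to combine \Cref{lemma:analytical_lower_bound} with \Cref{lemma:minimum_of_theta} to reduce the claim to a two-variable optimization over $(\tau, \gamma)$. Since $\gamma \le c$, \Cref{lemma:minimum_of_theta} implies that $\min_{t \le \gamma} f(\tau, \gamma, y, t)$ is attained at $t = \gamma$, collapsing the inner expression to $h(y) g(\gamma) + g(y) H(\gamma)$. Integrating this in $y$ over $[0, \tau]$ and substituting into \Cref{lemma:analytical_lower_bound} yields the closed form
\begin{align*}
\frac{\E[\alpha_u + \alpha_v]}{w_{uv}} \ge F(\tau, \gamma) := (1-\tau)(1-\gamma) + (1-\tau)\, g(\tau)\, H(\gamma) + g(\gamma)\, H(\tau) + G(\tau)\, H(\gamma),
\end{align*}
so the goal becomes to show $F(\tau, \gamma) \ge 0.63245$ on the rectangle $[0,1] \times [0, c]$.

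The key observation that dictates the proof strategy is that the bound is essentially tight at the corner $(\tau, \gamma) = (1, 0)$: a direct calculation gives $F(1, 0) = g(0)\, H(1) \approx 0.6324526$, which is precisely what forces the specific choice of $a, b, c$. The plan is then to show that $(1, 0)$ is the unique worst case. First, I would minimize $F$ in $\tau$ for each fixed $\gamma$ by analyzing
\begin{align*}
\partial_\tau F(\tau, \gamma) = -(1 - \gamma) + (1 - \tau)\, g'(\tau)\, H(\gamma) + g(\gamma)\, h(\tau).
\end{align*}
Using $g' \le 0$, the uniform bounds $g(\gamma) \le g(0)$ and $h(\tau) \le h(1)$, together with \Cref{lemma:g'/h-decreasing} to control the shape of the critical-point equation, I expect to show that $\partial_\tau F$ is either uniformly negative (so the minimizer is $\tau^* = 1$) or has a unique interior zero. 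In the boundary case $\tau^* = 1$, the reduction is $F(1, \gamma) = g(\gamma)\, H(1) + G(1)\, H(\gamma)$, whose derivative $g'(\gamma)\, H(1) + G(1)\, h(\gamma)$ can be analyzed on $[0, c]$ using the explicit form of $g$ to conclude the minimum is at $\gamma = 0$, which matches the target constant.

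The main obstacle is the interior-critical-point regime for $\tau$, where $\tau^*(\gamma)$ has no closed form. My plan is to use an envelope argument: differentiating $F(\tau^*(\gamma), \gamma)$ in $\gamma$ collapses to $\partial_\gamma F$ evaluated at $(\tau^*(\gamma), \gamma)$, which depends only on $g(\tau^*)$, $g'(\gamma)$, $h(\gamma)$, $H(\tau^*)$, $G(\tau^*)$ and is amenable to coarse bounds via the numerical ranges of $g$ and $h$. If this analytic path cannot close the gap to the $0.63245$ threshold everywhere, a fallback is to partition $[0, c]$ into a small number of sub-intervals and bound $F$ on each via the sign of $\partial_\tau F$ and $\partial_\gamma F$ together with the endpoint values, leveraging the monotonicities established above. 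Throughout, the tightness of the choice of $a, b, c$ means that the quantitative estimates must be fairly sharp, and \Cref{lemma:g'/h-decreasing} will likely be the main analytic tool keeping error accumulation under control.
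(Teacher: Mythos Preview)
Your reduction to $F(\tau,\gamma)$ and the identification of the tight corner $(1,0)$ with $F(1,0)=g(0)H(1)$ are both correct, and your plan would in principle succeed. The paper, however, inserts one extra relaxation that you miss and that makes the analysis much cleaner: since $g$ is non-increasing, $(1-\tau)g(\tau)\ge\int_\tau^1 g(y)\,\d y$, so $(1-\tau)g(\tau)+G(\tau)\ge G(1)$, and the paper replaces your $F$ by the looser
\[
\ell(\tau,\gamma)=(1-\tau)(1-\gamma)+G(1)\,H(\gamma)+H(\tau)\,g(\gamma).
\]
This kills the awkward $(1-\tau)g(\tau)H(\gamma)$ term, and the remaining function is simple enough that the paper can minimize in $\gamma$ first (showing $\partial_\gamma\ell<0$ when $\tau\le c$, so the minimum is at $\gamma=c$, then solving $\partial_\tau\ell(\tau,c)=0$ in closed form for $\tau^*\approx 0.2321$), and handle $\tau\ge c$ by linearity of $\ell$ in $\tau$. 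The relaxation is tight at $(1,0)$, so nothing is lost at the worst case.

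Your direct route with $F$ is a genuinely different and tighter approach, but it buys nothing here: the relaxation is already sharp at the minimizer, and the price you pay is exactly the envelope argument and possible interval-partitioning you anticipate, since (as you can check) $\partial_\tau F(\tau,c)$ changes sign and the interior critical point has no closed form. The paper's relaxation avoids all of that with a one-line inequality.
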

\begin{proof}
    Given the above analysis, we have the following lower bound on $\frac{\E[\alpha_u + \alpha_v]}{w_{uv}}$:
    \begin{align*}
    &\ (1 - \tau)(1 - \gamma) 
    + (1-\tau)\cdot g(\tau)\cdot H(\gamma) 
    + \int_0^\tau f(\tau,\gamma,y,\gamma) \d y \\
    = &\ (1 - \tau)(1 - \gamma) + \left((1-\tau)\cdot g(\tau) + G(\tau) \right) H(\gamma) + H(\tau)\cdot g(\gamma) \\
        \geq & (1 - \tau)(1 - \gamma) + G(1)\cdot H(\gamma) +  H(\tau)\cdot g(\gamma) = \ell(\tau, \gamma),
    \end{align*}
    where the inequality holds since $g(y)$ is non-increasing and $(1 - \tau)g(\tau) \ge \int_{\tau}^{1}g(y)\d y$.

    When $\tau \leq c$, we have (recall that we also have $\gamma \leq c$)
    \begin{align*}
        \frac{\partial \ell(\tau, \gamma)}{\partial \gamma} & = \tau - 1 + G(1)\cdot h(\gamma) + H(\tau)\cdot g'(\gamma) \\
        & = \tau - 1 + h(\gamma)\cdot \left(G(1) + \frac{g'(\gamma)}{h(\gamma)}H(\tau) \right)\\
        & \le c - 1 + h(c)\cdot \left(G(1) + \frac{g'(0)}{h(0)}H(c) \right) < -0.049<0,
    \end{align*}
    where the first inequality holds by Lemma~\ref{lemma:g'/h-decreasing}.
    Therefore we have 
    \begin{equation*}
        \ell(\tau, \gamma) \geq \ell(\tau, c) = 
        (1 - \tau)(1 - c) + G(1)\cdot H(c) +  H(\tau)\cdot g(c).
    \end{equation*}

    Observe that $\frac{\partial \ell(\tau, c)}{\partial \tau} = h(\tau)\cdot g(c) - (1-c)$ is non-decreasing in $\tau$. Let $h(\tau^*) = \frac{1-c}{g(c)}$, we have
    \begin{equation*}
        \tau^* = \ln\left( \frac{1}{b}\cdot a - \sqrt{1-\left( \frac{1-c}{a-b\cdot e^c} \right)^2} \right) \approx 0.2321,
    \end{equation*}
    which implies
    \begin{equation*}
        \ell(\tau,c) \geq \ell(\tau^*,c) = (1 - \tau^*)(1 - c) + G(1)\cdot H(c) +  H(\tau^*)\cdot g(c) > 0.634.
    \end{equation*}

    Next we consider the case when $\tau \ge c$. 
    Since we have $h(y) = h(c)$ for all $y > c$, the function $\ell(\tau, \gamma)$ is linear in $\tau$. 
    Hence, $\ell(\tau, \gamma) \ge \min\{\ell(c, \gamma), \ell(1, \gamma)  \}$, where 
    \begin{align*}
        \ell(c, \gamma) \ge \ell(c, 0) = (1 - c) + g(0)\cdot H(c) > 0.73,
    \end{align*}
    where the first inequality holds since $\frac{\partial \ell(c, \gamma)}{\partial \gamma} < 0$ by our previous analysis.
    Finally, observe that
    \begin{align*}
        \ell(1, \gamma) = G(1)\cdot H(\gamma) + H(1)\cdot g(\gamma),
    \end{align*}
    whose derivative over $\gamma$ is
    \begin{align*}
        \frac{\partial \ell(1, \gamma)}{\partial \gamma} = G(1)\cdot h(\gamma) + H(1)\cdot g'(\gamma) 
        \ge \left( G(1)\cdot \frac{h(0)}{g'(0)} + H(1) \right) \cdot g'(\gamma) > -0.27\cdot g'(\gamma) > 0,
    \end{align*}
    the first inequality holds by Lemma~\ref{lemma:g'/h-decreasing}.
    Therefore we have
    \begin{equation*}
        \ell(1, \gamma) \ge \ell(1, 0) = g(0)\cdot H(1) > 0.63245. \qedhere
    \end{equation*}
\end{proof}

It remains to consider the case when $\gamma > c$.

\begin{lemma} \label{lemma:analytical_lower_bound_gamma_ge_c}
    When $\gamma > c$, we have $\frac{\E[\alpha_u + \alpha_v]}{w_{uv}} \ge 0.63245$.
\end{lemma}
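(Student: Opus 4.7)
The plan is to mirror the structure of the proof of Lemma \ref{lemma:analytical_lower_bound_gamma_le_c}, exploiting the fact that both $g$ and $h$ are constant on $[c,1]$, so the case $\gamma>c$ essentially reduces to boundary analyses at $\gamma=c$ and $\gamma=1$.

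First, since $\gamma > c$, Lemma \ref{lemma:minimum_of_theta} gives $\min_{t\le\gamma} f(\tau,\gamma,y,t) = f(\tau,\gamma,y,c)$. Substituting and integrating yields
\[
    \ell(\tau,\gamma) \defeq (1-\tau)(1-\gamma) + g(\tau)\,H(\gamma) + H(\tau)\,g(c) + (G(\tau) - \tau g(\tau))\,H(c).
\]
Because $h(y)=h(c)$ for $y\in[c,1]$, $\ell(\tau,\gamma)$ is \emph{linear} in $\gamma$ on $[c,1]$ with slope $g(\tau)h(c)-(1-\tau)$, independent of $\gamma$. Hence the minimum over $\gamma\in(c,1]$ lies at $\gamma=c$ or $\gamma=1$.

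For $\gamma=c$, I would use $(1-\tau)g(\tau)+G(\tau)\ge G(1)$ (valid by monotonicity of $g$) to rewrite $\ell(\tau,c) \ge (1-\tau)(1-c) + G(1)H(c) + H(\tau)g(c)$; this is precisely the bound analyzed at $\gamma=c$ in the proof of Lemma \ref{lemma:analytical_lower_bound_gamma_le_c}, so we directly inherit $\ell(\tau,c)\ge 0.63245$.

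For $\gamma=1$, I would split on $\tau$. When $\tau\ge c$, using that $g$ is constant on $[c,1]$, $\ell(\tau,1)$ becomes affine in $\tau$ with positive slope $g(c)h(c)$, so $\ell(\tau,1)\ge\ell(c,1)$, and a direct numerical evaluation shows $\ell(c,1)\approx 0.69$, comfortably above $0.63245$. When $\tau\in[0,c]$, the target is $\ell(\tau,1)\ge\ell(0,1)=g(0)H(1)$; the parameters $a,b,c$ were chosen exactly so that $g(0)H(1)\approx 0.63245$, making the bound tight precisely at the corner $(\tau,\gamma)=(0,1)$.

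The hard part will be showing the monotonicity $\partial_\tau\ell(\tau,1)\ge 0$ on $[0,c]$, which after differentiation amounts to
\[
    h(\tau)\,g(c) \;\ge\; b\, e^{\tau}\,\bigl(H(1) - \tau H(c)\bigr).
\]
Neither side is monotone in a way that makes the comparison immediate, so I expect this step to carry the technical weight. A promising route is to first check $a\,g(\tau)<1$ for $\tau\in[0,c]$ (easy: $a(a-b)\approx 0.974$), which via the identity $h'(\tau)-h(\tau)=\frac{a g(\tau)-1}{h(\tau)}$ makes $h(\tau)/e^{\tau}$ strictly decreasing; one can then verify the difference of the two sides at the endpoints $\tau=0$ and $\tau=c$ (both positive by direct calculation using the closed forms for $g$ and $H(c)$) and combine this with a convexity/second-derivative estimate, or with an explicit numerical lower bound on $h(\tau)e^{\tau}$ using the explicit expression for $g$, to conclude the inequality throughout $[0,c]$. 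Once this technical inequality is established, the remainder of the proof is a routine adaptation of the argument in Lemma \ref{lemma:analytical_lower_bound_gamma_le_c}.
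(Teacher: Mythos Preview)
Your overall strategy matches the paper exactly: substitute $t=c$, observe linearity in $\gamma$ on $[c,1]$, reduce to $\gamma\in\{c,1\}$, inherit the $\gamma=c$ case from Lemma~\ref{lemma:analytical_lower_bound_gamma_le_c} (your observation that the two bounds coincide at $\gamma=c$ after the $(1-\tau)g(\tau)+G(\tau)\ge G(1)$ relaxation is exactly right), and for $\gamma=1$ show $\partial_\tau\ell(\tau,1)\ge 0$ so that the minimum is $\ell(0,1)=g(0)H(1)>0.63245$.

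The only difference is in the step you flag as the ``hard part''. The paper dispatches it in one line rather than through endpoint checks and convexity: write
\[
\partial_\tau\ell(\tau,1)\;=\;h(\tau)\Bigl(g(c)+\frac{g'(\tau)}{h(\tau)}\bigl(H(1)-\tau H(c)\bigr)\Bigr),
\]
and show that $\tfrac{g'(\tau)}{h(\tau)}\bigl(H(1)-\tau H(c)\bigr)$ is non-increasing on $[0,c]$ by a direct derivative computation in the spirit of Lemma~\ref{lemma:g'/h-decreasing}---which ultimately rests on the same fact $a\,g(\tau)<1$ that you already identified. The bracket is therefore minimized at $\tau=c$, where it evaluates to $g(c)+\tfrac{g'(c)}{h(c)}(H(1)-cH(c))\approx 0.1711>0$, and you are done. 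Your proposed detour through the monotonicity of $h(\tau)/e^\tau$ plus an unspecified convexity or numerical estimate is not needed and, as you anticipated, is not a complete argument as written; factoring out $h(\tau)$ is the clean move.
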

\begin{proof}
    Recall that in this case we have $\min_{t\leq \gamma}\{ f(\tau,\gamma,y,t) \} = f(\tau,\gamma,y, c )$, and thus
    \begin{align*}
        \frac{\E[\alpha_u + \alpha_v]}{w_{uv}}
        \geq (1 - \tau)(1 - \gamma) + g(\tau)\cdot H(\gamma) + g(c)\cdot H(\tau) + (G(\tau) - \tau\cdot g(\tau))\cdot H(c).
    \end{align*}

    Since $g(y) = g(1)$ and $h(y) = h(1)$ for all $y\geq c$, the lower bound is linear in $\gamma \in (c,1]$, and achieves its minimum either when $\gamma = c$ or $\gamma = 1$.
    The case when $\gamma = c$ has been considered in the previous analysis, where a lower bound of $0.63245$ is derived.
    It remains to consider the case when $\gamma = 1$.
    Let $\phi(\tau)$ be this lower bound:
    \begin{equation*}
        \phi(\tau) = g(\tau)\cdot H(1) + g(c)\cdot H(\tau) + (G(\tau) - \tau\cdot g(\tau))\cdot H(c).
    \end{equation*}

    We verify that $\phi'(\tau) > 0$ for all $\tau$:
    \begin{itemize}%
        \item When $\tau \ge c$, we have $\phi'(\tau) = g(c)\cdot h(\tau) > 0$.
        \item When $\tau \le c$, we have
        \begin{align*}
            \phi'(\tau) & = g'(\tau)\cdot (H(1) - \tau\cdot H(c)) + g(c)\cdot h(\tau) \\
            & = h(\tau)\cdot \left(g(c) + \frac{g'(\tau)}{h(\tau)}\cdot ( H(1) - \tau\cdot H(c) ) \right) \\
            & \ge h(\tau)\cdot \left(g(c) + \frac{g'(c)}{h(c)}\cdot ( H(1) - c \cdot H(c) ) \right) \approx 0.1711\cdot h(\tau) > 0,
        \end{align*}
        where the first inequality holds since $\frac{g'(\tau)}{h(\tau)}\cdot (H(1) - \tau\cdot H(c))$ is non-increasing in $\tau$, which can be verified taking a derivative over $\tau$ (similar to the proof of Lemma~\ref{lemma:g'/h-decreasing}).
    \end{itemize}

    Therefore we have $\phi(\tau) \ge \phi(0) = g(0)\cdot H(1) > 0.63245$.
\end{proof}

\section{Redundancy Reduction}\label{app:redundancy-reduction}

In this section we show that when solving the QCQP (for a fixed $n$), we do not need to enumerate all possible step functions $\theta, \beta\in \mathcal S_n$, by identifying and removing some redundant constraints.

\begin{lemma}\label{app:lemma:redundancy-reduction}
    In the implementation of computer-aided approaches, it is w.l.o.g. to only consider $\theta, \beta\in \mathcal S_n$ such that $\sum_{i = 1}^n \Theta_i \ge \sum_{i = 1}^n B_i$ and $\forall i\in[n] : \Theta_i \ge {B}^{-1}_i$.
\end{lemma}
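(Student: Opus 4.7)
The plan is to show that any pair $(\theta, \beta) \in \mathcal{S}_n \times \mathcal{S}_n$ can be replaced by another pair $(\tilde\theta, \tilde\beta) \in \mathcal{S}_n \times \mathcal{S}_n$ satisfying both listed conditions such that the value of Expression~\eqref{eqn:step-function-bound} at the new pair is no larger. Since the QCQP only needs the constraint $\eqref{eqn:step-function-bound} \ge F$ to bind at pairs attaining the minimum of the left-hand side, such a reduction lets us drop the non-qualifying pairs without loss. I would carry this out in two stages: a symmetry step followed by a pointwise-max step.

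For the symmetry step, I would first verify that Expression~\eqref{eqn:step-function-bound} is invariant under the swap $(\theta, \beta) \leftrightarrow (\beta, \theta)$. The two non-shaded sums (involving $H_i \cdot G_{n\Theta_i+1}$ and $H_i \cdot G_{nB_i+1}$ respectively) clearly exchange, so the only nontrivial content is the discrete identity
\[
    \frac{1}{n} \sum_{i=1}^n \big(\Theta_i - B_i^{-1}\big)^+ ~=~ \frac{1}{n} \sum_{i=1}^n \big(B_i - \Theta_i^{-1}\big)^+,
\]
which I would prove by observing that both sides compute the area of the same shaded region $\{(y_v, y_u) \in [0,1)^2 : \beta^{-1}(y_v) \le y_u < \theta(y_v)\}$, the left-hand side by slicing horizontally and the right-hand side by slicing vertically. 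The column-slicing direction uses the equivalence $y_u > \beta^{-1}(y_v) \Leftrightarrow y_v < \beta(y_u)$ supplied by \Cref{app-lemma:non-decreasing-right-continuous}. After swapping if necessary, I may assume $\sum_i \Theta_i \ge \sum_i B_i$.

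For the pointwise-max step, I would define $\tilde\theta \defeq \theta \vee \beta^{-1}$ (pointwise maximum) and leave $\tilde\beta \defeq \beta$. Since $\mathcal{S}_n$ is closed under pointwise max, $\tilde\theta \in \mathcal{S}_n$. By construction, $\tilde\Theta_i \ge B_i^{-1}$ for every $i$, and $\sum_i \tilde\Theta_i \ge \sum_i \Theta_i \ge \sum_i B_i$, so both conditions hold for $(\tilde\theta, \tilde\beta)$. It remains to check term by term that Expression~\eqref{eqn:step-function-bound} weakly decreases. Using the elementary identity $(\max(\Theta_i, B_i^{-1}) - B_i^{-1})^+ = (\Theta_i - B_i^{-1})^+$, the shaded sum is unchanged; the $\alpha_v$-type non-shaded sum keeps the same coefficient but replaces $G_{n\Theta_i + 1}$ with $G_{n\tilde\Theta_i + 1} \le G_{n\Theta_i + 1}$ (as $G$ is non-increasing in the index); and the $\alpha_u$-type non-shaded sum weakens because $\tilde\theta \ge \theta$ forces $\tilde\theta^{-1} \le \theta^{-1}$ pointwise, which pushes $(B_i - \tilde\Theta_i^{-1})^+$ above $(B_i - \Theta_i^{-1})^+$ and thus shrinks the $(1 - (\cdot)^+)$ coefficient, while $H_i \cdot G_{nB_i+1}$ is untouched.

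The main obstacle I anticipate is the discrete area identity in the symmetry step. The two sums are indexed by left endpoints of segments under the right-continuous definition of $f^{-1}$ in \Cref{definition:inverse-function}, so translating the geometric ``rows versus columns'' picture into a strict combinatorial equality requires a careful case analysis of strict versus non-strict inequalities at segment boundaries. Once this identity is established, both stages of the reduction are direct, as the monotonicities needed in the pointwise-max step follow immediately from $g$ being non-increasing and from the fact that inverses of non-decreasing step functions reverse pointwise order.
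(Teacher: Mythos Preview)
Your proposal is correct and follows essentially the same two-stage argument as the paper: a symmetry reduction followed by replacing $\theta$ by $\theta \vee \beta^{-1}$, with the three terms of \eqref{eqn:step-function-bound} handled exactly as you describe.

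One small remark: the identity you flag as the main obstacle is dispatched in the paper without any boundary case analysis, by rewriting
\[
    \sum_{i=1}^n (\Theta_i - B_i^{-1})^+ \;=\; \sum_{i=1}^n\sum_{j=1}^n \mathbf{1}\!\left(\tfrac{j}{n}\le \Theta_i,\ \tfrac{i}{n}\le B_j\right),
\]
which is manifestly symmetric in $(\theta,\beta)$. This is the discrete version of your rows-versus-columns area argument and cleanly absorbs the strict/non-strict distinctions you were worried about.
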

\begin{proof}
    For $n$-segment step functions $g$ and $h$, we define 
    \begin{align*}
        F(\theta, \beta) = \sum_{i = 1}^n\big(\Theta_i - B_i^{-1}\big)^+ & + \sum_{i = 1}^{n} \Big( \big(1 - (\Theta_i - B_i^{-1})^+\big) \cdot H_i \cdot G_{n\Theta_i + 1} \Big) \\
        & + \sum_{i = 1}^n \Big( \big(1 - (B_i - \Theta_i^{-1})^+ \big) \cdot H_i \cdot G_{nB_i + 1} \Big).
    \end{align*}

    Observe that (see \Cref{fig:symmetry-theta-beta} for an example)
    \begin{equation*}
        \sum_{i = 1}^n\big(\Theta_i - B_i^{-1}\big)^+ 
        = \sum_{i=1}^n \sum_{j=1}^n \left( \mathbf{1}\left( \frac{j}{n}\leq \Theta_i, \frac{i}{n}\leq B_j \right) \right).
    \end{equation*}

    Therefore $F(\theta,\beta)$ is symmetric on $\theta$ and $\beta$, i.e., $F(\theta,\beta) = F(\beta,\theta)$, and the first statement follows.
    Now we proof the second statement by showing that for all $\theta,\beta\in \mathcal S_n$, we have
    \begin{equation*}
        F(\hat{\theta}, \beta) \leq F(\theta, \beta), \quad \text{ where }
        \hat{\theta} = \max\{ \theta, \beta^{-1} \}.
    \end{equation*}

    Note that $\hat{\theta}$ is also a non-decreasing function in $\mathcal{S}_n$. Moreover, by our definition of inverse functions, we have $\hat{\Theta}^{-1}_i \leq {\Theta}^{-1}_i$ for all $i$.
    Clearly, replacing $\theta$ with $\hat{\theta}$ does not change the value of the first term of $F(\theta,\beta)$, because
    \begin{equation*}
        (\hat{\Theta}_i - B^{-1}_i)^+ = (\max\{{\Theta}_i, B^{-1}_i\} - B^{-1}_i)^+ = (\Theta_i - B^{-1}_i)^+.
    \end{equation*}
    
    The replacement does not increase the second term because the coefficient of each summand is not changed and $g$ is non-increasing.
    For the third term, notice that for each $i$ we have
    \begin{equation*}
        \big(1 - (B_i - \hat{\Theta}_i^{-1})^+ \big) \cdot H_i \cdot G_{nB_i + 1} 
        \leq  \big(1 - (B_i - \Theta_i^{-1})^+ \big) \cdot H_i \cdot G_{nB_i + 1},
    \end{equation*}
    since $\hat{\Theta}^{-1}_i \leq {\Theta}^{-1}_i$.
    Therefore, the replacement does not increase the third term either, which implies $F(\hat{\theta}, \beta) \leq F(\theta, \beta)$ and concludes the proof.
\end{proof}

\begin{figure}[ht]
    \centering
    \begin{tikzpicture}
        \draw[gray!20, thin] (0,0) grid (5,5);
        \draw[thick, -latex] (0,0) -- (5.5,0) node[below=0.15] {$y_v$};
        \draw[thick, -latex] (0,0) -- (0,5.5) node[left] {$y_u$};
        \node[black] at (-0.25, -0.25) {$O$};
    
        \foreach \x in {1,...,5}
            \draw (\x,0.1) -- (\x,-0.1) node[below] {$\pgfmathparse{\x/5}\pgfmathprintnumber[fixed, precision=1]{\pgfmathresult}$};
        \foreach \y in {1,...,5}
                \draw (0.1,\y) -- (-0.1,\y) node[left] {$\pgfmathparse{\y/5}\pgfmathprintnumber[fixed, precision=1]{\pgfmathresult}$};
    
        \draw[draw=none,fill=gray!20] (1,1) rectangle (2,2);
        \draw[draw=none,fill=gray!20] (3,3) rectangle (5,4);
        \draw[draw=none,fill=gray!20] (4,4) rectangle (5,5);
                
        \draw[myRed, very thick] (0,0) -- (1,0);
        \draw[myRed, very thick, dashed] (1,0) -- (1,2); 
        \draw[myRed, very thick] (1,2) -- (3,2); 
        \draw[myRed, very thick, dashed] (3,2) -- (3,4); 
        \draw[myRed, very thick] (3,4)-- (4,4); 
        \draw[myRed, very thick, dashed] (4,4) -- (4,5);
        \draw[myRed, very thick] (4,5)-- (5,5); 

        \node[draw=myRed, fill=myRed, circle, minimum size=3pt, inner sep=0pt] at (0, 0) {};
        \node[draw=myRed, fill=myRed, circle, minimum size=3pt, inner sep=0pt] at (1, 2) {}; 
        \node[draw=myRed, fill=myRed, circle, minimum size=3pt, inner sep=0pt] at (3, 4) {};
        \node[draw=myRed, fill=myRed, circle, minimum size=3pt, inner sep=0pt] at (4, 5) {};
        \node[draw=myRed, fill=white, circle, minimum size=3pt, inner sep=0pt] at (1, 0) {};
        \node[draw=myRed, fill=white, circle, minimum size=3pt, inner sep=0pt] at (3, 2) {};
        \node[draw=myRed, fill=white, circle, minimum size=3pt, inner sep=0pt] at (4, 4) {};
        \node at (4.5,5.5) {\textcolor{myRed}{$\theta(y_v)$}};
        \draw[myBlue, very thick] (0, 0) -- (0,1); 
        \draw[myBlue, very thick, dashed] (0,1) -- (2,1);
        \draw[myBlue, very thick] (2,1) -- (2,3); 
        \draw[myBlue, very thick, dashed] (2,3) -- (5,3); 
        \draw[myBlue, very thick] (5,3) -- (5,5);
        \node[draw=myBlue, fill=white, circle, minimum size=3pt, inner sep=0pt] at (0,1) {};
        \node[draw=myBlue, fill=white, circle, minimum size=3pt, inner sep=0pt] at (2,3) {};
        \node[draw=black, fill=black, circle, minimum size=3pt, inner sep=0pt] at (5,5) {};
        \node[draw=black, fill=black, circle, minimum size=3pt, inner sep=0pt] at (0,0) {};
        \node[draw=myBlue, fill=myBlue, circle, minimum size=3pt, inner sep=0pt] at (2,1) {};
        \node[draw=myBlue, fill=myBlue, circle, minimum size=3pt, inner sep=0pt] at (5,3) {};
        \node at (5.8,4) {\textcolor{myBlue}{$\beta(y_u)$}};
    \end{tikzpicture}
    \caption{Demonstration for $\sum_{i = 1}^n\big(B_i - \Theta_i^{-1}\big)^+ = \sum_{i = 1}^n\big(\Theta_i - B_i^{-1}\big)^+$.}
    \label{fig:symmetry-theta-beta}
\end{figure}
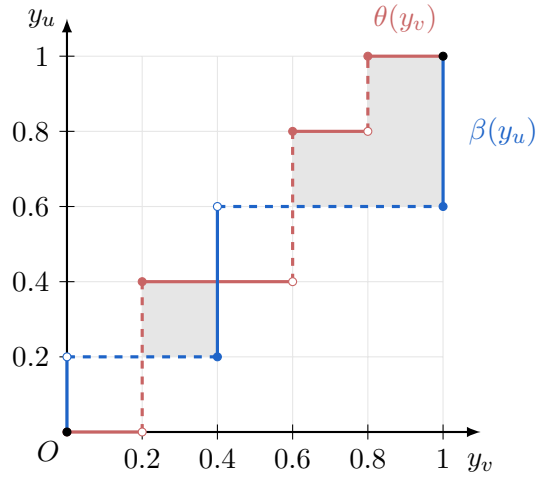

\end{document}